\algnewcommand{\LineComment}[1]{\State \(\triangleright\) #1}
\definecolor{darkblue}{rgb}{0,0.08,0.45}
\newcommand{\cOtilde}{\tilde{O}}
\newcommand{\ignore}[1]{}
\newcommand{\fp}{\ensuremath\phi}
\theoremstyle{plain}
\newtheorem{theorem}{Theorem}
\newtheorem{lemma}[theorem]{Lemma} 
\newtheorem{corollary}[theorem]{Corollary}  
\newtheorem{proposition}[theorem]{Proposition}
\newtheorem*{conjecture*}{Conjecture}
\author[1]{Shay Mozes}
\author[2]{Nathan Wallheimer}
\author[2]{Oren Weimann}
\affil[1]{
The Interdisciplinary Center Herzliya, Israel\\
\href{mailto:smozes@idc.ac.il}{smozes@idc.ac.il}}
\affil[2]{
University of Haifa, Israel\\
\href{mailto:nathanwallh@gmail.com}{nathanwallh@gmail.com}, \href{mailto:oren@cs.haifa.ac.il}{oren@cs.haifa.ac.il}
}
\date{}
\title{Improved Compression of the Okamura-Seymour Metric}
\begin{document}

\maketitle

\begin{abstract}
\noindent Let $G=(V,E)$ be an undirected unweighted planar graph.
Consider a vector storing the distances from an arbitrary vertex $v$ to all vertices $S = \{ s_1 , s_2 , \ldots , s_k \}$ of a single face in their cyclic order. The {\em pattern} of $v$ is obtained by taking the difference between every pair of consecutive values of this vector. In STOC'19, Li and Parter used a VC-dimension argument to show that in planar graphs, the number of {\em distinct} patterns, denoted $x$, is only $O(k^3)$. 
This resulted in a simple compression scheme requiring $\tilde O(\min \{ k^4+|T|, k\cdot |T|\})$ space to encode the distances between $S$ and a subset of terminal vertices $T \subseteq V$. This is known as the Okamura-Seymour metric compression problem.

We give an alternative proof of the $x=O(k^3)$ bound that exploits planarity beyond the VC-dimension argument. Namely, our proof relies on cut-cycle duality, as well as on the fact that distances among vertices of $S$ are bounded by $k$. 
Our method implies the following:\\\ (1) An $\tilde{O}(x+k+|T|)$ space compression of the Okamura-Seymour metric, thus improving the compression of Li and Parter to $\tilde O(\min \{k^3+|T|,k \cdot |T| \})$. \\
 (2) An optimal $\tilde{O}(k+|T|)$ space compression of the Okamura-Seymour metric, in the case where the vertices of $T$ induce a connected component in $G$. 
\\
 (3) A tight bound of $x = \Theta(k^2)$ for the family of Halin graphs, whereas the VC-dimension argument is limited to showing $x=O(k^3)$.

\end{abstract}

\section{Introduction}

{\bf Planar metric compression.} The shortest path metric of planar graphs is one of the most popular and well-studied metrics in computer science. The planar graph {\em metric compression} problem is to compactly encode the distances between a subset of $k$ terminal vertices so that we can retrieve the distance between any pair of terminals from the encoding. 
On an $n$-vertex planar graph $G=(V,E)$, a na\"ive encoding uses $\tilde O(\min\{k^2,n\})$ bits (by either storing the $k \times k$ distance matrix or alternatively by storing the entire graph\footnote{Na\"ively, this takes $O(n \log n)$ bits, but can be done with $O(n)$ bits \cite{Turan84,MunroR97,ChiangLL01,BF10}.}). 
It turns out that this na\"ive bound is actually optimal (up to logarithmic factors) for {\em weighted} planar graphs, as shown by Gavoille et al.~\cite{GPPR04}. It is important to note that their lower bound applies even when all terminals lie on a single face. 
The complexity of {\em unweighted undirected} planar graphs is also well-understood. Gavoille et al.~\cite{GPPR04} (see also~\cite{agmw-nocpg-17}) gave a lower bound of $\Omega(\min\{k^2,\sqrt{k\cdot n}\})$, and Abboud et~al.~\cite{agmw-nocpg-17} gave a matching upper bound.

If we are willing to settle for {\em approximate} distances, then there are ingenious compressions requiring only $\tilde{O}(k)$ bits \cite{Thorup04,Klein05,KKS11}. The problem has also been extensively studied (both in the exact~\cite{KZ12,KNZ14,CGH16,ChangGMW18,ChangO20,GHP17} and the approximate~\cite{g-sptmh-01,BG08,CXKR06,KKN15,Che18,Filtser18,FiltserKT19} settings) for the case where we require that the compression is itself a graph (that contains the terminals and preserves their distances). 

\paragraph{The Okamura-Seymour metric compression.}
An important special case for which tight bounds are not yet known, is when the planar graph is unweighted and undirected and we want to encode the $S \times T$ distances between a set of $k$ source terminals $S = \{s_1 , s_2 , \ldots , s_k\}$ lying consecutively on a single face and a subset of target terminal vertices $T \subseteq V$. A query $(v,s_i)$ to the encoding (with $v\in T$ and $s_i \in S$) returns the $v$-to-$s_i$ distance.
\begin{itemize}

\item
When $T = S$, it is possible to exploit the \emph{Unit-Monge} property to obtain an $O(k \log k)$ space encoding with $O(\log k)$ query time~\cite{agmw-nocpg-17}. In fact, even if $T\neq S$ the Unit-Monge property implies an (optimal) $\tilde{O}(|T|+k)$ space encoding, as long as the vertices of $T$ lie (not necessarily consecutively) on single face.

\item
When $T = V$, the MSSP data structure of Eisenstat and Klein~\cite{ek-ltamf-13} gives an $O(n)$ space encoding with $O( \log n )$ query time. 

\item
For arbitrary $S,T$, Li and Parter~\cite{LiParter} recently presented a compression of size $\tilde O(\min \{k^4+|T|, k\cdot |T|\})$ and query time $O(1)$.\footnote{The actual bound stated in~\cite{LiParter} is $\tilde O(k^3 \cdot D+|T|)$ where $D$ is the diameter of the graph. The reason for the additional $D$ factor is that they store all possible distance tuples $d_v = \{d(v,s_i)\}_{i=1}^k$ instead of all possible patterns $p_v$. The reason for the missing $k$ factor is simply a mistake in their paper.}
 This compression is useful algorithmically. In the distributed setting, Li and Parter used it to compute the diameter of a planar graph in $\cOtilde(poly(D))$ rounds where $D$ is the graph's diameter. It was also used to develop an exact distance oracle with subquadratic space and constant query time~\cite{DBLP:journals/corr/abs-2009-14716}.
\end{itemize}

\paragraph{The Li-Parter compression.} 
At the heart of the Li-Parter compression~\cite{LiParter}, is the notion of a \emph{pattern}. Let $d(\cdot,\cdot)$ denote the shortest path metric of $G$. The pattern of a vertex $v \in V$ is the vector 
 $$p_v = \left \langle d(v,s_2) - d(v,s_1) , d(v,s_3) - d(v,s_2) , \ldots , d(v,s_k) - d(v,s_{k-1}) \right \rangle.$$
Since the graph is unweighted, every entry of $p_v$ is in $\{-1,0,1\}$ by the triangle inequality. 
This already gives an efficient way to encode $v$'s distances to $S$: Instead of explicitly storing these distances (using $O(k \log n)$ bits), store $p_v$ and $d(v,s_1)$ (using $O(k + \log n)$ bits). This way, any distance $d(v,s_i)$ can be retrieved by 
$$
d(v,s_i) = d(v,s_1) + \sum_{j=1}^{i-1} p_v[j].
$$

The main contribution of Li and Parter in this context is in showing that, while there are overall $n$ patterns in the graph, there are only $O(k^3)$ {\em distinct} patterns:

\begin{theorem}[\cite{LiParter}]
\label{thm:main-theorem}
The number of distinct patterns over all vertices of the graph is $O(k^3)$.
\end{theorem}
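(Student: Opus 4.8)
\medskip
\noindent\emph{Proof idea.}
Since every entry of a pattern lies in $\{-1,0,1\}$, the pattern $p_v$ carries exactly the same information as the sequence $\langle d(v,s_1),\dots,d(v,s_k)\rangle$ considered up to an additive shift; so the task is to bound, over all $v\in V$, the number of distinct such ``distance profiles''. The plan is to fix a planar embedding of $G$ in which the face containing $S$ is the outer face, let $C$ be its boundary walk, and first normalize the instance: assuming $G$ is $2$-connected (bridges and cut vertices on $C$ being handled separately), $C$ is a simple cycle, and --- crucially using that all of $S$ lies on $C$ --- I would reduce to the case where $d_G(s_i,s_j)=O(k)$ for all $i,j$, by rerouting/contracting the boundary arcs of $C$ between consecutive terminals in a way that does not decrease the number of distinct patterns. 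This is where the ``distances among $S$ are bounded by $k$'' hypothesis gets installed, and arguing that patterns may only grow, never shrink, under the transformation is the first thing to get right.

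The heart of the argument is a structural claim about a single profile, proved via cut-cycle duality. Fix $v$, and consider the BFS balls $B_r=\{u:d(v,u)\le r\}$. Each edge-boundary $\partial B_r$ is an edge cut of $G$ and hence, by cut-cycle duality, an edge-disjoint union of simple cycles in the dual $G^*$; as $r$ increases these dual cycles sweep outward monotonically, and terminal $s_i$ is first enclosed by $B_r$ exactly at $r=d(v,s_i)$. Combining this sweep with the standard fact that canonical (pairwise non-crossing) shortest paths from $v$ to $s_1,\dots,s_k$ leave $v$ and meet $C$ in the cyclic order $s_1,\dots,s_k$, I expect to show that the profile $\langle d(v,s_i)\rangle_i$ has a very restricted shape and, more importantly, is pinned down by only a constant number of integer parameters in $\{1,\dots,k\}$ --- morally the index of the terminal nearest to $v$ together with a bounded number of ``inflection'' indices describing how $d(v,\cdot)$ changes along $C$ on either side of that nearest terminal. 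Here the bounded-diameter normalization is used essentially: since $|d(v,s_i)-d(v,s_j)|\le d_G(s_i,s_j)=O(k)$, the whole profile lives inside a window of width $O(k)$, which is what prevents the dual sweep from producing more than $O(k)$ distinct ``transition events'' per side.

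Given such a claim, the pattern of $v$ is recovered from an $O(1)$-tuple of indices in $\{1,\dots,k\}$ (together with $O(1)$ extra bits), so the number of distinct patterns is $O(k^3)$; and the same style of argument, with the extra rigidity of Halin graphs collapsing one of these parameters, should give the $\Theta(k^2)$ bound advertised for that family. This route genuinely exploits planarity beyond the Sauer--Shelah/VC step: unimodality of $d(v,\cdot)$ restricted to $C$ and the laminar structure of the non-crossing $v$-to-$S$ shortest paths are consequences of the planar embedding and of cut-cycle duality for the balls $B_r$, not of a bounded-VC set system.

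The step I expect to be the genuine obstacle is precisely this core structural claim: unimodality alone is far too weak, since even restricting to unimodal $\{-1,0,1\}$-sequences of length $k-1$ leaves exponentially many possibilities, so the proof must extract from the interaction of the dual boundary cycles $\partial B_r$ with the outer face --- together with the $O(k)$ bound on pairwise distances within $S$ --- the much stronger statement that the profile is an essentially ``piecewise-linear'' object with $O(1)$ pieces. A secondary difficulty is the normalization step itself and the various degeneracies ($G$ not $2$-connected, bridges on $C$, several terminals coinciding on $C$, non-unique shortest paths), which should admit a careful but routine treatment.
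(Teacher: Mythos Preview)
Your proposal has a genuine gap at exactly the place you flag, and it is fatal rather than technical: the claim that each individual pattern is pinned down by $O(1)$ indices in $\{1,\dots,k\}$ is false. Take the outer cycle $s_1 s_2 \cdots s_k s_1$ with $k$ even and a single interior vertex $v$ adjacent to $s_1,s_3,\dots,s_{k-1}$; then $d(v,s_i)$ equals $1$ for odd $i$ and $2$ for even $i$, so $p_v=(1,-1,1,-1,\dots,1)$ has $k-2$ inflections. In particular $d(v,\cdot)$ restricted to $C$ is not unimodal, and the ``piecewise-linear with $O(1)$ pieces'' picture is simply wrong for individual profiles. The $O(k^3)$ bound is a \emph{global} phenomenon --- only $O(k^3)$ patterns occur in any fixed graph --- not a consequence of each pattern being individually simple; indeed, the set of patterns realizable over \emph{all} planar instances is exponential in $k$, so no per-vertex parameterization of the kind you sketch can exist.

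The paper's argument is accordingly index-centric rather than vertex-centric. For each position $i$ it defines the \emph{bisector} $\beta_i$, the dual of the cut between $\{v:p_v[i]=-1\}$ and its complement, and shows that $\beta_i$ is a single simple dual cycle (this is where cut--cycle duality enters). The $k-1$ bisectors form an arrangement whose faces refine the partition into patterns, and the task becomes bounding the number of faces. The key lemma --- and the only place the $O(k)$ bound on pairwise distances within $S$ is used --- is that any two bisectors cross at most $O(k)$ times: one chains triangle inequalities across successive crossings, gaining a constant per crossing, and caps the accumulated gain by $d(s_{i+1},s_j)\le k/2$. With $O(k^2)$ pairs and $O(k)$ crossings each, the arrangement has $O(k^3)$ faces. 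Your BFS-ball sweep slices by distance level from a single $v$; the paper slices by coordinate $i$ across all $v$ simultaneously, which is what yields a fixed family of $k-1$ curves whose arrangement can be counted.
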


The compression follows easily from the above theorem: Store one table that contains all the distinct patterns of vertices in $T$, and another table that contains for every $v\in T$ the value $d(v,s_1)$ and a pointer to $p_v$ in the first table.
Since there cannot be more than $|T|$ distinct patterns, the size of the first table is $O( \min\{k^4,k\cdot|T|\})$. The size of the second table is $\tilde O(|T|)$. The query time is $O(k)$ but can be improved to $O(1)$ by storing precomputed prefix-sums of every pattern (increasing the size of the first table by a logarithmic factor). 

\paragraph{The original proof of Theorem~\ref{thm:main-theorem}~\cite{LiParter}.} 
Let us assume that the distinguished face is the infinite face. For convenience, we transform\footnote{This transformation was suggested by Li and Parter in their STOC'19 talk.} the problem so that patterns are binary rather than ternary (i.e. over $\{-1,1\}$ instead of $\{-1,0,1\}$). To this end, we subdivide every edge of the graph to get a new (unweighted) graph $G'$. In particular, we replace each edge $\{s_i ,s_{i+1}\}$ of the infinite face with a dummy vertex $w_i$ and edges $\{s_i,w_i\}, \{w_i,s_{i+1}\}$. For every vertex $u$ of $G'$, let $\hat{p}_u$ be the pattern of $u$ w.r.t. the set of vertices $S' = \{s_1,w_1,s_2,w_2, \ldots, s_k,w_k \}$. Observe that the parity of $u$-to-$s_i$ distances is different from the parity of $u$-to-$w_j$ distances, for all $i,j$'s. Hence, $\hat{p}_u$ is a binary vector (i.e. over $\{-1,1\}$). 
Additionally, for every vertex $v$ of $G$ we can retrieve its pattern $p_v$ from $\hat{p}_v$ since $p_v[i] = (\hat{p}_v[2i-1]+\hat{p}_v[2i])/2$. See Figure \ref{binary-transform}. Hence, we henceforth assume that patterns $p_v$ are over $\{-1,1\}$ (i.e. we replace $p_v$ with $\hat{p}_v$). For brevity, we also assume that patterns are of length $k-1$ (rather than $2k-1$).

\begin{figure}[htb]
\centering
\subfloat{\includegraphics[scale=1]{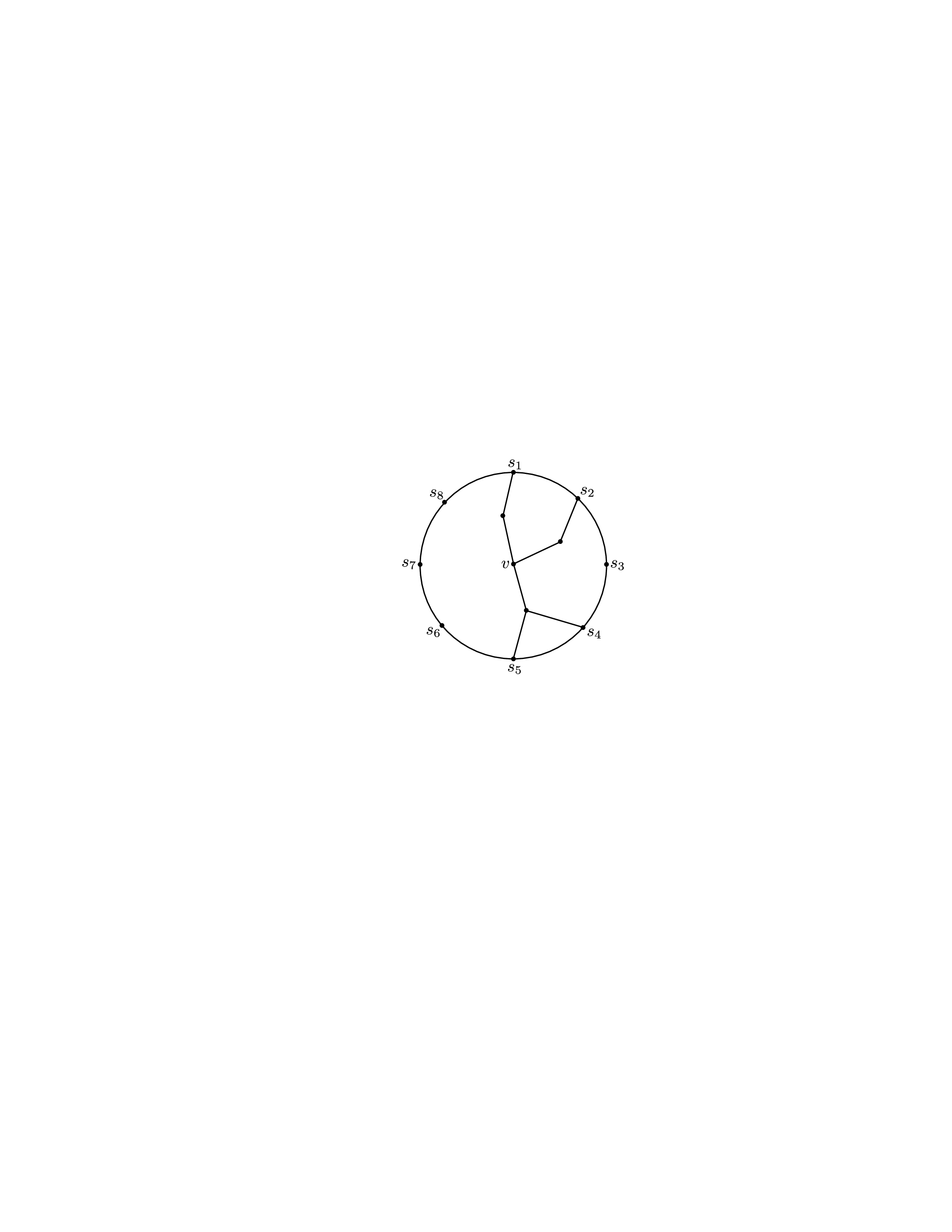}}
\hspace{2cm}
\subfloat{\includegraphics[scale=1]{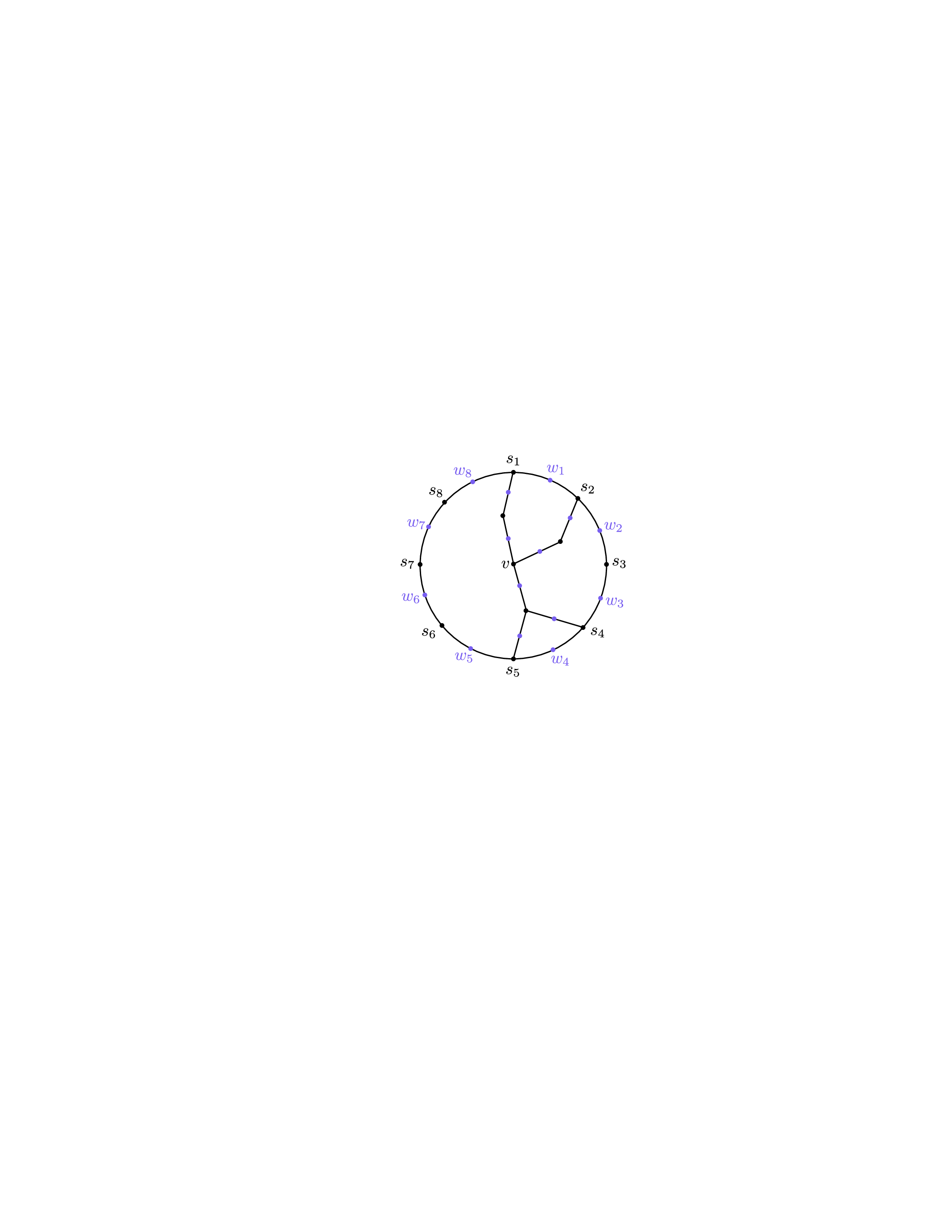}}
\caption{Before (left) and after (right) the transformation that makes all patterns binary. Every edge is subdivided and a new vertex (in color) is put in the middle. In this example, $p_v = \langle 0,1,-1,0,1,1,-1\rangle$ and $\hat{p}_v = \langle 1,-1,1,1,-1,-1,1,-1,1,1,1,1,-1,-1,-1 \rangle$. \label{binary-transform}}
\end{figure}

Li and Parter's VC-dimension argument is based on the simple observation that, by planarity, there cannot be two vertices $v$ and $u$ and 4 indices $a<b<c<d$ such that ${p}_u[a] = -1, {p}_u[b] = 1, {p}_u[c] = -1, {p}_u[d]=1$ but ${p}_v[a] = 1, {p}_v[b] = -1, {p}_v[c] = 1, {p}_v[d]=-1$. The reason is that such $(-1,1,-1,1),(1,-1,1,-1)$ patterns correspond to an illegal configuration of shortest paths in planar graphs.

Consider arranging all the patterns as the rows of a binary matrix $P$. The {\em VC-dimension} $d$ of $P$, is the maximum number of columns in a submatrix of $P$ that contains all possible $2^d$ rows. The above forbidden configuration implies that there is no submatrix with $4$ columns or more that contains all possible rows, hence the VC-dimension of $P$ is at most $3$. By the well known Sauer's Lemma~\cite{Sauer}, this means that there are $O((k-1)^3) = O(k^3)$ distinct rows. This is the entire proof.

\paragraph{Limitations of the original proof.}
It remains an open problem whether the number of distinct patterns in planar graphs is $\Theta(k^3)$ or less (there is a simple $\Omega(k^2)$ lower bound). We do know however that there is no hope of improving $\Theta(k^3)$ using the VC-dimension argument: 
Consider the following set of sequences over $\{-1,1\}^{k-1}$:
$$
\left \{ (-1)^{x_1} \circ 1^{x_2} \circ (-1)^{x_3} \circ 1^{k-1-x_1-x_2-x_3} \; | \; x_1 + x_2 + x_3 < k\right \}
$$
There is no pair of sequences in this set that contains the forbidden $(1,-1,1,-1),(-1,1,-1,1)$ configuration, and yet its cardinality is $\Theta(k^3)$. This means that any improvement to the $O(k^3)$ bound on the number of distinct patterns in planar graphs would have to further exploit structural properties of planar graphs. 
In fact, even in the restricted family of Halin graphs, where we know that there are only $\Theta(k^2)$ distinct patterns (see Section \ref{section:halin}), the VC-dimension argument is limited to proving $O(k^3)$.

\paragraph{Our results and technique.}
We develop a new technique for analyzing and encoding the structure of patterns in a planar graph using {\em bisectors}. The bisector $\beta_i$ associated with vertex $s_i$ is a simple cycle in the dual graph such that all (primal) vertices on the same side of $\beta_i$ have the same $i$'th bit in their patterns. 
We show that any two bisectors are arc-disjoint. This implies the following lemma:

\begin{lemma}\label{lem:adjacentpatterns}
The patterns of every two adjacent vertices in $G$ differ by at most two bits.
\end{lemma}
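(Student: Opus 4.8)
The plan is to convert the statement ``$p_u$ and $p_v$ disagree in coordinate $i$'' into the statement ``the dual edge of $\{u,v\}$ lies on the cycle $\beta_i$'', and then to use arc-disjointness of the bisectors to bound by two the number of bisectors that can pass through a single dual edge. Throughout I work in the transformed graph, so every pattern is a vector in $\{-1,1\}^{k-1}$, and, by the definition of $\beta_i$ together with the parity argument that rules out ties, $p_v[i]=+1$ precisely when $v$ lies on the ``$s_i$-side'' of $\beta_i$ (equivalently, when $v$ is strictly closer to $s_i$ than to $s_{i+1}$) and $p_v[i]=-1$ precisely when $v$ lies on the other side.

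First I would establish the conversion. Let $u$ and $v$ be adjacent, let $e=\{u,v\}$ and let $e^*$ denote its dual edge, and fix an index $i$ with $p_u[i]\neq p_v[i]$. Since all primal vertices on one side of the simple dual cycle $\beta_i$ share the same $i$-th coordinate, $u$ and $v$ must lie on opposite sides of the simple closed curve $\beta_i$. (That $\beta_i$ genuinely separates the two coordinate classes, rather than leaving both on one side, is immediate: $p_{s_i}[i]=1$ and $p_{s_{i+1}}[i]=-1$, so $s_i$ and $s_{i+1}$ lie on opposite sides, and the coordinate is constant on each side.) By the Jordan curve theorem, the arc $e$ drawn from $u$ to $v$ crosses $\beta_i$ an odd number of times; as $\beta_i$ is a simple cycle, $e$ can cross it at most once, so $e$ crosses $\beta_i$ exactly once, and that single crossing is precisely the appearance of $e^*$ on $\beta_i$. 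Hence the number of coordinates on which $p_u$ and $p_v$ differ is at most the number of indices $i$ for which $e^*$ is an edge of $\beta_i$.

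Second I would bound this last quantity by two. The edge $e^*$ gives rise to two arcs (its two orientations), and each bisector $\beta_i$, being an oriented simple cycle, uses at most one of these two arcs. Since the bisectors are pairwise arc-disjoint, distinct bisectors use distinct arcs; with only two arcs available, at most two bisectors contain $e^*$. Combined with the previous paragraph, $p_u$ and $p_v$ differ in at most two coordinates, which is the lemma.

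The real content is the arc-disjointness of the bisectors, which we take as given; the rest is bookkeeping. The step I expect to require the most care is keeping three notions mutually consistent --- the ``side'' of $\beta_i$ on which a vertex lies, the sign of $d(v,s_{i+1})-d(v,s_i)$, and the orientation in which $\beta_i$ crosses a given dual edge --- so that ``arc-disjoint'' is unambiguous (it is cleanest to orient each $\beta_i$ once and for all, say with its $s_i$-side to the left) and so that the final pigeonhole step, ``two arcs per dual edge, hence at most two bisectors through it'', is airtight.
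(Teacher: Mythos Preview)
Your proposal is correct and follows essentially the same route as the paper: the paper also derives Lemma~\ref{lem:adjacentpatterns} directly from the arc-disjointness of the bisectors (Lemma~\ref{lem:no-multiplicities}), observing that a dual edge has only two oriented arcs and hence can lie on at most two bisectors. The one cosmetic difference is that your first step invokes the Jordan curve theorem to argue that $e^*$ lies on $\beta_i$ whenever $p_u[i]\neq p_v[i]$, whereas in the paper this is immediate from the definition $\beta_i=\delta(A_i)^*$: if $p_u[i]=-1$ and $p_v[i]=1$ then $(uv)^*\in\beta_i$ by construction, so no topological argument is needed.
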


We then show how to use this property to obtain the following compression (recall that $x$ denotes the number of distinct patterns in $G$):
\begin{theorem}\label{thm:ourcompression}
	There is an $\tilde{O}(x+k+|T|)$ space compression of the Okamura-Seymour metric with $\tilde{O}(n)$ construction time and $\tilde{O}(1)$ query time.  Moreover, for the special case where the vertices of $T$ induce a connected component in $G$, the space is $\tilde{O}(k+|T|)$.  
\end{theorem}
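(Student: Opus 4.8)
The plan is to store the $x$ distinct patterns using only $\tilde O(x+k)$ bits instead of the naive $\tilde O(xk)$, by exploiting \Cref{lem:adjacentpatterns}, and to support $\tilde O(1)$ queries on this compressed representation by layering a persistent search tree on top of it. For every $v\in T$ we will additionally keep $d(v,s_1)$ and a pointer to the encoding of $p_v$, which costs $\tilde O(|T|)$; then a query $(v,s_i)$ is answered via $d(v,s_i)=d(v,s_1)+\sum_{j<i}p_v[j]$, so it suffices to compute prefix sums of patterns quickly.

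\emph{The general bound.} Define the \emph{pattern graph} $H$ whose vertices are the $x$ distinct patterns, with $p$ and $q$ adjacent whenever they are the patterns of two $G$-adjacent vertices; since $G$ is connected, so is $H$, and by \Cref{lem:adjacentpatterns} every edge of $H$ is a change of at most two coordinates. Fix a spanning tree $\mathcal{P}$ of $H$, rooted at $p_{s_1}$, and store $p_{s_1}$ explicitly together with, for every other node of $\mathcal{P}$, the (at most two) coordinates in which it differs from its parent and the tree structure of $\mathcal{P}$; this is $\tilde O(x+k)$. Writing each pattern $p$ as the set $N_p=\{j:p[j]=-1\}$, a parent-to-child step of $\mathcal{P}$ changes $N$ by $O(1)$ insertions/deletions, so we build a \emph{fully persistent} balanced search tree (augmented with subtree sizes) whose versions are the nodes of $\mathcal{P}$: the root version is $N_{p_{s_1}}$ and each child version is obtained by $O(1)$ updates, for $O(x+k)$ updates and hence $\tilde O(x+k)$ space in total. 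A query then reduces to a rank query $|N_{p_v}\cap[1,i-1]|$ on the version named by $v$'s pointer, answered in $O(\log k)$ time, since $\sum_{j<i}p_v[j]=(i-1)-2|N_{p_v}\cap[1,i-1]|$.

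\emph{The connected case.} When $G[T]$ is connected the term $x$ disappears: fix a spanning tree $\mathcal{T}$ of $G[T]$ rooted at an arbitrary $v_0\in T$; each of its edges is a $G$-edge, so along it the pattern changes by at most two bits. Store $p_{v_0}$ explicitly and, for each $v\in T$, the $\le 2$ bit-flips relative to its $\mathcal{T}$-parent and the value $d(v,s_1)$, and build a fully persistent search tree whose $|T|$ versions are the vertices of $T$ (each derived from its $\mathcal{T}$-parent's version). This is $\tilde O(k+|T|)$ space and $\tilde O(1)$ query time, and no pattern pointers are needed because the version of $v$ is indexed by $v$ itself.

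\emph{Construction and the main obstacle.} A single BFS from $s_1$ gives $d(v,s_1)$ for all $v$ and, restricted to $S$, the root pattern $p_{s_1}$. From the bisectors (arc-disjoint simple dual cycles) we read off, for every primal edge $e$, the at most two coordinates that flip when $e$ is crossed — namely those $i$ with $e^\ast\in\beta_i$ — which is $O(n)$ information obtained in $\tilde O(n)$ time; a DFS of $G$ (resp.\ of $G[T]$) then discovers, for each newly reached pattern (resp.\ each vertex of $T$), its $O(1)$ changed coordinates relative to its parent \emph{without} ever comparing two length-$k$ vectors, and feeds them to the persistent structure. The one step that needs care — and which I expect to be the main obstacle — is, during this DFS, recognizing in the general case whether the pattern at a newly visited vertex has already been seen, so that $\mathcal{P}$ gets only $x-1$ edges rather than $n-1$: this can be done by maintaining a Karp–Rabin fingerprint of the current pattern (updated in $O(1)$ per traversed edge, since only $O(1)$ bits flip) and hashing on it, after which a final garbage-collection pass leaves a structure of size $\tilde O(x+k)$; the whole construction runs in $\tilde O(n)$ time. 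The rest is standard persistent-data-structure bookkeeping, as the entire conceptual content sits in \Cref{lem:adjacentpatterns}.
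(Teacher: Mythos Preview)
Your proposal is correct and follows essentially the same approach as the paper: both exploit \Cref{lem:adjacentpatterns} to build a tree on the distinct patterns with $O(1)$ bit-flips per edge, maintain a persistent prefix-sum structure along it, and use Karp--Rabin fingerprints to detect repeated patterns during a traversal of $G$ in $\tilde O(n)$ time. The only substantive differences are cosmetic (you use a fully-persistent BST keyed directly by tree nodes, whereas the paper linearizes via an Euler tour and uses a persistent segment tree), and the one step you leave unexplained---reading off all bisector labels on primal edges in $\tilde O(n)$ rather than the naive $O(kn)$---is handled in the paper via Klein's MSSP data structure on the interdigitating trees.
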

  
By plugging $x=O(k^3)$ from Theorem \ref{thm:main-theorem} (and the trivial compression that stores all $T \times S$ distances) we get an $\tilde{O}(\min\{k^3+|T|,k \cdot |T| \})$ compression (i.e. a factor $k$ improvement over Li and Parter~\cite{LiParter}). 
Moreover, for the special case where the vertices of $T$ induce a connected component in $G$, we obtain an optimal $\tilde{O}(|T|+k)$ space encoding. Recall that, prior to our work, this bound was only known (using the Unit-Monge property) when the vertices of $T$ all lie (not necessarily consecutively) on a single face~\cite{agmw-nocpg-17}. In fact, even in such setting, our method gives $\tilde{O}(|T|+k)$. Thus, our method strictly dominates the one based on Unit-Monge.

An additional benefit of working with bisectors is that they can be used to bound the number $x$ of distinct patterns. 
We show that every two bisectors can cross only $O(k)$ times. Our proof relies not only on the planar structure, but also on the fact that the distance between any two vertices of $S$ is bounded by $k$ (this property is not used in the VC-dimension argument). The set of all bisectors partitions the plane into regions. All (primal) vertices in the same region have the same pattern because they all lie  on the same side of every bisector. Since there are $O(k^2)$ pairs of bisectors, and each pair crosses $O(k)$ times, there are only $O(k^3)$ regions (and hence only $O(k^3)$ distinct patterns). 
This provides an alternative  proof of Theorem \ref{thm:main-theorem}. We believe that our new technique may prove useful in settling the question of the number of distinct patterns in a planar graph. In particular, it may be that a similar argument that uses stronger structural properties will be able to show that the partition induces only $O(k^2)$ regions.
We demonstrate this potential of our technique in Section~\ref{section:halin}, where we show such a bound for a family of graphs that includes Halin graphs:

\begin{theorem}
\label{thm:Halin}
The number of distinct patterns over all vertices of a Halin graph is $O(k^2)$. This bound is tight.
\end{theorem}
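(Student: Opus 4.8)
The plan is to analyze the arrangement cut out by the bisectors $\beta_1,\dots,\beta_{k-1}$ in the plane. Two primal vertices lying in the same face of this arrangement lie on the same side of every $\beta_i$, hence have the same pattern, so $x$ is at most the number of faces of the arrangement. By Euler's formula, an arrangement of $\Theta(k)$ simple closed curves with a total of $X$ crossings has $O(X+k)$ faces. Thus everything reduces to showing that, for a Halin graph, the bisectors have only $O(k^2)$ crossings in total --- i.e., $O(1)$ per pair on average. (For a general planar graph one only knows each pair crosses $O(k)$ times, using planarity together with the fact that $S$ has diameter at most $k$; this is exactly the source of the $O(k^3)$ bound, and the point is that Halin graphs carry enough extra structure to do better.)

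Write $G=T\cup C$, where $C$ is the cycle bounding the face that carries $S$ and $T=G\setminus C$ is a tree whose leaves are exactly the vertices of $C$ (we work with the subdivided graph, which belongs to the same family, so that $d(v,s_i)\neq d(v,s_{i+1})$ for all $v,i$ and patterns are binary). The structural input I would aim for --- and the step I expect to be the crux --- is that the interaction between a bisector and the tree $T$ is tightly constrained. Each $\beta_i$ is a simple closed curve whose two sides are $R^+_i=\{v:d(v,s_i)<d(v,s_{i+1})\}$ and $R^-_i=\{v:d(v,s_i)>d(v,s_{i+1})\}$, and the part of $\beta_i$ lying inside $C$ (the region where $T$ lives) is the ``boundary in $T$'' of $R^+_i$. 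I would try to show that, because $T$ is a tree and because shortest paths in a Halin graph are simple (e.g.\ a shortest path from any vertex to a leaf may be taken to descend in $T$ to some leaf and then follow $C$), these boundaries-in-$T$ form a laminar-like family: any two of them cross a bounded number of times, even though a single $\beta_i$ may cross $\Omega(k)$ edges of $G$. Combined with a separate $O(1)$ bound on how often two bisectors cross inside the single outer face, this gives $O(1)$ crossings per pair. Controlling how a bisector can weave in and out of $T$ is the whole difficulty, and the absence of any cycles off of $C$ is precisely the feature of Halin graphs (absent from general planar graphs) that should make it work.

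Granting the crossing bound the proof closes at once: the bisectors then have $O(k^2)$ crossings in total, so their arrangement has $O(k^2)$ faces (after a routine perturbation into general position; note all bisectors pass through the outer face, which contributes at most $O(k^2)$ additional crossings there), whence $x=O(k^2)$. This strengthens Theorem~\ref{thm:main-theorem} in the Halin case, and the same argument applies verbatim to the slightly larger family of planar graphs obtained from a tree by adding one face (a cycle through some of its leaves) --- the ``family that includes Halin graphs'' alluded to above.

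For the matching lower bound I would construct an explicit Halin graph realizing $\Omega(k^2)$ distinct patterns. It suffices to realize one transparent $\Theta(k^2)$-size family of patterns, e.g.\ (before the binary transform) the vectors $1^{a}\,(-1)^{b}\,1^{\,k-1-a-b}$; such a vector is the pattern of a vertex whose distances to $s_1,\dots,s_k$ rise to a local maximum at $s_{a+1}$, fall to a local minimum at $s_{a+b+1}$, and rise again. The construction attaches to the $S$-face a tree built from nested, shifted ``fans'' over windows of $S$ (a staircase-like gadget), arranged so that a distinct internal vertex forces each of quadratically many such (local-max, local-min) pairs; a direct shortest-path computation then certifies the patterns. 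The delicate point is to make the gadget genuinely Halin --- no degree-two vertices in the tree, and $S$ consecutive on the single distinguished face --- while still witnessing $\Omega(k^2)$ distinct patterns.
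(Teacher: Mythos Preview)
Your framework --- bound the number of faces of the bisector arrangement --- is the same as the paper's, but you are working much harder than necessary on the upper bound, and the step you single out as the ``crux'' (that any two bisectors cross $O(1)$ times in a Halin graph) is neither proved in your proposal nor needed. The paper's argument is a one-liner that completely avoids per-pair crossing bounds: in a Halin (or S-Halin) graph the number of vertices of degree $>2$ is $O(k)$, hence the number of \emph{faces} of $G$ is $O(k)$, hence the dual $G^*$ has only $|V^*|=O(k)$ vertices. Each bisector $\beta_i$ is a simple cycle in $G^*$ (Corollary~\ref{cor:simple-cycle}), so it has at most $|V^*|=O(k)$ arcs; summing over the $k-1$ bisectors, the bisector graph $G_{\mathcal B}$ (and therefore the pattern graph $G_{\mathcal P}$) has $O(k^2)$ arcs, hence $O(k^2)$ faces by Euler, hence $O(k^2)$ patterns by Corollary~\ref{cor:pattern-graph}. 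No analysis of how bisectors interact with the tree, no laminarity, no per-pair crossing count.

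So the genuine gap in your plan is that you never observe the structural fact that actually drives the Halin case --- that $G^*$ has only $O(k)$ vertices --- and instead commit to a pairwise-crossing argument that you yourself flag as ``the whole difficulty'' and leave unproved. Your lower-bound sketch is in the right spirit (the paper also gives an explicit S-Halin family; its patterns have the shape $1^{i-j-1}(-1)^j 1^{k'+j+1-i}(-1)^{k'-j-1}$, parameterized by two indices), though note the paper's construction is S-Halin rather than strictly Halin, so your worry about avoiding degree-$2$ tree vertices is one the paper does not address.
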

In contrast, the VC-dimension argument is limited to proving $O(k^3)$, even on Halin graphs.

\section{Preliminaries}
Let $G=(V,E)$ be an unweighted, undirected planar embedded graph. We prefer to think of $G$ as a directed planar graph with a set of arcs $\mathcal{A}$, such that there is a pair of arcs $uv,vu$ (embedded on the same curve) for every edge $\{u,v\}\in E$. We refer to $u$ and $v$ as the \emph{tail} and \emph{head} of $uv$, respectively. We refer to $uv$ as the \emph{reverse} of $vu$, or simply $rev(vu)$. However, we use the term edge whenever the orientation is not important or when we refer to any of the arcs (possibly both).
We denote by $P_{u,v}$ an arbitrary directed shortest path from $u$ to $v$. For $i<j$ we denote by $S[i,j]$ a path (along the infinite face) $s_i - s_{i+1} - \cdots - s_j$. 
We extend the definition of $rev(\cdot)$ to paths. We denote by $P[w,y]$ the subpath of $P$ between vertices $w$ and $y$. We similarly use $P(w,y]$, $P[w,y)$, and $P(w,y)$ to denote whether the subpath includes the corresponding endpoint(s) or not. We use $\circ$ to denote a concatenation of two paths. Let $C$ be a directed non-crossing cycle in $G$. We denote by ${\sf left}(C)$ and ${\sf right}(C)$ the subgraphs of $G$ that consist of all edges, vertices and faces that are lying to the left and right of $C$, respectively. The arcs of $C$ and their reverses are in both ${\sf left}(C)$ and ${\sf right}(C)$. 

\begin{figure}[htb]
\centering
\includegraphics[scale=0.7]{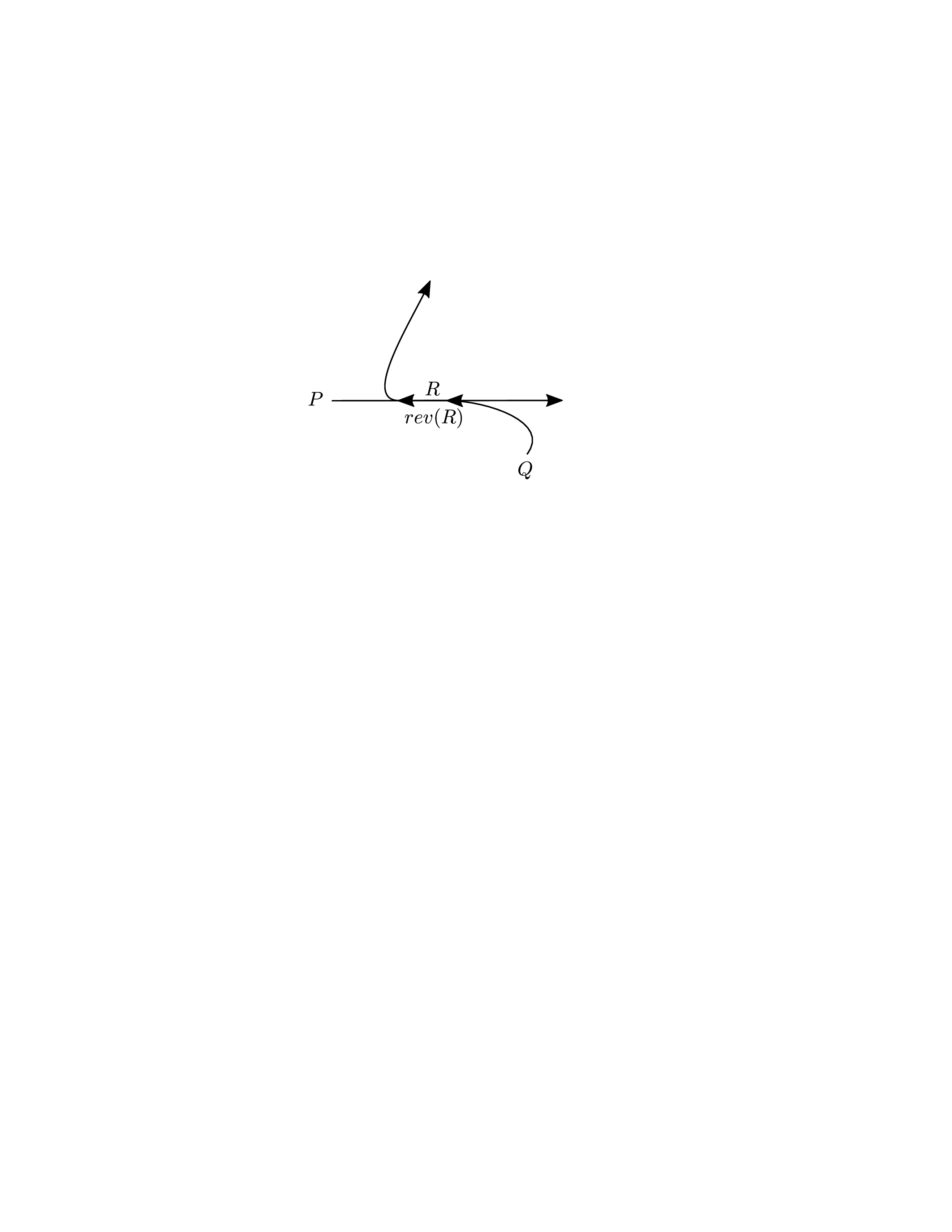}
\caption{An example of two directed paths $P$ and $Q$ that cross at the crossing part $R$.\label{crossing-paths}}
\end{figure}

Let $P$ and $Q$ be directed paths or cycles. We say that they {\em cross at subpath $R$} if, when ignoring their orientation: (1) $R$ is a proper (not a prefix or suffix) subpath of both $P$ and $Q$, and (2) The edges of $Q$ that follow and precede $R$ are in different sides of $P$. See Figure \ref{crossing-paths}. 
We refer to $R$ as a \emph{crossing part} of $P$ and $Q$. 

The dual graph of $G$ is denoted by $G^* = (V^*,E^*)$. Again, we think of $G^*$ as a directed graph with a set of arcs $\mathcal{A}^*$, defined as follows. For every arc $uv \in \mathcal{A}$, there is a corresponding arc $(uv)^* \in \mathcal{A}^*$ such that the tail and head of $(uv)^*$ are the faces that lie to the right and the left of $uv$, respectively. We note that we slightly abuse the notation here, since the dual of $(uv)^*$ is $rev(uv)$ (and not $uv$). For $B \subseteq \mathcal{A}$, let $B^* = \{(uv)^* \; | \; uv \in B \}$. For a cut $X\subseteq V$, let $\delta(X) = \{uv \in \mathcal{A} \; | \; u \in X, v \in V \setminus X \}$. For a cycle $C^*$ in the dual graph we say that $v \in V$ is in ${\sf left}(C^*)$ (resp. ${\sf right}(C^*)$) if the face of $G^*$ that corresponds to $v$ is in ${\sf left}(C^*)$ (resp. ${\sf right}(C^*)$). 

\section{A Bisector-Based Approach to the Okamura-Seymour Compression}
\label{section:alternative-proof}
In this section we present our new proof of Theorem~\ref{thm:main-theorem} and the proofs of Lemma~\ref{lem:adjacentpatterns} and Theorem~\ref{thm:ourcompression}. 
Our main tool is the use of simple dual cycles that we call {\em bisectors}. In Section~\ref{sec:bisect} we define bisectors, and prove that they are arc-disjoint and that this implies Lemma~\ref{lem:adjacentpatterns}. In Section~\ref{sub:intro1} we use it to prove Theorem \ref{thm:ourcompression}. Then, in Section~\ref{sec:patterngraph} we show that the union of all bisectors partitions the graph into regions such that all vertices belonging to the same region have the same pattern. Finally, in Section~\ref{sec:twobisectorscrossktimes} we show that every two bisectors can cross at most $O(k)$ times, implying that the partition induces only $O(k^3)$ regions (and hence only $O(k^3)$ distinct patterns) thus proving Theorem~\ref{thm:main-theorem}. 

\subsection{Bisectors} \label{sec:bisect}

For $1 \leq i \leq k-1$, define the cut $A_i = \{ v \in V \; | \; p_v[i] = -1 \}$. Since we assume the patterns are over $\{-1,1\}$, $V \setminus A_i = \{ v \in V \; | \; p_v[i] = 1\}$. We define the {\em bisector} $\beta_i = \delta(A_i)^*$. Namely, $\beta_i$ consists of all arcs $(uv)^* \in \mathcal{A}^*$ such that $p_u[i]=-1$ and $p_v[i] = 1$. Moreover, every edge $\{u,v\} \in E$ such that $p_u \neq p_v$ belongs to some bisector (possibly more than one). By cut-cycle duality, if the induced subgraphs of $A_i$ and $V \setminus A_i$ are both connected, then $\beta_i$ is a directed simple cycle in the dual graph. The next lemma implies that both induced subgraphs of $A_i$ and $V \setminus A_i$ are connected.

\begin{figure}[htb]
\centering
\includegraphics[scale=0.8]{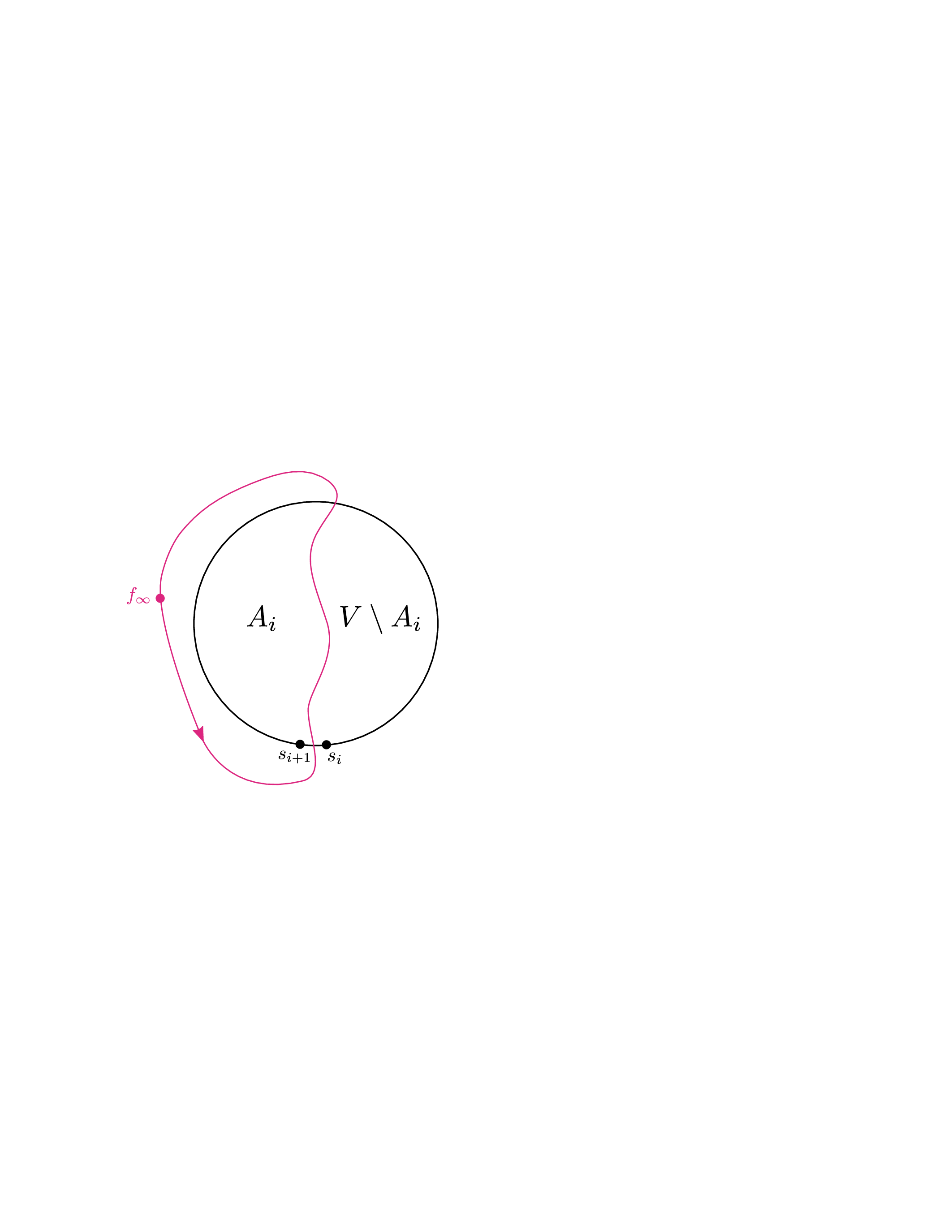}
\caption{The bisector $\beta_i$ and its corresponding cut $A_i$.\label{bisector}}
\end{figure}

\begin{lemma}
\label{lem:paths-contained}
For any $u \in A_i$ (resp. $V \setminus A_i)$, the vertices of $P_{u,s_{i+1}}$ (resp. $P_{u,s_i}$) are in $A_i$ (resp. $V \setminus A_i$).
\end{lemma}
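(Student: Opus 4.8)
The plan is to prove the statement for $u \in A_i$; the other case is symmetric (swapping the roles of $s_i$ and $s_{i+1}$ and negating the relevant pattern bit). Recall $u \in A_i$ means $p_u[i] = -1$, i.e.\ (after the binary transformation) $d(u,s_{i+1}) = d(u,s_i) - 1$. I want to show every vertex $w$ on the fixed shortest path $P_{u,s_{i+1}}$ also satisfies $p_w[i] = -1$, i.e.\ $d(w,s_{i+1}) = d(w,s_i) - 1$.

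First I would record the two inequalities that always hold by the triangle inequality along the path: for $w \in P_{u,s_{i+1}}$ we have $d(w,s_{i+1}) = d(u,s_{i+1}) - d(u,w)$ (since $w$ is on a shortest $u$-to-$s_{i+1}$ path), and also $d(w,s_i) \le d(w,u) + d(u,s_i) = d(u,w) + d(u,s_i)$. Combining these with $d(u,s_{i+1}) = d(u,s_i) - 1$ gives
\[
d(w,s_{i+1}) = d(u,s_{i+1}) - d(u,w) = d(u,s_i) - 1 - d(u,w) \ge d(w,s_i) - 1 .
\]
Since the graph is unweighted, $|d(w,s_i) - d(w,s_{i+1})| \le 1$, so the only way $p_w[i] \neq -1$ would be if $d(w,s_{i+1}) = d(w,s_i)$ or $d(w,s_{i+1}) = d(w,s_i)+1$. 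But a parity/reachability argument rules these out: I would argue that equality throughout the chain above is forced, or more robustly, that the shortest path $P_{w,s_{i+1}} = P_{u,s_{i+1}}[w,s_{i+1}]$ together with a shortest path from $s_i$ would otherwise create a shortcut contradicting $d(u,s_{i+1}) = d(u,s_i) - 1$. Concretely: if $d(w,s_{i+1}) \ge d(w,s_i)$, then $d(u,s_{i+1}) \le d(u,w) + d(w,s_{i+1})$ must be re-examined — actually the cleaner route is to note $d(u,s_i) \le d(u,w) + d(w,s_i) \le d(u,w) + d(w,s_{i+1}) = d(u,w) + d(u,s_{i+1}) - d(u,w) = d(u,s_{i+1}) = d(u,s_i) - 1$, a contradiction. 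Hence $d(w,s_{i+1}) < d(w,s_i)$, which (being an unweighted graph) forces $d(w,s_{i+1}) = d(w,s_i) - 1$, i.e.\ $p_w[i] = -1$ and $w \in A_i$.

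Wait — that argument is clean and essentially complete, so the main thing to be careful about is the binary transformation bookkeeping: after subdividing, "distances bounded by $1$ apart between consecutive $S'$-vertices" is exactly what makes $p_w[i] \in \{-1,+1\}$ rather than $\{-1,0,1\}$, so I should phrase everything in the transformed graph $G'$ where the pattern entries are genuinely in $\{-1,1\}$, and invoke $|d(w,s_i) - d(w,s_{i+1})| \le 1$ for \emph{consecutive} terminals in $S'$. The real subtlety, and the step I expect to need the most care, is the choice of the shortest path $P_{u,s_{i+1}}$: the lemma is stated for "the" path $P_{u,s_{i+1}}$, and the conclusion must hold for whichever shortest path the proof of the bisector cycle property needs. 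The argument above works for \emph{any} shortest $u$-to-$s_{i+1}$ path, since it only used that $w$ lies on a shortest $u$-to-$s_{i+1}$ path (giving $d(w,s_{i+1}) = d(u,s_{i+1}) - d(u,w)$) plus the triangle inequality to $s_i$; no consistent tie-breaking rule is required. So I would state it for an arbitrary such path. The symmetric case $u \in V \setminus A_i$ follows verbatim by interchanging $s_i \leftrightarrow s_{i+1}$ and $A_i \leftrightarrow V \setminus A_i$. Finally, I would remark that this lemma immediately yields connectivity of the induced subgraphs $G[A_i]$ and $G[V\setminus A_i]$: every vertex of $A_i$ is joined within $A_i$ to $s_{i+1}$ by the path $P_{u,s_{i+1}}$ (whose vertices are all in $A_i$), and likewise every vertex of $V \setminus A_i$ reaches $s_i$ inside $V \setminus A_i$ — which is exactly what is needed to apply cut-cycle duality and conclude $\beta_i = \delta(A_i)^*$ is a simple directed dual cycle.
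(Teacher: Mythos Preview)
Your proof is correct and follows essentially the same approach as the paper: both argue by contradiction using the triangle inequality, combining $d(u,s_{i+1}) = d(u,s_i)-1$ with $d(w,s_{i+1}) = d(u,s_{i+1}) - d(u,w)$ (since $w$ lies on a shortest $u$-to-$s_{i+1}$ path) to obtain a contradiction with $d(u,s_i) \le d(u,w) + d(w,s_i)$. The paper's version is terser (it assumes the binary setting so only the case $d(v,s_{i+1}) = d(v,s_i)+1$ needs ruling out), while your ``cleaner route'' handles $d(w,s_{i+1}) \ge d(w,s_i)$ directly, but the underlying inequality chain is the same.
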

\begin{proof}
Assume that $u \in A_i$ (the proof of the other case is symmetric). Let $v$ be any vertex of $P_{u,s_{i+1}}$ and assume for the sake of contradiction that $v \in V \setminus A_i$. Thus, $d(v,s_{i+1}) = d(v,s_{i})+1$. By the triangle inequality, we get the following contradiction:
\[
d(u,s_i) = d(u,s_{i+1}) + 1 = d(u,v) + d(v,s_{i+1}) +1 = d(u,v) + d(v,s_i) + 2 \geq d(u,s_i) + 2 \ \ \ \ \ \qedhere
\]
\end{proof}

By the above lemma, any two vertices $u,v \in A_i$ (resp. $V \setminus A_i$) are connected in the induced subgraph of $A_i$ (resp. $V \setminus A_i$) by the path $P_{u,s_{i+1}} \circ rev(P_{v,s_{i+1}})$ (resp. $P_{u,s_i} \circ rev(P_{v,s_i})$). This yields the following corollary. 
\begin{corollary}
\label{cor:simple-cycle}
$\beta_i$ is a directed simple cycle in the dual graph.
\end{corollary}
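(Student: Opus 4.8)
The plan is to invoke cut-cycle duality in planar graphs, using Lemma~\ref{lem:paths-contained} to verify its hypothesis. Recall the standard fact: for a connected plane graph $G$ and a subset $\emptyset \neq X \subsetneq V$, the dual arc set $\delta(X)^*$ is a directed simple cycle in $G^*$ precisely when both the subgraph induced by $X$ and the subgraph induced by $V \setminus X$ are connected. (More generally $\delta(X)^*$ is always an edge-disjoint union of directed cycles, and it is a single simple cycle exactly under the connectivity condition.) So it suffices to apply this with $X = A_i$ and to check that $G[A_i]$ and $G[V\setminus A_i]$ are each connected.

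First I would note that $A_i \neq \emptyset$ and $V \setminus A_i \neq \emptyset$: since the patterns are over $\{-1,1\}$, we have $s_i \in V \setminus A_i$ (as $d(s_i, s_{i+1}) - d(s_i, s_i) = 1 > 0$ forces $p_{s_i}[i] = 1$ in the binary encoding) and symmetrically $s_{i+1} \in A_i$; in any case both sets are nonempty because some edge of the graph must separate them. Next, connectivity of $G[A_i]$: take any $u, v \in A_i$. By Lemma~\ref{lem:paths-contained}, every vertex of the shortest path $P_{u, s_{i+1}}$ lies in $A_i$, and likewise every vertex of $P_{v, s_{i+1}}$ lies in $A_i$. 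Hence the walk $P_{u,s_{i+1}} \circ rev(P_{v, s_{i+1}})$ stays entirely within $A_i$ and connects $u$ to $v$ inside $G[A_i]$; since $u, v$ were arbitrary, $G[A_i]$ is connected. The argument for $G[V \setminus A_i]$ is identical, using the other case of Lemma~\ref{lem:paths-contained} and the paths $P_{u, s_i}$, $P_{v, s_i}$ — indeed this is exactly the observation already recorded in the sentence preceding the corollary.

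With both sides connected, cut-cycle duality yields that $\beta_i = \delta(A_i)^*$ is a directed simple cycle in $G^*$, which is the claim. I expect no real obstacle here: the only point that needs a word of care is making sure the orientation of $\delta(A_i)^*$ (arcs $(uv)^*$ with $u \in A_i$, $v \notin A_i$) is consistent with the dual-arc convention fixed in the Preliminaries, so that the resulting cycle is genuinely \emph{directed} and not merely an undirected simple cycle; this is immediate from the definition of $(uv)^*$ as oriented from the face right of $uv$ to the face left of $uv$, since all cut arcs cross from the $A_i$ side to the complementary side in the same rotational sense.
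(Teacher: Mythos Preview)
Your proposal is correct and follows essentially the same argument as the paper: both invoke cut-cycle duality and use Lemma~\ref{lem:paths-contained} to verify that $G[A_i]$ and $G[V\setminus A_i]$ are connected via the concatenated shortest paths $P_{u,s_{i+1}} \circ rev(P_{v,s_{i+1}})$ (resp.\ through $s_i$). Your added remarks on nonemptiness and orientation consistency are minor elaborations but do not change the approach.
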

We note that the above corollary implies that for any face $f$, every bisector contains at most two arcs incident to $f$. This shows that there are only $O(k)$ total bit changes between patterns as we go along the vertices of a face $f$. 

Another useful corollary comes from the fact that any edge whose dual is in $\beta_i$ contains endpoints that are both in $A_i$ and $V \setminus A_i$. Therefore:
\begin{corollary}
\label{cor:path-disjoint-bisector}
For any $u \in A_i$ (resp. $u\in V \setminus A_i$), the dual edges of $P_{u,s_{i+1}}$ (resp. $P_{u,s_i}$) are not in $\beta_i$.
\end{corollary}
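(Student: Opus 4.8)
The claim is that for $u \in A_i$, no dual edge of the shortest path $P_{u,s_{i+1}}$ lies on the bisector $\beta_i$ (and symmetrically for $u \in V\setminus A_i$ with $P_{u,s_i}$). The plan is to combine the two facts the text has just emphasized: (a) by Lemma~\ref{lem:paths-contained}, every vertex of $P_{u,s_{i+1}}$ lies in $A_i$; and (b) by definition, $\beta_i = \delta(A_i)^*$ consists precisely of the duals of arcs that cross the cut, i.e. arcs with tail in $A_i$ and head in $V\setminus A_i$. So the argument is essentially a one-liner: take any edge $\{w,y\}$ of $P_{u,s_{i+1}}$; by (a) both $w$ and $y$ are in $A_i$; but an edge whose dual appears in $\beta_i$ must have one endpoint in $A_i$ and the other in $V\setminus A_i$ (since $\beta_i^* = \delta(A_i)$ contains arcs with exactly one endpoint on each side, and the dual of an edge is in $\beta_i$ iff one of its two arc-orientations is in $\delta(A_i)$); hence $\{w,y\}$'s dual cannot be in $\beta_i$. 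I would state this contrapositively: if the dual of $\{w,y\}$ is in $\beta_i$ then $\{w,y\}$ has an endpoint in each of $A_i$ and $V\setminus A_i$, contradicting that both endpoints of $P_{u,s_{i+1}}$'s edges are in $A_i$.

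The only subtlety — and it is minor — is bookkeeping about arcs versus edges and orientations. The bisector $\beta_i$ is a set of dual \emph{arcs}, and $\beta_i = \delta(A_i)^*$ where $\delta(A_i)$ contains only arcs directed \emph{out} of $A_i$. So strictly, $(wy)^* \in \beta_i$ forces $w \in A_i, y \in V\setminus A_i$, while $(yw)^* \in \beta_i$ forces $y \in A_i, w\in V\setminus A_i$; in either case the underlying edge $\{w,y\}$ is cut by $A_i$. Since Lemma~\ref{lem:paths-contained} puts \emph{all} vertices of the path in $A_i$, neither orientation's dual can be in $\beta_i$, so no dual edge of $P_{u,s_{i+1}}$ is in $\beta_i$. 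The symmetric case $u\in V\setminus A_i$ is identical with $A_i$ and $V\setminus A_i$ swapped and $s_{i+1}$ replaced by $s_i$, again using the corresponding half of Lemma~\ref{lem:paths-contained}.

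I do not anticipate any real obstacle here: the corollary is an immediate consequence of Lemma~\ref{lem:paths-contained} plus the definition of $\beta_i$ as the dual of the cut $\delta(A_i)$. The statement even flags this explicitly (``comes from the fact that any edge whose dual is in $\beta_i$ contains endpoints that are both in $A_i$ and $V\setminus A_i$''). So the write-up is just: fix an edge of the path, invoke Lemma~\ref{lem:paths-contained} to place both its endpoints in $A_i$, and observe that this is incompatible with membership of its dual in the cut-cycle $\beta_i$.
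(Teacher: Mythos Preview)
Your proposal is correct and matches the paper's own reasoning exactly: the paper states the corollary as an immediate consequence of the fact that any edge whose dual lies in $\beta_i$ has one endpoint in $A_i$ and one in $V\setminus A_i$, combined with Lemma~\ref{lem:paths-contained}. Your extra bookkeeping about arc orientations is fine but more detailed than what the paper spells out.
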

\noindent
Note that $\beta_i$ has two arcs incident to $f_\infty$, one of them being $(s_{i+1} s_i)^*$. We think of $(s_{i+1} s_i)^*$ as the {\em first} arc of $\beta_i$. See Figure \ref{bisector}. The following lemma shows that bisectors are arc-disjoint. 

\begin{lemma}
\label{lem:no-multiplicities}
Every pair of bisectors $\beta_i, \beta_j$ are arc-disjoint.
\end{lemma}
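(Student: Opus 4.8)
The plan is to argue by contradiction: suppose some arc $e^*=(uv)^*$ lies on both $\beta_i$ and $\beta_j$, with $i<j$. By the definition of bisectors this means that the primal edge $\{u,v\}$ separates $A_i$ from $V\setminus A_i$ and also separates $A_j$ from $V\setminus A_j$; concretely, we may assume $u\in A_i$, $v\in V\setminus A_i$, and (up to swapping the roles of $i,j$ or of $u,v$) a fixed one of the four sign patterns for $(p_u[j],p_v[j])$. So we have a single edge whose two endpoints disagree on both coordinate $i$ and coordinate $j$ of their patterns. The strategy is to turn this into an illegal configuration of shortest paths on the infinite face. From $u\in A_i$ we get (Lemma~\ref{lem:paths-contained}) that $P_{u,s_{i+1}}$ stays in $A_i$, and from $v\in V\setminus A_i$ that $P_{v,s_i}$ stays in $V\setminus A_i$; the analogous statements hold with $j$ in place of $i$. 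I would use these four shortest paths, emanating from the two endpoints of a single edge, to reach four vertices $s_i,s_{i+1},s_j,s_{j+1}$ that appear in this cyclic order along the infinite face.

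The key step is the topological/crossing argument. Consider the paths $P_{u,s_{i+1}}$ and $P_{v,s_i}$ together with the edge $uv$ closing them off near the endpoints; since $s_i$ and $s_{i+1}$ are consecutive on the infinite face, these two paths plus a short arc form a closed curve, and $\beta_i$ (which separates $A_i$ from its complement) must be "trapped" on one side in a way that is incompatible with $\beta_i$ also being crossed transversally at $uv$ by the structure coming from coordinate $j$. More carefully, Corollary~\ref{cor:path-disjoint-bisector} tells us the dual edges of $P_{u,s_{i+1}}$ avoid $\beta_i$, and the dual edges of $P_{v,s_i}$ avoid $\beta_i$ as well; so $P_{u,s_{i+1}}\circ rev(P_{v,s_i})$, extended by the arc $vu$, gives a primal cycle all of whose edges except $uv$ have duals outside $\beta_i$, which pins down exactly how $\beta_i$ passes through $uv$ and where $s_j,s_{j+1}$ can lie relative to it. Doing the same for $\beta_j$ and the paths $P_{u,s_{j+1}}$, $P_{v,s_j}$ (or their reverses, depending on which sign case we are in) forces $\beta_i$ and $\beta_j$ to both pass through $uv$ in a manner that places $s_j,s_{j+1}$ on conflicting sides — i.e. the four terminals $s_i,s_{i+1},s_j,s_{j+1}$ would have to interleave on $f_\infty$ in an order inconsistent with the fixed cyclic order $s_1,\dots,s_k$ of $S$. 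That contradiction finishes the proof.

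I expect the main obstacle to be the bookkeeping over cases: which endpoint of $uv$ lies in $A_j$ versus $V\setminus A_j$, and the resulting choice of which shortest paths ($P_{\cdot,s_j}$ vs.\ $P_{\cdot,s_{j+1}}$) to route to which terminal. There are essentially four sign combinations for $(p_u[j],p_v[j])$ relative to the already-fixed $(p_u[i],p_v[i])=(-1,1)$, and in some of them the "forbidden" interleaving pattern is the $(-1,1,-1,1)$/$(1,-1,1,-1)$ configuration the paper already rules out via planarity of shortest paths, while in others one has to extract the contradiction from a single shortest path being forced to cross a bisector it is proven (Corollary~\ref{cor:path-disjoint-bisector}) to avoid. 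The clean way to handle this uniformly is to phrase everything in terms of the cycle $P_{u,s_{i+1}}\circ rev(P_{v,s_i})\circ vu$ and note that $\beta_i$ meets this cycle only in the arc $(uv)^*$; cut-cycle duality then says $s_j$ and $s_{j+1}$ cannot be separated by this cycle in the way required, contradicting that $(uv)^*\in\beta_j$ forces them apart. Once the framework is set up, each case is a one-line application of Lemma~\ref{lem:paths-contained}, Corollary~\ref{cor:path-disjoint-bisector}, and the cyclic order of $S$ on $f_\infty$.
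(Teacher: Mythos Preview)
Your proposal has a genuine gap, and the underlying strategy cannot work as stated.

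First, a clarification: there is no case analysis. If the \emph{arc} $(uv)^*$ lies in both $\beta_i$ and $\beta_j$, then by the definition $\beta_\ell=\delta(A_\ell)^*$ you automatically get $u\in A_i\cap A_j$ and $v\in (V\setminus A_i)\cap(V\setminus A_j)$. The orientation of the dual arc pins down the signs; your ``four sign combinations'' do not arise.

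The more serious issue is that your argument is purely topological --- cut-cycle duality, Jordan curves, cyclic order of $s_i,s_{i+1},s_j,s_{j+1}$ on $f_\infty$ --- and never uses the numerical fact $d(u,v)=1$ together with the specific sign pattern. But a purely topological argument would be insensitive to orientation and would equally rule out $(vu)^*\in\beta_j$; i.e.\ it would prove that bisectors are \emph{edge}-disjoint. That is false: the paper explicitly remarks after this lemma that a bisector can contain reversed arcs of another, and the entire ``peel phase'' of Section~\ref{sec:patterngraph} exists precisely to handle this. So any argument that does not exploit the common orientation numerically is proving too much. Concretely, your claim that ``$(uv)^*\in\beta_j$ forces $s_j$ and $s_{j+1}$ apart'' relative to the primal cycle built from $P_{u,s_{i+1}}$, $P_{v,s_i}$, and $uv$ is never substantiated, and there is no reason the fixed cyclic order $s_i,s_{i+1},\dots,s_j,s_{j+1}$ should be violated.

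The paper's proof takes a different route. It builds a (non-self-crossing) primal cycle $C=rev(P_{v,s_j})\circ P_{v,s_i}\circ S[i,j]$ from the two shortest paths emanating from $v$, and uses a Jordan-curve argument together with Lemma~\ref{lem:paths-contained} to force an intersection: either $P_{u,s_{i+1}}$ meets $P_{v,s_j}$ at some $x$, or $P_{u,s_{j+1}}$ meets $P_{v,s_i}$. From such an $x$ one writes four inequalities combining the triangle inequality with the sign conditions $u\in A_i\cap A_j$, $v\in(V\setminus A_i)\cap(V\setminus A_j)$, and $d(u,v)=1$; summing them yields $0\le -2$. The arithmetic step is where the orientation of the shared arc is actually used, and it is not replaceable by topology alone.
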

\begin{proof}
Assume for contradiction that arc $(uv)^*$ appears both in $\beta_i$ and in $\beta_j$. By definition, $u$ belongs to $A_i$ and $A_j$, and $v$ belongs to $V \setminus A_i$ and $V \setminus A_j$.
We first prove that under our assumption, either 
$P_{u,s_{i+1}}$ intersects with $P_{v,s_j}$ or $P_{u,s_{j+1}}$ intersects with $P_{v,s_i}$. To see why, first note that since $P_{v,s_i}$ and $P_{v,s_j}$ are shortest paths, we can choose them to follow a common maximal-length prefix $P_{v,s_i}[v,w] = P_{v,s_j}[v,w]$ for some $w$, and they do not intersect again after $w$. Consider the directed cycle $C = rev(P_{v,s_j}) \circ P_{v,s_i} \circ S[i,j]$ (see Figure~\ref{fig:no-multiplicities}). Notice that by our choice of $P_{v,s_i}$ and $P_{v,s_j}$ and by the fact that $S[i,j]$ lies on the infinite face, $C$ is not necessarily simple but it does not self-cross. We have two cases to consider:

\vspace{0.1in}
\noindent \underline{Case 1}: $u \in {\sf left}(C)\setminus C$. Since $s_{i+1} \in {\sf right}(C)$ then (by the Jordan curve theorem and the fact that all vertices of $S$ lie on the infinite face) $P_{u,s_{i+1}}$ must intersect with $rev(P_{v,s_j}) \circ P_{v,s_i}$. However, by Lemma \ref{lem:paths-contained}, $P_{u,s_{i+1}}$ cannot intersect with $P_{v,s_i}$, therefore it intersects with $rev(P_{v,s_j})$ (and hence with $P_{v,s_j}$).

\vspace{0.1in}
\noindent \underline{Case 2}: $u \in {\sf right}(C)$. Notice that $s_{j+1} \in {\sf left}(C)$. If $s_{j+1}$ is in $P_{v,s_i}$ then $P_{u,s_{j+1}}$ intersects with $P_{v,s_i}$ and we are done. Otherwise, since $s_{j+1}$ is not in $P_{v,s_j}$ by Lemma \ref{lem:paths-contained}, then $s_{j+1} \in {\sf left}(C) \setminus C$. But then again (by the Jordan curve theorem) $rev(P_{u,s_{j+1}})$ (and hence $P_{u,s_{j+1}}$) must intersect with $rev(P_{v,s_j}) \circ P_{v,s_i}$. However, by Lemma \ref{lem:paths-contained}, $P_{u,s_{j+1}}$ cannot intersect with $rev(P_{v,s_j})$, therefore it intersects with $P_{v,s_i}$.
\begin{figure}[htb]
\centering
\subfloat{\includegraphics[scale=1]{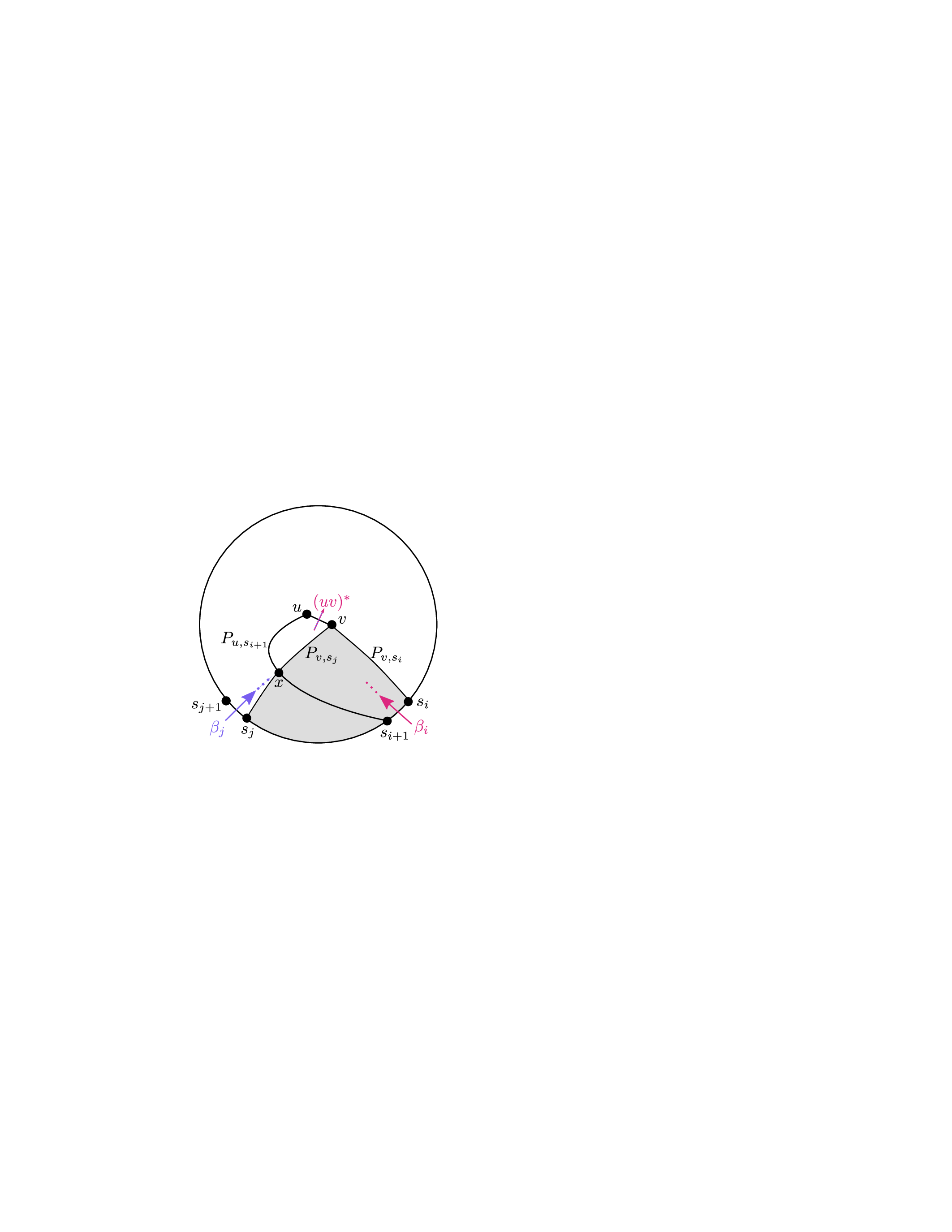}}\ \ \ \ \ \ \ \ \ \ \ \ \ \ \ \ \ \ \ \ 
\subfloat{\includegraphics[scale=1]{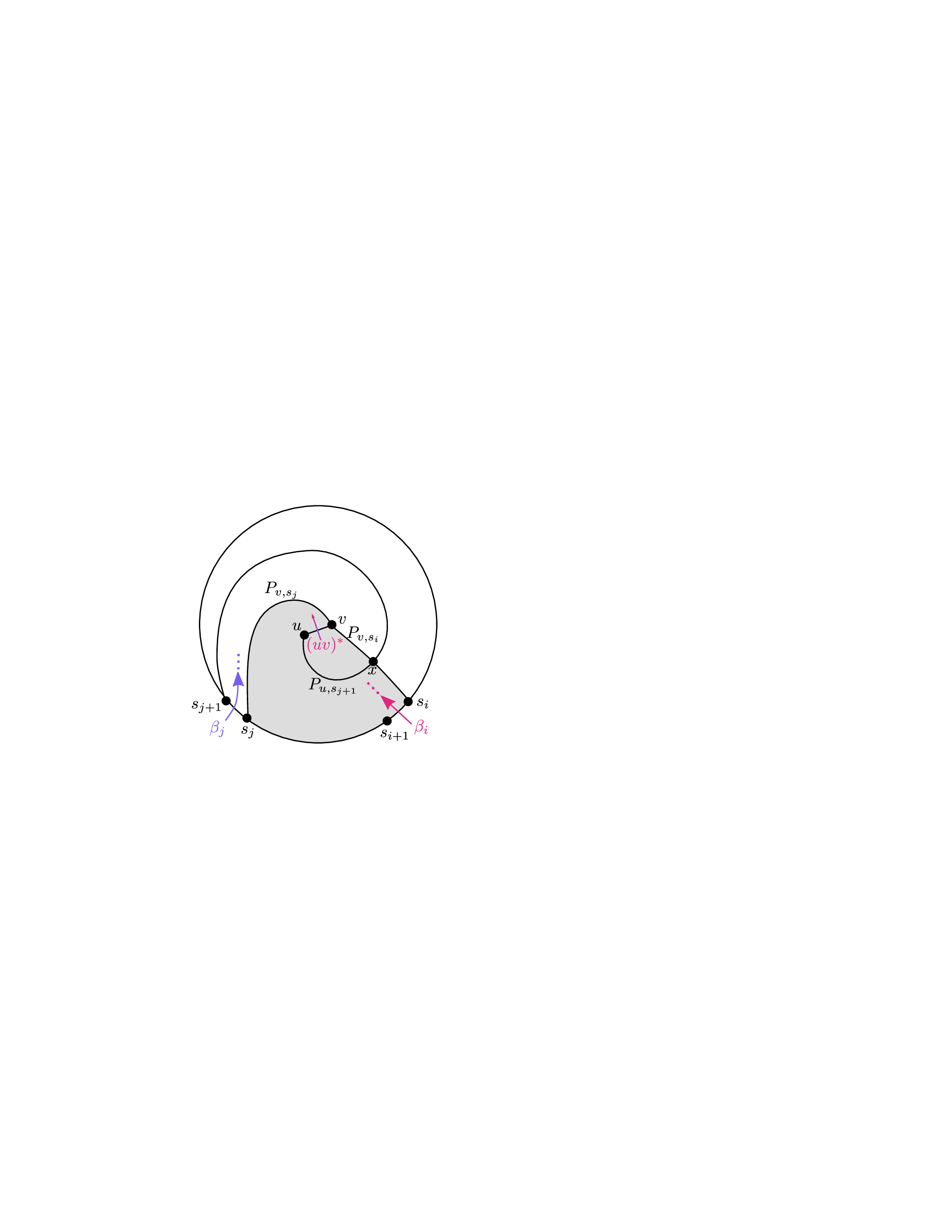}}
\caption{The two cases in the proof of Lemma~\ref{lem:no-multiplicities}. Case 1 on the left. Case 2 on the right. ${\sf right}(C)$ is shaded. For clarity, $C$ is a simple cycle in this example.\label{fig:no-multiplicities}}
\end{figure}

We can therefore continue under the assumption that $P_{u,s_{i+1}}$ and $P_{v,s_j}$ intersect at a vertex $x$ (the other case is symmetric). By the triangle inequality:
\begin{equation*}
\begin{split}
& d(u,s_{j+1}) \leq d(u,x) + d(x,s_j) - 1 \\
& d(u,x) + d(x,s_{i+1}) \leq d(u,v) + d(v,s_i) - 1\\
& d(v,s_i) \leq d(v,x) + d(x,s_{i+1}) - 1 \\
& d(v,x) + d(x,s_j) \leq d(v,u) + d(u,s_{j+1}) -1
\end{split}
\end{equation*}
Since $d(u,v) = d(v,u) = 1$, summing the above inequalities we get the contradiction $0\leq-2$.
\end{proof}

The above lemma shows that two bisectors cannot share an arc.
Note however that it is still possible that a bisector contains {\em reversed} arcs of another bisector. This proves Lemma~\ref{lem:adjacentpatterns}.
Next, we prove Theorem~\ref{thm:ourcompression}.

\subsection{A proof of Theorem \ref{thm:ourcompression}}\label{sub:intro1}
We begin by describing how to compute all the bisectors of the graph and report their arcs in $\tilde{O}(n)$ time. We split every edge $\{s_i , s_{i+1}\}$ by adding a dummy vertex $y_i$ and edges $\{s_i , y_i\},\{y_i,s_{i+1}\}$ of weight $\frac{1}{2}$. Consider a shortest path tree $T_i$ rooted at $y_i$. Notice that the arcs of $\beta_i$ which are not incident to $f_\infty$, are the duals of arcs whose tail is in the subtree rooted at $s_{i+1}$ and head is in the subtree rooted at $s_i$. In the interdigitating tree of $T_i$ (i.e., the tree in the dual graph whose edges are the duals of the edges not in $T_i$), they are precisely the $f_i$-to-$f_\infty$ path without the last arc, where $f_i \neq f_\infty$ is the face incident to $\{s_i,s_{i+1}\}$ in $G$. We can therefore run the MSSP algorithm of Klein~\cite{Klein05} in $\tilde{O}(n)$ time, and report for every $1 \leq i \leq k-1$ all the those arcs of $\beta_i$ in $\tilde O(|\beta_i|)$ time. To report the two arcs of $\beta_i$ which are incident to $f_\infty$, one of them is trivially $(s_{i+1} s_i)^*$ and the other one is determined by last arc of the above $f_i$-to-$f_\infty$ path. Since by Lemma~\ref{lem:no-multiplicities} the arcs of bisectors are disjoint, this takes time $\tilde{O}(n + \sum_{i=1}^{k-1} |\beta_i|) = \tilde O(n)$ time. In particular, we can label every edge $\{u,v\} \in E$ by the (at most two) bisectors that use $(uv)^*$ and $(vu)^*$. I.e., the bits that change between $p_u$ and $p_v$.

We next describe the compression scheme. Recall that, by storing $d(v,s_1)$ for every $v \in T$, a query $d(v,s_i)$ (with $v\in T$ and $s_i \in S$) boils down to extracting $p_v$ and computing its $(i-1)$'th prefix-sum.
Let $\mathcal{T}$ be a spanning tree of $G$. Label each edge $\{u,v\}$ of $\mathcal{T}$ by the (at most two) bits that change between the patterns of $u$ and $v$. Note that there could be many (potentially $\Omega(n)$) nodes of $\mathcal{T}$ that correspond to the same pattern. 
In order to decrease the size of $\mathcal{T}$ to be $x$ (the number of distinct patterns in $G$), we root $\mathcal{T}$ at some arbitrary node $u$. Then, for every two nodes $v,w$ of $\mathcal{T}$ s.t $p_v=p_w$ (and w.l.o.g. $v$ is not a descendent of $w$) we remove the node $w$ and turn all it's children to be children of $v$ (their edge labels remain the same).
We repeat this process until the size of the tree is $x$. We denote the resulting tree by $\mathcal{T}'$.
 Let $Q$ be an Euler-tour of $\mathcal{T}'$ starting from the root $u$. Consider the patterns of the nodes as we go along $Q$, starting from $p_{u}$. In each step, the pattern only changes in at most two bits (according to the edge labels). Therefore, we can maintain all these $O(|Q|)= O(x)$ versions of the pattern using a \emph{persistent}~\cite{driscoll1989making} data structure for prefix-sum (e.g., using persistent segment trees~\cite{segmenttree}). Such a data structure supports both updates and prefix-sum queries to any version in $\tilde O(1)$ time and uses $\tilde O(|Q| + k) = \tilde O(x+k)$ space. Finally, for every vertex $v \in T$ let $q_v$ be a node in $Q$ whose corresponding pattern is $p_v$. We store a pointer from $v$ to the version of the persistent data structure at $q_v$, using additional $\tilde{O}(|T|)$ bits overall.

We now give a randomized $\tilde{O}(n)$ time algorithm for constructing $\mathcal{T'}$ (and hence the compression). An arbitrary spanning tree $\mathcal{T}$ can be computed in $O(n)$ time. Assume that every edge $\{w,v\}$ of $\mathcal{T}$ is labeled by the (at most two) bits that change between $p_w$ and $p_v$. Let us compute the pattern of the root $u$ of $\mathcal{T}$ with a single-source shortest-paths computation in $G$. We also compute the \emph{Karp-Rabin fingerprint}~\cite{KarpRabin} $\fp(p_u)$ of $p_u$. Such fingerprints are appealing because: (1) for any $p_w \neq p_v$, we have that $\fp(p_w) \neq \fp(p_v)$ with high probability, and (2) given $\fp(S_1)$ and $\fp(S_2)$ of two strings $S_1,S_2$ we can compute in $O(1)$ time the fingerprint of the concatenation $\fp(S_1\circ S_2)$. Thus, if we maintain a complete binary tree  on top of the pattern where each node contains the fingerprint of its subtree (and in particular, the root contains the fingerprint of the entire pattern), then we can update this tree in $O(\log k)$ time after changing one or two bits in the pattern. 
 
We maintain the fingerprints in a dictionary initially containing only $\fp(p_u)$.
We process the nodes of $\mathcal{T}$ starting from $u$, maintaining a queue of next-to-visit nodes. When we process a node $v$, we compute $\fp(p_v)$ from the fingerprint of $v$'s parent, by flipping the bits according to the edge label (in $O(\log k)$ time). We then try to add $\fp(p_v)$ to the dictionary. If we find a collision with some vertex $w$ (namely, $\fp(p_v) = \fp(p_w)$) then we delete $v$ from $\mathcal{T}$, and set the children of $v$ to be children of $w$ in $\mathcal{T}$. In any case, we add the children of $v$ to the queue so they will be processed later. Notice that a node is visited only after all its ancestors have been visited. Therefore, we can always compute its fingerprint and we never move children from a vertex to its descendent, so $\mathcal{T}$ remains a tree. In addition, the parent of every node changes or gets deleted at most once, hence the running time is $\tilde{O}(n)$. Overall, in $\tilde{O}(n)$ time we construct $\mathcal{T'}$ and the dictionary (both of size $x$).  

\begin{itemize}
	\item In the special case where the vertices of $T$ induce a connected component in $G$, we can skip the first part of the algorithm and simply take a path $Q$ that traverses only the vertices of $T$. The rest of the construction remains the same and since $|Q| = O(|T|)$, the size of the compression is $\tilde{O}(|T|+k)$. 
	\item In the special case where the vertices of $T$ all lie on a single face (but not necessarily consecutively), let $Q$ be a path that visits all the vertices of the face in clockwise order. By Corollary~\ref{cor:simple-cycle}, the total number of bit changes between patterns of consecutive vertices along $Q$ is $O(k)$. Therefore, the number of patterns encountered is $O(k)$ and hence we get an $\tilde{O}(|T|+k)$ compression for this case as well.

\end{itemize}
This completes the proof of Theorem~\ref{thm:ourcompression}.

\subsection{The bisector graph and the pattern graph}\label{sec:patterngraph}

The \emph{bisector graph} $G_\mathcal{B}$ is the subgraph of $G^*$ composed of the union of all the bisectors. The faces of $G_\mathcal{B}$ represent the patterns of $G$ in the following way. 
\begin{lemma}
\label{lem:patterns-of-faces}
For every $u,v \in V$, if $u$ and $v$ are embedded inside the same face $f$ of $G_\mathcal{B}$, then $p_u=p_v$.
\end{lemma}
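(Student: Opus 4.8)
The plan is to show that if $u$ and $v$ lie in the same face $f$ of the bisector graph $G_\mathcal{B}$, then for every index $i \in \{1,\ldots,k-1\}$ the bits $p_u[i]$ and $p_v[i]$ agree; since the pattern is determined entrywise by these bits, this gives $p_u = p_v$. Fix an index $i$ and recall that the bit $p_v[i]$ is $-1$ exactly when $v \in A_i$, and $+1$ otherwise, and that $\beta_i = \delta(A_i)^*$ is (by Corollary~\ref{cor:simple-cycle}) a simple directed cycle in $G^*$ separating the primal vertices of $A_i$ from those of $V \setminus A_i$. In the language of the preliminaries, $v \in A_i$ iff the dual face corresponding to $v$ lies in ${\sf left}(\beta_i)$ (up to the appropriate orientation convention), and $v \notin A_i$ iff it lies in ${\sf right}(\beta_i)$. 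So it suffices to argue that $u$ and $v$ lie on the same side of $\beta_i$.

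First I would observe that $\beta_i$ is a subgraph of $G_\mathcal{B}$, since $G_\mathcal{B}$ is by definition the union of all bisectors. Now consider the face $f$ of $G_\mathcal{B}$ that contains both $u$ and $v$. As a region of the plane, $f$ is connected and, crucially, its interior contains no edge of $G_\mathcal{B}$ and in particular no arc (nor reversed arc) of $\beta_i$ — any such dual edge would be an edge of $G_\mathcal{B}$ and hence would not lie strictly inside a face. Therefore $f$ is disjoint from the cycle $\beta_i$ (as a closed curve in the plane), so $f$ lies entirely in one of the two open regions into which $\beta_i$ partitions the plane, i.e.\ entirely in ${\sf left}(\beta_i)$ or entirely in ${\sf right}(\beta_i)$. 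Consequently the primal vertices $u$ and $v$, both embedded in $f$, lie on the same side of $\beta_i$, which gives $p_u[i] = p_v[i]$. Since $i$ was arbitrary, $p_u = p_v$.

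The one subtlety I would be careful about — and what I expect to be the main obstacle — is the correspondence between "which side of the dual cycle $\beta_i$ contains the primal vertex $v$" and "$v \in A_i$ or not." A primal vertex $v$ corresponds to a face $v^*$ of the dual graph $G^*$, but $\beta_i$ is a cycle in $G^*$, and when we restrict attention to the subgraph $G_\mathcal{B} \subseteq G^*$, a single face of $G_\mathcal{B}$ may contain several faces of $G^*$ (hence several primal vertices). The content of the lemma is precisely that all those primal vertices have the same pattern. So the clean way to phrase it is: (i) by definition of the cut $A_i$ and of $\beta_i = \delta(A_i)^*$, two primal vertices $v, v'$ satisfy $p_v[i] = p_{v'}[i]$ iff $v^*$ and $(v')^*$ are on the same side of $\beta_i$ in $G^*$; (ii) a face $f$ of $G_\mathcal{B}$ does not meet the closed curve traced by $\beta_i$, because all edges of $\beta_i$ are edges of $G_\mathcal{B}$ and thus on the boundary of (not interior to) $f$; (iii) hence all primal vertices embedded in $f$ are on one fixed side of $\beta_i$, so they share the $i$-th pattern bit. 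I would make sure (ii) is stated carefully using the fact that the edges of a planar embedded graph partition the plane minus the vertices into open faces, so any point strictly inside a face of $G_\mathcal{B}$ is at positive distance from every edge of $G_\mathcal{B}$, in particular from $\beta_i$. Everything else is a direct application of the Jordan curve theorem together with Corollary~\ref{cor:simple-cycle}.
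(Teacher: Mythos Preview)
Your proof is correct, but it takes a somewhat different route from the paper's. The paper first observes that $G_\mathcal{B}$ is connected (since every bisector passes through $f_\infty$), so the boundary of the face $f$ is a simple cycle in $G^*$; by cut-cycle duality the primal subgraph $G_f$ embedded inside $f$ is connected, and since no edge inside $f$ is dual to a bisector arc, adjacent vertices in $G_f$ share the same pattern, whence a path argument gives $p_u=p_v$. You instead argue bisector by bisector: each $\beta_i$ is a simple Jordan curve contained in $G_\mathcal{B}$, so the open face $f$ is disjoint from it and hence lies entirely on one side of $\beta_i$, forcing $p_u[i]=p_v[i]$. Your approach is slightly more direct in that it does not need the connectedness of $G_\mathcal{B}$ nor the simplicity of the face boundary; the paper's approach, on the other hand, stays closer to the combinatorial cut-cycle picture and highlights that the primal graph inside a face of $G_\mathcal{B}$ is itself connected, a fact that is conceptually useful elsewhere. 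Both arguments ultimately rest on the same ingredients (Corollary~\ref{cor:simple-cycle} and cut-cycle/Jordan separation), just assembled in a different order.
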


\begin{proof}
Notice that $G_\mathcal{B}$ is a connected graph because all the bisectors are incident to $f_\infty$. Hence, $f$ is a simple cycle in $G^*$. Let $G_f$ be the subgraph of $G$ embedded inside the face $f$. Since there are no bisector edges embedded inside $f$, then in $G_f$ there is no pair of adjacent vertices that have different patterns. Since $f$ is a simple cycle, then by cut-cycle duality $G_f$ is a connected subgraph. Therefore, there exists a $u$-to-$v$ path in $G_f$, and every pair of adjacent vertices in this path have the same pattern. Hence $p_u=p_v$.
\end{proof}

By the above lemma, every pattern $p$ of $G$ corresponds to a unique nonempty subset of faces of $G_\mathcal{B}$. More precisely, a pattern $p$ corresponds to all the faces of $G_\mathcal{B}$ such that the vertices of $G$ embedded in these faces have pattern $p$. In particular, the number of faces of $G_\mathcal{B}$ is an upper bound on the number of distinct patterns in $G$. Therefore, if we could prove that $G_\mathcal{B}$ has $O(k^3)$ faces we would be done. Unfortunately, this is not the case. There can be as many as $\Omega(n)$ faces of $G_\mathcal{B}$ that correspond to the same pattern (see Figure \ref{repeating-faces}). To tackle this, we transform $G_\mathcal{B}$ into a new graph $G_\mathcal{P}$ (called the \emph{pattern graph}) that has only $O(k^3)$ faces and whose faces still represent all the distinct patterns of $G$.

\begin{figure}[htb]
\centering
\includegraphics[scale=1]{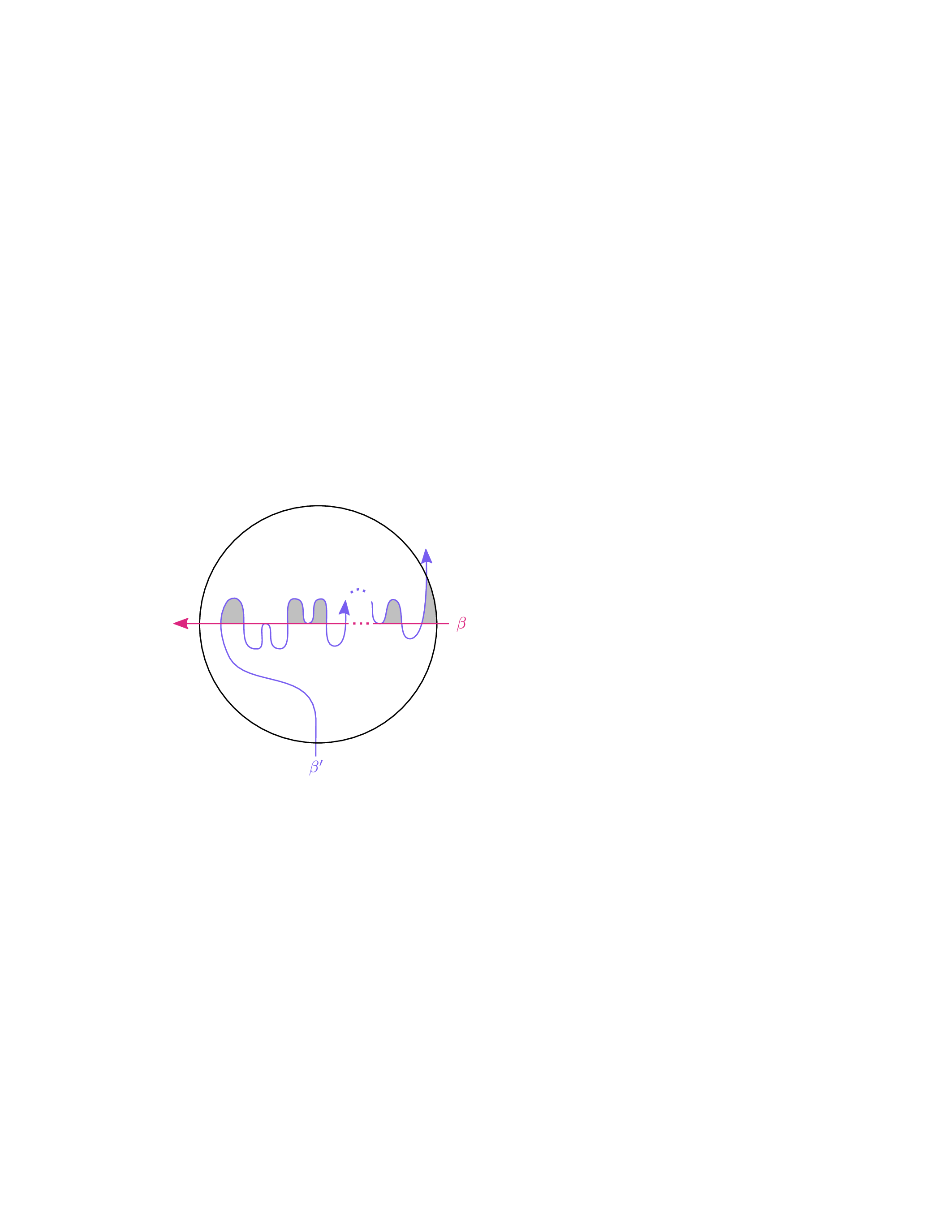}
\caption{The shaded faces all correspond to the same pattern (assuming no other bisector crosses or separates them). They are formed when the two bisectors either cross each other or just touch (intersect without crossing). 
We will later see that two bisectors can cross each other at most $O(k)$ times, but, they can touch $\Omega(n)$ times, creating $\Omega(n)$ faces that correspond to the same pattern.}
\label{repeating-faces}
\end{figure}

The {\em pattern graph} $G_\mathcal{P}$ is obtained by applying on $G_\mathcal{B}$ the following two-phase procedure:\\ (1) A \emph{Peel} phase: 
Recall that while Lemma \ref{lem:no-multiplicities} says that every two bisectors are arc-disjoint, it is still possible that one bisector contains {\em reversed} arcs of another. In the peel phase, we re-embed the bisectors so that no bisector contains reversed arcs of another bisector. After the peel phase, crossings and touchings occur only at vertices (rather than subpaths).
\\(2) A \emph{Merge} phase: In the merge phase, we merge faces that correspond to the same pattern and share a common vertex.

\paragraph{Peel Phase.} For every two bisectors $\beta$ and $\beta'$, consider the set of maximal-length subpaths $R$, such that $R$ is a subpath of $\beta$ and $rev(R)$ is a subpath of $\beta'$. If the arc of $\beta$ that follows $R$ is in ${\sf right}(\beta')$ (resp. ${\sf left}(\beta')$), then we re-embed every arc of $R$ on a new curve lying to the right (resp. left) of its reverse. 
See Figure \ref{peel-phase}. Note that the peel phase does not create any new crossings between $\beta$ and $\beta'$. 

\begin{figure}[htb]
\begin{minipage}{0.5\textwidth}
 \hspace{1.5cm}
\includegraphics[scale=1]{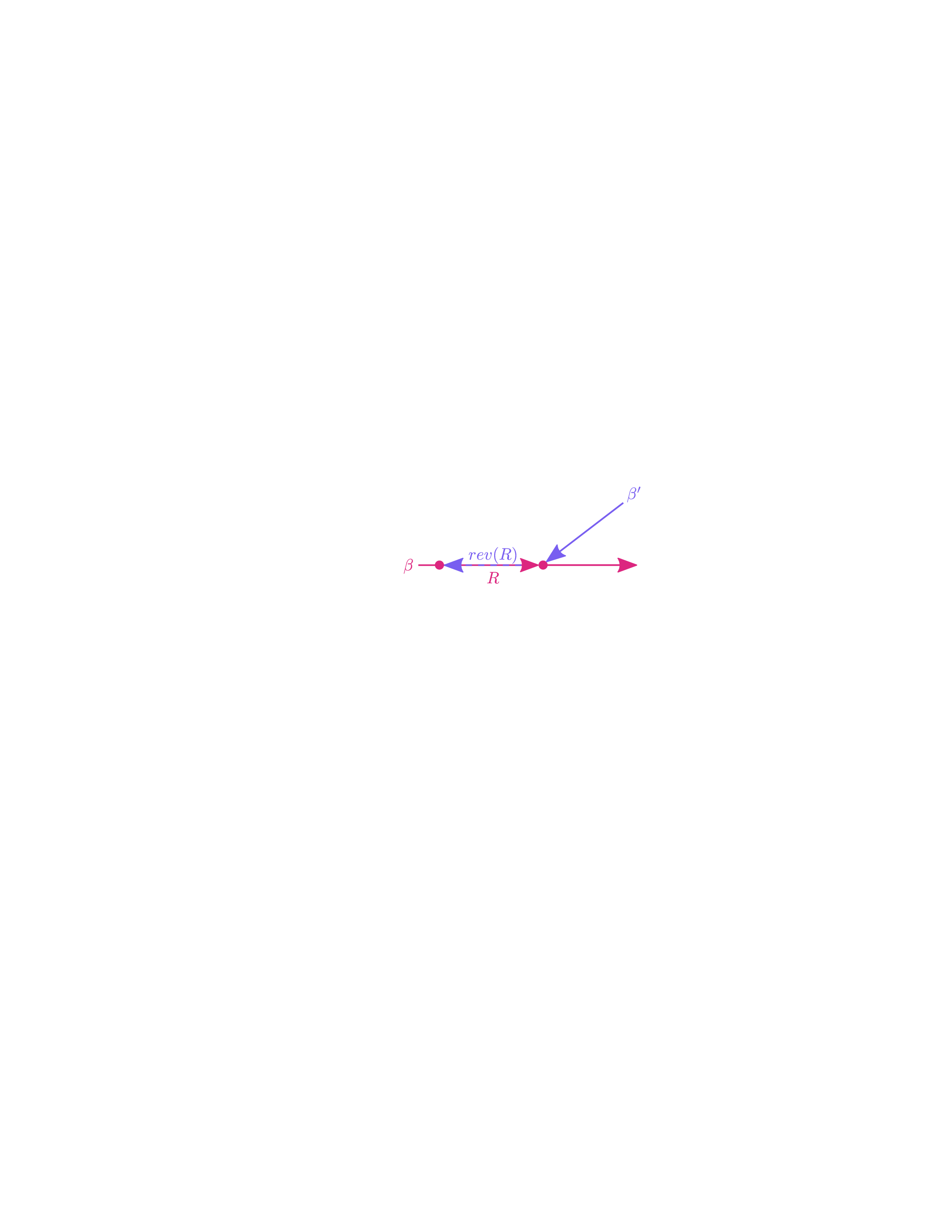}
\end{minipage}
\begin{minipage}{0.5\textwidth}
 \vspace{0.45cm}
 \hspace{0.7cm}\includegraphics[scale=1]{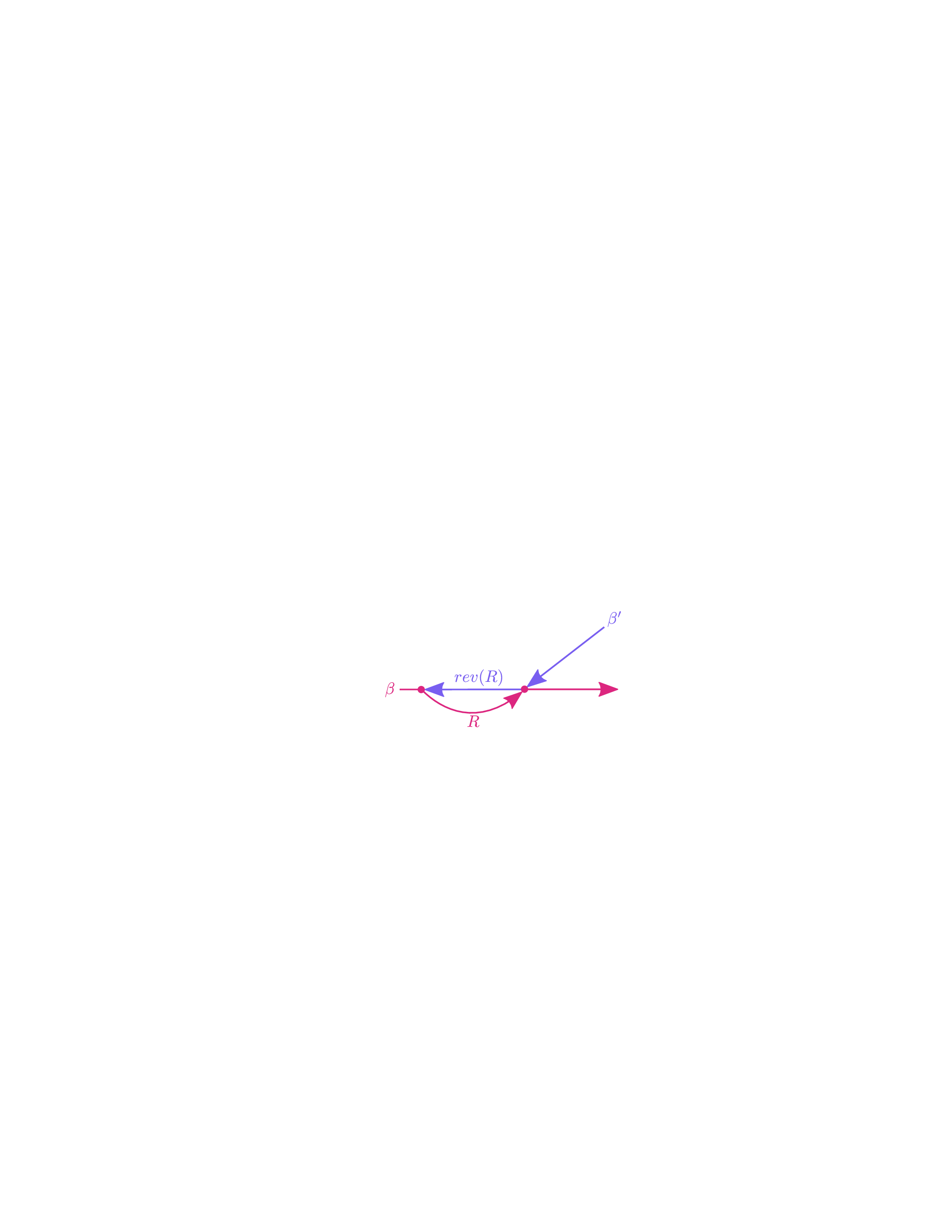}
\end{minipage}
\caption{Before (left) and after (right) a peel phase. In this example, $R$ is a single arc, and the arc of $\beta$ that follows $R$ is in ${\sf left}(\beta')$.\label{peel-phase}}
\end{figure}

\paragraph{Merge Phase.} For every vertex $g \neq f_\infty$ of a bisector $\beta$, if any other bisector crosses $\beta$ at $g$ then we do nothing. Otherwise, we split $g$ into two copies. All the arcs in ${\sf left}(\beta)$ that are incident to $g$ are connected to one copy, and all the arcs in ${\sf right}(\beta)$ that are incident to $g$ are connected to the other copy. Finally, we replace the arcs of $\beta$ that are incident to $g$ (say $fg$ and $gh$) by a single arc $fh$. See Figure \ref{merge-phase}. Note that if $g$ is not incident to any bisector other than $\beta$, then the merge phase simply contracts the arc $gh$. We repeat this process until there are no such bisector pairs in the graph.

\begin{figure}[htb]
\begin{minipage}{0.5\textwidth}
 \hspace{2cm}
\includegraphics[scale=1]{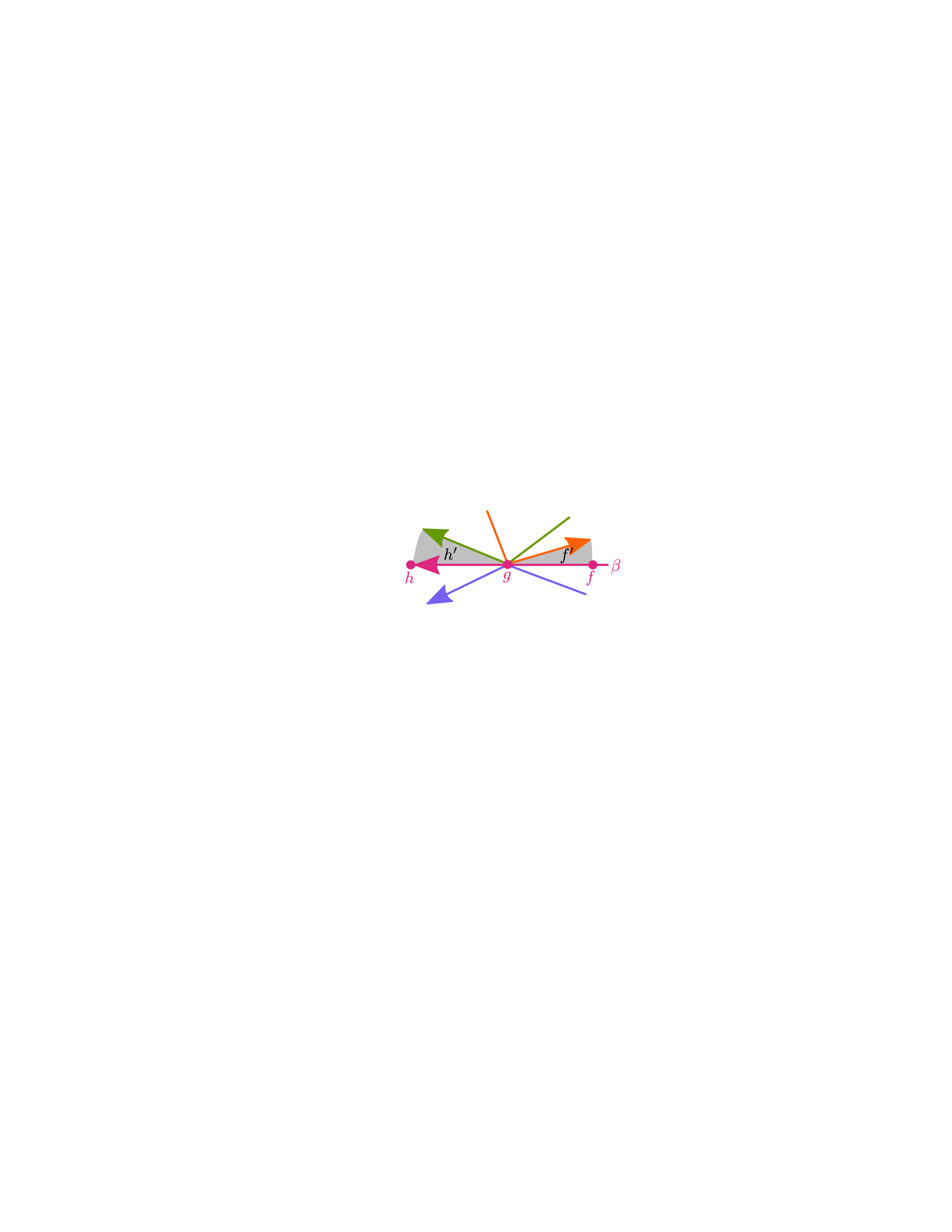}
\end{minipage}
\begin{minipage}{0.5\textwidth}
 \hspace{1.5cm}\includegraphics[scale=1]{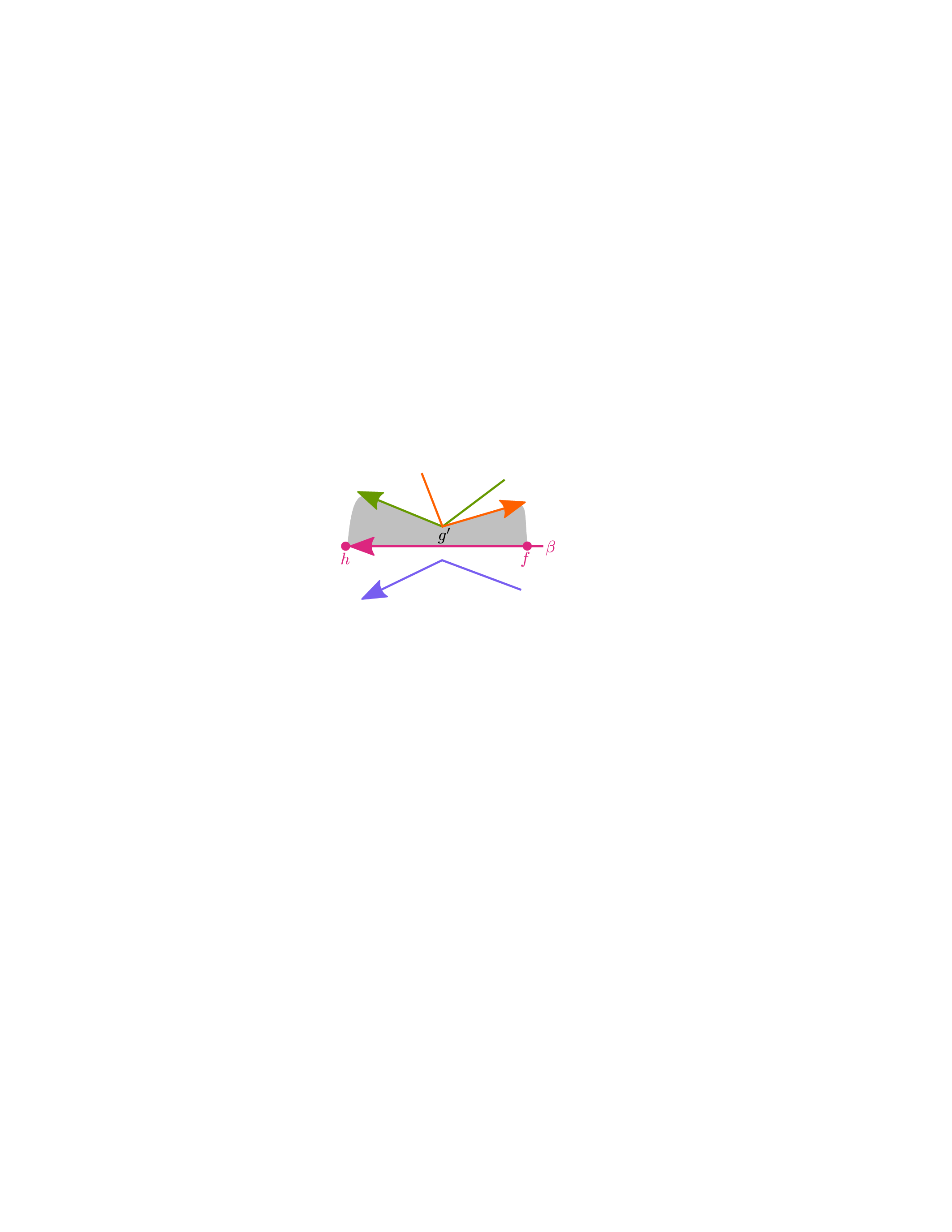}
\end{minipage}
\caption{Before (left) and after (right) a merge phase. The (shaded) face $g'$ is obtained by merging faces $h'$ and $f'$.\label{merge-phase}}
\end{figure}
We now show that the above two-phase procedure maintains the relation between patterns in $G$ and faces in $G_\mathcal{P}$. Namely, that every pattern in $G$ corresponds to a unique nonempty subset of faces of $G_\mathcal{P}$. To this end, we extend the definition of patterns to faces of $G_\mathcal{B}$. This step is necessary since the peel phase creates faces that do not correspond to primal vertices.

We define the \emph{pattern} of a face $f$ of $G_\mathcal{B}$, denoted $p_f$, to be the length $k-1$ vector where $p_f[i] = -1$ (resp. $p_f[i] = 1$) if $f$ is a face in ${\sf left}(\beta_{i+1})$ (resp. ${\sf right}(\beta_{i+1})$) in $G_\mathcal{B}$. The definition remains the same for any graph we obtain from $G_\mathcal{B}$ during the two-phase procedure. The following two propositions show that this definition is consistent with the original definition of patterns (of vertices).
\begin{proposition}
\label{prop:face-patterns}
Let $v \in V$ be a vertex embedded inside a face $f$ of $G_\mathcal{B}$. Then $p_v = p_f$.
\end{proposition}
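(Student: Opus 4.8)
The plan is to prove $p_v=p_f$ one coordinate at a time. Fix an index $i$ and let $\beta:=\delta(A_i)^*$ be the bisector governing the $i$-th coordinate, where $A_i=\{u\in V : p_u[i]=-1\}$; by the definition of the face pattern, $p_f[i]=-1$ iff $f$ lies in ${\sf left}(\beta)$ and $p_f[i]=1$ iff $f$ lies in ${\sf right}(\beta)$. It therefore suffices to establish two things: (a) for every primal vertex $u$, the sign $p_u[i]$ is determined by which side of $\beta$ the dual face of $u$ lies on, in the same orientation that is used to define $p_f[i]$; and (b) the dual face of $v$ lies on the same side of $\beta$ as the whole face $f$.

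For (a) I would invoke cut-cycle duality. Since $\beta=\delta(A_i)^*$ is, by Corollary~\ref{cor:simple-cycle}, a simple directed dual cycle, all vertices of $A_i$ lie strictly on one of its sides and all vertices of $V\setminus A_i$ strictly on the other. Its orientation is fixed by its first arc $(s_{i+1}s_i)^*$ together with the memberships $s_{i+1}\in A_i$ and $s_i\notin A_i$; from the convention that the dual of $uv$ runs from the face right of $uv$ to the face left of $uv$, one checks that $A_i$ is exactly the set of vertices on the ${\sf left}$ side of $\beta$, which is precisely the orientation used in the definition of $p_f$. Hence for every $u\in V$, $p_u[i]=-1$ iff $u\in{\sf left}(\beta)$ and $p_u[i]=1$ iff $u\in{\sf right}(\beta)$, where $u\in{\sf left}(\beta)$ is shorthand for: the face of $G^*$ representing $u$ lies in ${\sf left}(\beta)$.

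For (b) the argument is purely topological. The bisector $\beta$ is one of the cycles whose union is $G_\mathcal{B}$, so $\beta$ is a subgraph of $G_\mathcal{B}$; consequently every face $f$ of $G_\mathcal{B}$ is a connected subset of the plane disjoint from $\beta$, and by the Jordan curve theorem $f$ lies entirely in ${\sf left}(\beta)$ or entirely in ${\sf right}(\beta)$. Since $v$ is embedded inside $f$, the face of $G^*$ representing $v$ is contained in $f$ and therefore lies on the same side of $\beta$ as $f$. Combining (a) and (b) yields $p_v[i]=p_f[i]$, and since $i$ was arbitrary, $p_v=p_f$.

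I do not expect a genuine obstacle: the proposition is essentially an unwinding of the definitions, once the bisector machinery of Section~\ref{sec:bisect} is available. The only points requiring care are keeping the left/right orientation of $\beta$ consistent between the vertex-pattern definition (through the cut $A_i$) and the face-pattern definition, and the observation that a face of the subgraph $G_\mathcal{B}$ cannot straddle a cycle that is itself contained in $G_\mathcal{B}$; both are routine consequences of planar embeddings and cut-cycle duality that are already implicitly relied upon earlier in the section.
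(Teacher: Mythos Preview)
Your proposal is correct and follows essentially the same approach as the paper's proof: fix a coordinate, use that $G_\mathcal{B}$ is a subgraph of $G^*$ so the face $f$ lies entirely on one side of the bisector, and conclude that $v$ inherits that side and hence the corresponding bit. Your write-up is more explicit than the paper's (which dispatches the whole argument in three lines), particularly in justifying that ${\sf left}(\beta_i)$ corresponds to $A_i$; this orientation check is worth spelling out, though the paper simply takes it as implicit from the definition $\beta_i=\delta(A_i)^*$.
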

\begin{proof}
Let $0\leq i \leq k-2$. If $p_f[i]=1$ then $f$ is in ${\sf right}(\beta_{i+1})$ by definition. Since $G_\mathcal{B}$ is a subgraph of $G^*$, then in $G^*$ $v$ is also embedded in ${\sf right}(\beta_{i+1})$. Hence, $v \in V \setminus A_i$ and therefore $p_v[i] = 1$. A symmetric argument shows that if $p_f[i]=-1$ then $p_v[i]=-1$.
\end{proof}
By Proposition~\ref{prop:face-patterns}, all the faces of $G_\mathcal{B}$ that correspond to a pattern $p_v$ have the same face pattern. Notice that the peel phase does not change the patterns of existing faces. It can only add new faces to the graph, but no vertex of $G$ is embedded in any of these new faces. 
Hence, the relation is preserved after the peel phase. Next we show that after a merge step, every pattern still corresponds to a unique subset of faces (i.e., we show that we do not merge faces that corresponded to different patterns). Consider a single merge step happening at $g$ (as illustrated in Figure \ref{merge-phase}). Denote by $f'$ (resp. $h'$) the face lying to the left of $fg$ (resp. $gh$). Namely, $f'$ and $h'$ are the faces that get merged (a symmetric argument holds when they lie to the right of $fg$ and $gh$). Let $g'$ denote the face obtained by merging $f'$ and $h'$.
\begin{proposition}
\label{prop:merge-phase}
$p_{f'} = p_{h'} = p_{g'}$.
\end{proposition}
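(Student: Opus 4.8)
The plan is to reduce everything to the definition of face patterns and then prove the two equalities $p_{f'}=p_{h'}$ and $p_{g'}=p_{f'}$ (and symmetrically $p_{g'}=p_{h'}$); transitivity then yields the proposition. Recall that for a face $q$ we have $p_q[i]=-1$ iff $q\in{\sf left}(\beta_{i+1})$ and $p_q[i]=1$ iff $q\in{\sf right}(\beta_{i+1})$, so it suffices to show, for every $i$, that the faces in question lie on the same side of $\beta_{i+1}$.

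First I would show $p_{f'}=p_{h'}$ by fixing a bisector $\gamma=\beta_{i+1}$ and arguing that $f'$ and $h'$ lie on the same side of $\gamma$. If $\gamma$ is the bisector $\beta$ whose arcs $fg$ and $gh$ meet at $g$, this holds by the hypothesis of the merge step (both faces are to the left of $\beta$). If $\gamma\neq\beta$ and $g$ is not a vertex of $\gamma$, then $g$ lies in the open interior of ${\sf left}(\gamma)$ or of ${\sf right}(\gamma)$, and every face incident to $g$ — in particular $f'$ and $h'$ — inherits that side. The essential case is $\gamma\neq\beta$ with $g\in\gamma$: here I would invoke the defining hypothesis of a merge step, that no bisector crosses $\beta$ at $g$, so $\gamma$ does not cross $\beta$ at $g$. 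Since after the peel phase two bisectors interact only at vertices, this means that the two arc-ends of $\gamma$ at $g$ occupy, in the rotation around $g$, positions strictly between $fg$ and $gh$ on one side of $\beta$ — say the left side (the right case is symmetric). These two arc-ends split the punctured neighborhood of $g$ into two angular sectors, and the sector containing the directions of both $fg$ and $gh$ is the one \emph{not} pinched between the two arc-ends inside the left half of $\beta$. Hence the face $f'$ (immediately to the left of $fg$) and the face $h'$ (immediately to the left of $gh$) lie in the same sector, i.e. on the same side of $\gamma$. This proves $p_{f'}=p_{h'}$; it is also precisely what makes the merge well defined, since no bisector separates $f'$ from $h'$.

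It remains to show $p_{g'}=p_{f'}$ (the argument for $p_{g'}=p_{h'}$ is identical with $h'$ in place of $f'$). The merge step alters the graph only locally at $g$: it splits $g$, attaches the arcs of each $\gamma\neq\beta$ incident to $g$ to the appropriate copy (which is well defined because $\gamma$ does not cross $\beta$ at $g$, so both such arcs lie on the same side of $\beta$), and replaces the two $\beta$-arcs $fg,gh$ by a single arc $fh$. In the resulting graph every bisector is still a non-crossing cycle — every $\gamma\neq\beta$ is unchanged as a curve, and $\beta$ has merely dropped $g$ from its arc sequence — and $g'$ is by construction a face of this graph. A face contains no edge of the graph in its interior, so $g'$ contains no arc of $\beta_{i+1}$ and therefore lies entirely on one side of $\beta_{i+1}$, for every $i$. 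Since the merge defines $g'$ by combining $f'$ and $h'$, the region $f'$ is contained in $g'$, so the side of $\beta_{i+1}$ containing $g'$ is the one containing $f'$; in particular $g'$ sits to the left of the new arc $fh$ exactly as $f'$ sat to the left of $fg$. Hence $p_{g'}=p_{f'}$, and combined with the previous paragraph, $p_{f'}=p_{h'}=p_{g'}$.

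I expect the main difficulty to be stating the rotation/sector argument of the second paragraph with full rigor, in particular handling the degenerate possibility that a bisector meets $g$ along more than one local branch (so that ``the two arc-ends of $\gamma$ at $g$'' and the notions ${\sf left}(\gamma)$, ${\sf right}(\gamma)$ must be read off a self-touching but non-self-crossing curve); one checks that no such branch crosses $\beta$ at $g$ and applies the sector argument to each branch in turn. The rest — the translation into sides of bisectors, and the ``a face lies on one side of every bisector'' argument for $g'$ — is routine.
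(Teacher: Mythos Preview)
Your proposal is correct and follows essentially the same approach as the paper: show that $f'$ and $h'$ lie on the same side of every bisector (using that no bisector crosses $\beta$ at $g$, together with the post-peel fact that $fg$, $gh$ and their reverses belong to no other bisector), and then that the merged face $g'$ inherits this side. Your final worry about multiple local branches of a bisector at $g$ is unnecessary, since every bisector is a simple cycle (Corollary~\ref{cor:simple-cycle}) and hence has exactly two arc-ends at any vertex it visits.
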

\begin{proof}
Since no bisector crosses $\beta$ at $g$, then $fg$ and $gh$ belong to the same side of every bisector. This, together with the fact that $fg$, $gh$, and their reverses do not belong to any other bisector, implies that $f'$ and $h'$ also belong to the same side of every bisector. Hence $p_{f'}=p_{h'}$. Now consider the arc $fh$ after the merge. 
Since $f$ and $h$ belong to the same side of every bisector as $fg$ (and $gh$), then $g'$ also belongs to the same side of every bisector, hence $p_{f'} = p_{h'} = p_{g'}$. 
\end{proof}
By proposition \ref{prop:face-patterns}, if $f'$ or $h'$ are faces that correspond to $p_v$ then they do not correspond to any $p_u \neq p_v$. By Proposition \ref{prop:merge-phase}, we can set $g'$ to correspond to $p_v$, and the set of faces corresponding to every pattern remains unique. This yields the following corollary.
\begin{corollary}
\label{cor:pattern-graph}
Every pattern of $G$ corresponds to a unique subset of faces of $G_\mathcal{P}$.
\end{corollary}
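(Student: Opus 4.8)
The plan is to prove the corollary by induction on the sequence of elementary operations — the peel phase, and then the individual merge steps — that turn $G_\mathcal{B}$ into $G_\mathcal{P}$, maintaining throughout the following invariant: in the current graph, every face $f$ that contains at least one vertex of $G$ contains vertices of exactly one pattern, namely $p_f$; distinct patterns are realized by disjoint (and nonempty) sets of such faces; and every $v\in V$ lies in a face $f$ with $p_f = p_v$. For the base graph $G_\mathcal{B}$ this invariant is exactly Lemma~\ref{lem:patterns-of-faces} together with Proposition~\ref{prop:face-patterns}: each face of $G_\mathcal{B}$ is a union of faces of $G^*$ hence contains a vertex of $G$, all vertices inside a face $f$ share the pattern $p_f$, and conversely $p_v = p_f$ for every $v$ embedded in $f$.

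First I would check that the peel phase preserves the invariant. Re-embedding the arcs of a maximal subpath $R\subseteq\beta$ onto a curve on the opposite side of $rev(R)$ changes neither the set of vertices of $G$ nor, for a fixed $v\in V$, the side of any bisector $\beta_{i+1}$ on which $v$ lies — that side is determined solely by whether $v\in A_i$ or $v\in V\setminus A_i$, which is independent of the embedding. Consequently $p_f$ is unchanged for every pre-existing face $f$, while the only new faces created carry no vertex of $G$ at all. So after the peel phase the invariant still holds, with the new faces simply being ``empty'' and belonging to no pattern class.

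Next I would check a single merge step, which replaces two faces $f'$ and $h'$ meeting at the split vertex $g$ by one face $g'$ whose interior is the union of those of $f'$ and $h'$. By Proposition~\ref{prop:merge-phase}, $p_{f'}=p_{h'}=p_{g'}$. Hence if both $f'$ and $h'$ are empty, so is $g'$; and if at least one of them is nonempty, then by the invariant all its vertices have pattern $p_{g'}$, and after the merge $g'$ is a nonempty face containing exactly the vertices that were in $f'\cup h'$, all of pattern $p_{g'}$. All other faces and all vertex-in-face containments are untouched, so the invariant is preserved. Since $G_\mathcal{P}$ is reached after finitely many such steps, the invariant holds in $G_\mathcal{P}$; reading it off gives, for each pattern $p$ of $G$, the nonempty and pairwise-disjoint set of faces of $G_\mathcal{P}$ whose embedded vertices have pattern $p$ — the unique subset promised by Corollary~\ref{cor:pattern-graph}. (Nonemptiness of this set for every pattern $p$ follows because the faces of $G_\mathcal{P}$ tile the plane, so any vertex $v$ with $p_v=p$ lies in one of them.)

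I expect essentially no substantive obstacle: Propositions~\ref{prop:face-patterns} and~\ref{prop:merge-phase} carry all of the geometric content. The one place that calls for care is the bookkeeping involving the empty faces manufactured during the peel phase (or during earlier merges): a merge step may fuse such an empty face with a nonempty one, and one must be sure this does not corrupt the nonempty face's pattern class. This is exactly why it matters that Proposition~\ref{prop:merge-phase} is proved for arbitrary faces of the bisector graph, without assuming they contain primal vertices, and why the invariant identifies a face's class through $p_f$ rather than through the particular vertices it happens to enclose.
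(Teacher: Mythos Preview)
Your proposal is correct and follows essentially the same approach as the paper: the paper's argument (contained in the paragraph immediately preceding the corollary) uses Proposition~\ref{prop:face-patterns} for the base case $G_\mathcal{B}$, notes that the peel phase only adds empty faces without altering existing face patterns, and then invokes Proposition~\ref{prop:merge-phase} to show each merge step preserves the correspondence. Your write-up is a more explicit and carefully structured version of this same inductive argument, and your attention to the bookkeeping around empty faces (and the observation that Proposition~\ref{prop:merge-phase} is stated for face patterns $p_f$ rather than for contained vertices) is exactly the right way to make the terse original argument watertight.
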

Finally, we show that the number of faces in $G_\mathcal{P}$ depends linearly on the number of bisector crossings. 
Let $t$ be the total number of bisector crossings in $G_\mathcal{P}$. That is, $t$ is the sum of the number of crossings between all pairs of bisectors.
\begin{lemma}
\label{lem:faces-bounded-by-crossings}
The number of faces in $G_\mathcal{P}$ is $O(t+k)$.
\end{lemma}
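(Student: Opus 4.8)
The plan is to bound $|F(G_\mathcal{P})|$ through Euler's formula, reducing the problem to bounding the numbers of vertices and edges of $G_\mathcal{P}$. First I would observe that $G_\mathcal{P}$ is a connected plane multigraph: $G_\mathcal{B}$ is connected since every bisector passes through $f_\infty$, the peel phase only re-embeds arcs, and each merge step replaces a two-arc passage $f\!-\!g\!-\!h$ of a bisector through $g$ by a single arc $fh$, so it neither disconnects the graph nor destroys the cyclic structure of any bisector. Hence Euler's formula gives $|F(G_\mathcal{P})| = |E(G_\mathcal{P})| - |V(G_\mathcal{P})| + 2$, and it remains to show $|E(G_\mathcal{P})| - |V(G_\mathcal{P})| = O(t+k)$.

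For the vertices, I would invoke the defining property of the merge phase: after it, every vertex $g \neq f_\infty$ that lies on a bisector $\beta$ is crossed at $g$ by some other bisector (otherwise the merge step would have smoothed $\beta$ through $g$). Since $G_\mathcal{P}$ is the union of the bisectors, every vertex other than $f_\infty$ therefore hosts at least one bisector crossing, and crossings at distinct vertices are distinct. Hence $|V(G_\mathcal{P})| \le t + 1$.

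For the edges, I would use that distinct bisectors are edge-disjoint in $G_\mathcal{P}$ (Lemma~\ref{lem:no-multiplicities} rules out shared arcs and the peel phase removes shared reversed arcs) and that each $\beta_i$ remains a simple cycle (Corollary~\ref{cor:simple-cycle}, noting that the peel phase preserves the cyclic arc-sequence of a bisector and the merge phase only contracts passages through vertices where it is not crossed). A simple cycle on $1+c_i$ vertices carries exactly $1+c_i$ edges, where $c_i$ is the number of crossing vertices on $\beta_i$ (every bisector also passes through $f_\infty$), so $|E(G_\mathcal{P})| = (k-1) + \sum_{i=1}^{k-1} c_i$. Counting bisector–vertex incidences from the other side, $\sum_i c_i = \sum_g m_g$, where $g$ ranges over crossing vertices and $m_g$ is the number of bisectors through $g$. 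At each such $g$, every one of the $m_g$ bisectors is crossed there by another one and each crossing involves two of them, so the number $t_g$ of crossings at $g$ satisfies $t_g \ge m_g/2$; summing, $\sum_g m_g \le 2\sum_g t_g = 2t$. This gives $|E(G_\mathcal{P})| \le (k-1) + 2t$, and combined with the vertex bound, $|F(G_\mathcal{P})| \le (k-1) + 2t - 1 + 2 = 2t + k = O(t+k)$.

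The calculations here (Euler's formula and the double counting of incidences) are routine; the part that needs the most care is justifying the structural prerequisites — that after the peel and merge phases every vertex other than $f_\infty$ is genuinely the site of a bisector crossing, and that each bisector survives as a simple cycle edge-disjoint from the others. I would isolate these as explicit observations derived directly from the definitions of the two phases (together with the already-noted fact that after the peel phase crossings and touchings occur only at vertices) before running the counting argument.
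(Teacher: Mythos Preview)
Your proof is correct and follows essentially the same approach as the paper: reduce via Euler's formula to bounding the number of arcs, then use the defining property of the merge phase (every vertex $g\neq f_\infty$ on a bisector is the site of a crossing of that bisector) to bound the arcs by $O(t+k)$. The paper's bookkeeping charges each arc directly to a specific crossing at its tail (at most two arcs per crossing), whereas you pass through the incidence sum $\sum_i c_i = \sum_g m_g \le 2t$; note that your stated vertex bound $|V(G_\mathcal{P})|\le t+1$ is never actually used in your final inequality---only the trivial $|V(G_\mathcal{P})|\ge 1$.
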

\begin{proof}
By Euler's formula, it suffices to show that the number of arcs in $G_\mathcal{P}$ is $O(t+k)$. For every arc $fg$ in $G_\mathcal{P}$, where neither $f$ nor $g$ is $f_\infty$, the arc $fg$ belongs to some bisector $\beta$. Moreover, there must exist some other bisector that crosses $\beta$ at $f$. Otherwise, the arc would have been removed in the merge phase. Consider all the bisectors that cross $\beta$ at $f$ in a clockwise order around $f$ starting at $fg$. Let $\beta'$ be the one following $fg$. Then we charge $fg$ to the crossing of $\beta$ and $\beta'$ at $f$. Notice that at most two arcs will be charged to this crossing of $\beta$ and $\beta'$ (the arc $fg$ and the arc of $\beta'$ whose tail is $f$). 
Overall, we have charged $O(t)$ arcs. 
The only arcs that did not get charged are the $2(k-1)$ arcs incident to $f_\infty$. Therefore, the number of arcs in $G_\mathcal{P}$ is $O(t+k)$.
\end{proof}

In the next subsection, we prove that every pair of bisectors can cross at most $O(k)$ times. Since there are $O(k^2)$ pairs of bisectors, the total number of crossings is then $t=O(k^3)$, which by Corollary \ref{cor:pattern-graph} and Lemma \ref{lem:faces-bounded-by-crossings} implies Theorem \ref{thm:main-theorem}.

\subsection{Two bisectors can cross only $O(k)$ times}\label{sec:twobisectorscrossktimes}

Let $\beta_i$ and $\beta_j$ be two bisectors in $G^*$ that cross each other at least once. Let $R_1 , R_2 , \ldots R_r$ be their crossing parts that do not contain $f_\infty$, sorted by their order of appearance along $\beta_i$. We note that since $\beta_i$ and $\beta_j$ are simple cycles, the crossing parts must be disjoint. In Lemma \ref{lem:opposite-orientation} we show that the crossing parts appear in reverse order along $\beta_j$, and in Lemma \ref{lem:bounded-crossing} we use this fact to prove that the number of crossings $r$ is at most $O(k)$. 
We begin by defining an important configuration of bisectors and shortest paths. 
\begin{figure}[htb]
\centering
\includegraphics[scale=1]{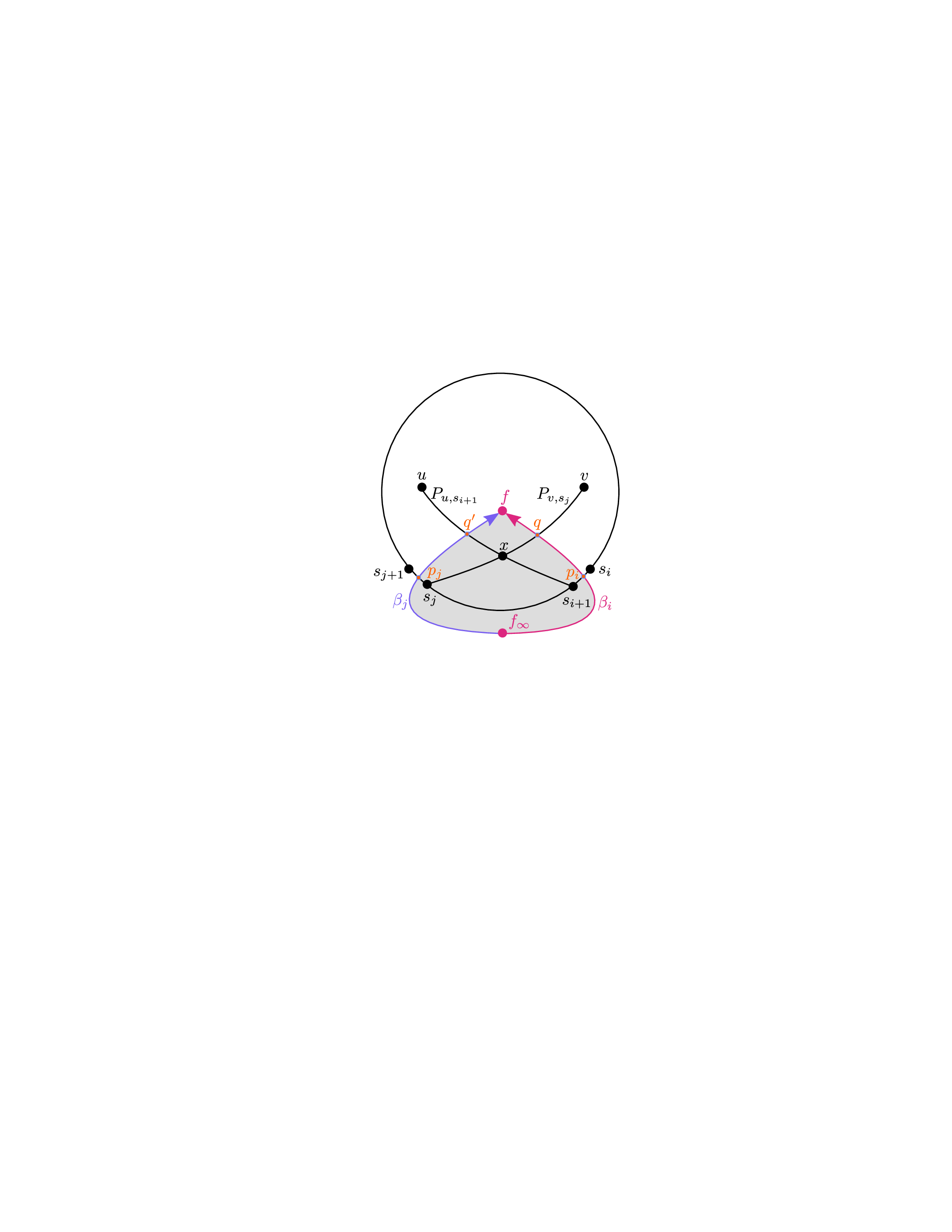}
\caption{A configuration of bisectors $\beta_i,\beta_j$ and shortest paths $P_{u,s_{i+1}},P_{v,s_j}$ that must intersect. ${\sf right}(C^*)$ is shaded. \label{crossing-config}}
\end{figure}

\begin{lemma}
\label{lem:crossing-config}
Let $C^* = \beta_j[f_\infty,f] \circ rev(\beta_i[f_\infty,f])$ be a simple cycle, and let $u \in A_i ,v \in V \setminus A_j$ (resp. $u \in V \setminus A_i, v \in A_j$) be two vertices in ${\sf left}(C^*)$ (resp. ${\sf right}(C^*)$). Then $P_{u,s_{i+1}}$ and $P_{v,s_j}$ (resp. $P_{u,s_i}$ and $P_{v,s_{j+1}}$) must intersect at some vertex $x$.
\end{lemma}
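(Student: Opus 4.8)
The plan is to argue by contradiction with the Jordan curve theorem, in the spirit of the proof of Lemma~\ref{lem:no-multiplicities}, but now with no metric ingredient at all: the statement should be purely topological. Suppose, for contradiction, that $P_{u,s_{i+1}}$ and $P_{v,s_j}$ are vertex-disjoint. The first step is to locate $s_{i+1}$ and $s_j$ relative to $C^*$. Since $(s_{i+1}s_i)^*$ is the first arc of $\beta_i$ and $(s_{j+1}s_j)^*$ is the first arc of $\beta_j$, the cycle $C^*=\beta_j[f_\infty,f]\circ rev(\beta_i[f_\infty,f])$ leaves $f_\infty$ across the boundary edge $\{s_j,s_{j+1}\}$ (its first arc is $(s_{j+1}s_j)^*$) and re-enters $f_\infty$ across $\{s_i,s_{i+1}\}$ (its last arc is $rev((s_{i+1}s_i)^*)=(s_is_{i+1})^*$). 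Tracing through the orientation conventions for the dual graph and for $rev(\cdot)$ --- this is exactly the configuration drawn in Figure~\ref{crossing-config} --- one obtains $s_{i+1},s_j\in{\sf right}(C^*)$ and $s_i,s_{j+1}\in{\sf left}(C^*)$. Viewing ${\sf right}(C^*)$ as a topological disk bounded by $C^*$, and using that $\{s_i,s_{i+1}\}^*$ and $\{s_j,s_{j+1}\}^*$ are sub-arcs of $C^*$, we may regard $s_{i+1}$ as a point of $C^*$ lying on $\beta_i[f_\infty,f]$ near its $f_\infty$-endpoint, and $s_j$ as a point of $C^*$ lying on $\beta_j[f_\infty,f]$ near its $f_\infty$-endpoint. (If $j=i+1$ then $s_{i+1}=s_j$ and the two paths already meet, so assume $j\neq i+1$.)

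Next I would control how each path enters this disk. By Lemma~\ref{lem:paths-contained} every vertex of $P_{u,s_{i+1}}$ lies in $A_i$, so every edge of $P_{u,s_{i+1}}$ has both endpoints in $A_i$ and therefore neither of its dual arcs lies on $\beta_i=\delta(A_i)^*$; in other words, as a curve in the plane $P_{u,s_{i+1}}$ does not meet $\beta_i$, hence does not meet the portion $rev(\beta_i[f_\infty,f])$ of $C^*$. Since $u\in{\sf left}(C^*)$ and $s_{i+1}\in{\sf right}(C^*)$, the Jordan curve theorem forces $P_{u,s_{i+1}}$ to cross $C^*$, so every such crossing occurs along $\beta_j[f_\infty,f]$. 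Symmetrically, $P_{v,s_j}$ lies entirely in $V\setminus A_j$, so it is curve-disjoint from $\beta_j$, hence crosses $C^*$ only along $\beta_i[f_\infty,f]$, and it does cross because $v\in{\sf left}(C^*)$ and $s_j\in{\sf right}(C^*)$. (If $P_{u,s_{i+1}}$ used the edge $\{s_j,s_{j+1}\}$ it would pass through $s_j$, and similarly if $P_{v,s_j}$ used $\{s_i,s_{i+1}\}$ it would pass through $s_{i+1}$; either would make the two paths meet, so we may assume neither happens, which places all crossings away from the $f_\infty$-ends of the two sub-arcs.)

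For the final step, let $z_1$ be the last point at which $P_{u,s_{i+1}}$ crosses $C^*$; then $z_1$ lies on $\beta_j[f_\infty,f]$, strictly between $s_j$ and $f$, and the portion $Q_1$ of $P_{u,s_{i+1}}$ from $z_1$ to $s_{i+1}$ is a simple arc contained in $\overline{{\sf right}(C^*)}$. Symmetrically define $z_2$ on $\beta_i[f_\infty,f]$, strictly between $f$ and $s_{i+1}$, and the arc $Q_2\subseteq\overline{{\sf right}(C^*)}$ of $P_{v,s_j}$ from $z_2$ to $s_j$. Traversing the boundary circle $C^*$ of the disk from $f_\infty$ along $\beta_j[f_\infty,f]$ we meet, in order, $s_j$, then $z_1$, then the corner $f$, then (along $\beta_i[f_\infty,f]$) $z_2$, then $s_{i+1}$, and back to $f_\infty$; hence the endpoint pair $\{z_1,s_{i+1}\}$ of $Q_1$ interleaves with the endpoint pair $\{z_2,s_j\}$ of $Q_2$ on this circle. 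Two simple boundary-to-boundary arcs in a disk whose endpoints interleave must intersect, so $Q_1$ and $Q_2$ intersect --- hence so do $P_{u,s_{i+1}}$ and $P_{v,s_j}$, contradicting our assumption. The ``resp.'' version follows by interchanging the roles of $A_i$ and $V\setminus A_i$ (equivalently, of $s_i$ and $s_{i+1}$) throughout.

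I expect the main obstacle to be the bookkeeping in the first step: establishing rigorously, purely from the dual-graph and $rev(\cdot)$ conventions together with the choice of the ``first'' arc of each bisector, that $s_{i+1},s_j\in{\sf right}(C^*)$ and that they occupy the claimed cyclic positions at the two $f_\infty$-ends of the sub-arcs $\beta_i[f_\infty,f]$ and $\beta_j[f_\infty,f]$. Everything afterwards is a routine planar/Jordan-curve argument; one should merely double-check the handful of degenerate cases (a path passing through $s_i$, $s_{i+1}$ or $s_j$, and $j=i+1$), each of which immediately yields the desired intersection.
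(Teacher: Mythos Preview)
Your proposal is correct and follows essentially the same approach as the paper: locate $s_{i+1},s_j\in{\sf right}(C^*)$, use Lemma~\ref{lem:paths-contained}/Corollary~\ref{cor:path-disjoint-bisector} to show each path can only cross $C^*$ along the \emph{other} bisector's sub-arc, take the last such crossing, and finish with an interleaving-chords-in-a-disk argument. The only notable difference is presentational: the paper makes the mixed primal--dual topology precise by embedding $G$ and $G^*$ together and introducing the midpoints $p_i,p_j$ of $\{s_i,s_{i+1}\}$ and $\{s_j,s_{j+1}\}$ as explicit boundary points of the disk, whereas you informally ``regard $s_{i+1}$ and $s_j$ as points of $C^*$''; your acknowledged bookkeeping obstacle is exactly what that formalism resolves.
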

\begin{proof}
We focus on the case where $u \in A_i$ and $v \in V \setminus A_j$ are vertices in ${\sf left}(C^*)$ (the proof of the other case is symmetric). See Figure \ref{crossing-config}.  
We assume that $G$ and $G^*$ are embedded on the same surface, such that for every $wy \in \mathcal{A}$, the curves of $wy$ and $(wy)^*$ intersect in their middles at a single \emph{point} $p$ on the surface. See Figure \ref{dual-primal-point}. 
\begin{figure}[htb]
\centering
\includegraphics[scale=0.9]{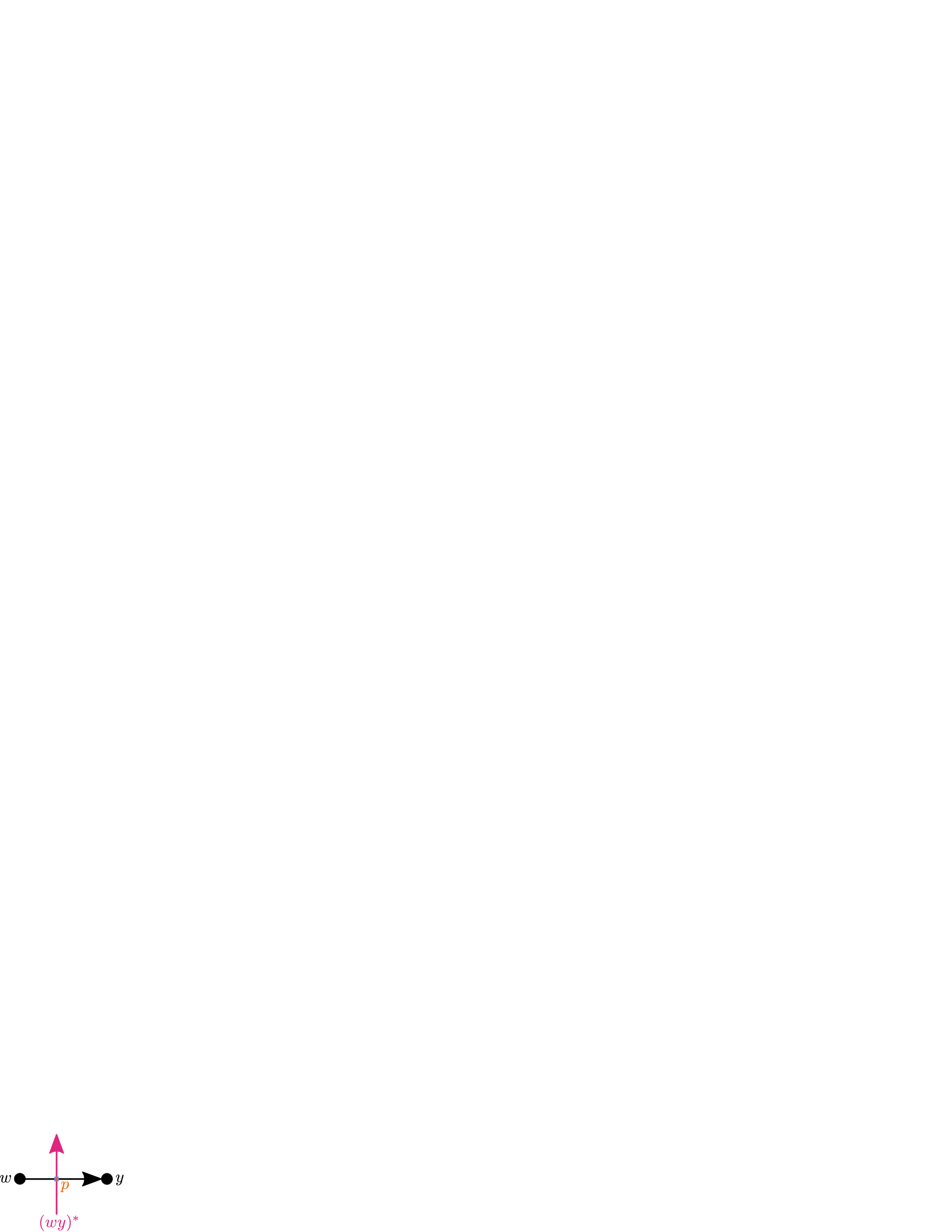}
\caption{An arc $wy \in \mathcal{A}$ that intersects with its dual $(wy)^* \in \mathcal{A}^*$ at a middle point $p$.\label{dual-primal-point}}
\end{figure}
We refer to the two parts of the curve of $uv$ as $(u-p)$ and $(p-v)$. For a path $Q$ that contains arc $uv$, we slightly abuse notation and use $Q[\cdot,p]$ and $Q[p,\cdot]$ to denote a prefix and suffix of the curve of $Q$. In addition, we say that a path $P$ of the primal graph crosses a path $Q$ of the dual graph, if $P$ contains an arc $uv$ whose dual or reversed dual is in $Q$. In particular, it means that there exist a common point $p$ (in the middle of $uv$), such that $(u-p)$ and $(p-v)$ are on different sides of $Q$.
Let $p_j$ be the point in the middle of $s_j s_{j+1}$, and let $p_i$ be the point in the middle of $s_i s_{i+1}$. 

Notice that $s_{i+1},s_j \in {\sf right}(C^*) \setminus C^*$ by definition, and that $v \in {\sf left}(C^*) \setminus C^*$ by assumption (and the fact that $v$ is not part of $C^*$ because $v$ is a primal vertex). Hence, $P_{v,s_j}$ must cross $C^*$. However, by Corollary \ref{cor:path-disjoint-bisector} it cannot cross $\beta_j[f_\infty,f]$, hence it must cross $\beta_i[f_\infty,f]$. This means that there is a point $q$ that is common to both $P_{v,s_j}$ and $\beta_i[f_\infty,f]$. In particular, let $q$ be the last point along the curve of $P_{v,s_j}$ that is also along the curve of $C^*$. Notice that $P_{v,s_j}[q,s_j] \circ (s_j- p_j) $ is a chord inside the cycle $C^*$. Similarly, there exists a point $q'$ along the curve of $P_{u,s_{i+1}}$ such that $P_{u,s_{i+1}}[q',s_{i+1}] \circ (s_{i+1} - p_i)$ is a chord in $C^*$. The endpoints of the chords appear in clockwise order along $C^*$ as $(q,p_i,p_j,q')$. 
It is well known that two chords in such a configuration must intersect. Therefore, there exist a primal vertex $x$ that is common to both $P_{v,s_j}[q,s_j]$ and $P_{u,s_{i+1}}[q',s_{i+1}]$. 
\end{proof}

It is important to remark that Lemma \ref{lem:crossing-config} holds even when the cycle $C^*$ is a non-self-crossing non-simple cycle. Namely, if $C^*$ intersects with itself, we let $g$ be the first intersection vertex in $\beta_i[f_\infty,f]$, and define a cycle $\hat{C}^* = \beta_j[f_\infty,g] \circ rev(\beta_i[f_\infty,g])$. Note that since $\beta_i$ and $\beta_j$ are simple cycles, and by our choice of $g$, there are no intersections in $\hat{C}^*$. Clearly, we also have that $v,u \in {\sf left}(\hat{C}^*)$. Thus, we apply the Lemma to $\hat{C}^*$ instead of $C^*$.

\begin{corollary}
\label{cor:path-cross-bisector}
$P_{v,s_j}[v,x]$ (resp. $P_{u,s_{i+1}}[u,x]$) contains an edge whose dual is in $\beta_i$ (resp. $\beta_j$).
\end{corollary}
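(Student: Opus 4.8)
The plan is to extract Corollary~\ref{cor:path-cross-bisector} directly from the internals of the proof of Lemma~\ref{lem:crossing-config}, where the desired edge is essentially already produced. Recall how the intersection vertex $x$ was obtained there: since $v\in{\sf left}(C^*)\setminus C^*$ and $s_j\in{\sf right}(C^*)\setminus C^*$, the path $P_{v,s_j}$ meets $C^*$; by Corollary~\ref{cor:path-disjoint-bisector} applied to $v\in V\setminus A_j$, no edge of $P_{v,s_j}$ has a dual in $\beta_j$, so the curve of $P_{v,s_j}$ meets $C^*$ only along the $\beta_i[f_\infty,f]$ portion. Taking $q$ to be the \emph{last} point of $P_{v,s_j}$ on $C^*$, the ``two chords must cross'' step then forces $x$ to lie on the subcurve $P_{v,s_j}[q,s_j]$. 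Symmetrically, the proof produces a point $q'$ at which $P_{u,s_{i+1}}$ meets $\beta_j[f_\infty,f]$ (now using Corollary~\ref{cor:path-disjoint-bisector} with $u\in A_i$ to exclude $\beta_i$), with $x$ on $P_{u,s_{i+1}}[q',s_{i+1}]$.

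Given this, the argument is short. First I would note that $q$ is not a primal vertex: it lies on the curve of the dual cycle $\beta_i$, which, in the fixed transversal embedding of $G$ and $G^*$ (each primal arc crosses its dual at a single interior point, and distinct primal arcs meet only at shared endpoints), passes only through dual vertices and through midpoints of primal edges; since $q$ also lies on the primal path $P_{v,s_j}$, it must be the midpoint of some primal edge $e=\{a,b\}$ that $P_{v,s_j}$ traverses. In particular $q\neq x$, so $q$ strictly precedes $x$ along the simple path $P_{v,s_j}$, and hence the arc of $P_{v,s_j}$ having $q$ as its midpoint is a sub-arc of the prefix $P_{v,s_j}[v,x]$; that is, $e$ is an edge of that prefix. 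Finally, since $q\in\beta_i$, one of the two dual arcs of $e$ belongs to $\beta_i$, which is exactly the claim for $P_{v,s_j}[v,x]$. The statement for $P_{u,s_{i+1}}[u,x]$ and $\beta_j$ follows verbatim from $q'$.

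I do not expect a genuine obstacle here: everything reduces to bookkeeping on objects already built inside the proof of Lemma~\ref{lem:crossing-config}. The only point that needs to be stated with a little care is the separation $q\neq x$, which is why I would first pin down that a primal path and a dual cycle can meet only at edge-midpoints; once $q$ is known to precede $x$ strictly, the containment of the witnessing edge in the prefix $P_{v,s_j}[v,x]$ is immediate.
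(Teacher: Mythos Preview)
Your proposal is correct and matches the paper's intent: the corollary is stated there without a separate proof, precisely because the witnessing edge is already produced inside the proof of Lemma~\ref{lem:crossing-config} via the point $q$ (resp.\ $q'$) on $\beta_i[f_\infty,f]$ (resp.\ $\beta_j[f_\infty,f]$), with $x$ lying on $P_{v,s_j}[q,s_j]$ (resp.\ $P_{u,s_{i+1}}[q',s_{i+1}]$). Your care in noting that $q$ is an edge-midpoint (hence $q\neq x$, so the edge through $q$ lies in the prefix up to $x$) is exactly the small bookkeeping the paper leaves implicit.
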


We are now in the position to prove the two main lemmas of this section. 

\begin{lemma}
\label{lem:opposite-orientation}
Let $\beta_i$ and $\beta_j$ be two crossing bisectors. Let $R_1 , R_2 , \ldots R_r$ be their crossing parts along $\beta_i$. Then, the crossing parts along $\beta_j$ are reversed $R_r , R_{r-1} , \ldots R_{1}$.
\end{lemma}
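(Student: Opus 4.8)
The plan is to prove Lemma~\ref{lem:opposite-orientation} by contradiction, using Lemma~\ref{lem:crossing-config} as the main tool. Suppose two consecutive crossing parts $R_a$ and $R_{a+1}$ (consecutive along $\beta_i$) do \emph{not} appear in reversed order along $\beta_j$; by a standard argument it suffices to rule out this local violation, since if all pairs of consecutive crossings are correctly ordered the whole sequence is reversed. So I would isolate the portion of $\beta_i$ between $R_a$ and $R_{a+1}$, and the corresponding portion of $\beta_j$, and build a closed dual curve $C^*$ of the form $\beta_j[f_\infty,f]\circ rev(\beta_i[f_\infty,f])$ for a suitable crossing vertex $f$. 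Concretely, the crossing vertex $f$ will be chosen as (an endpoint of) one of the two crossing parts $R_a, R_{a+1}$, so that $C^*$ captures exactly the ``loop'' that forces the orientation.

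The key geometric step is this: a crossing part, by definition, is a place where $\beta_j$ passes from ${\sf left}(\beta_i)$ to ${\sf right}(\beta_i)$. Since the cut $A_i$ has ${\sf left}(\beta_i)$ and ${\sf right}(\beta_i)$ containing vertices with $p[i]=-1$ and $p[i]=1$ respectively (and analogously for $\beta_j$ and index $j$), walking along $\beta_j$ across a crossing part $R$ of $\beta_i$ means the $i$-th coordinate of the face-patterns flips. If two consecutive crossing parts along $\beta_i$ were in the \emph{same} order along $\beta_j$, then on the cycle $C^*$ formed by the relevant arcs of $\beta_i$ and $\beta_j$ one can locate vertices $u,v$ (primal vertices embedded in faces adjacent to $C^*$ on the appropriate sides) with $u\in A_i$, $v\in V\setminus A_j$ on the left of $C^*$ (or the symmetric configuration on the right) — precisely the hypothesis of Lemma~\ref{lem:crossing-config}. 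That lemma then forces $P_{u,s_{i+1}}$ and $P_{v,s_j}$ to intersect at some $x$. But then Corollary~\ref{cor:path-cross-bisector} says $P_{u,s_{i+1}}[u,x]$ contains an edge dual to an edge of $\beta_j$, contradicting Corollary~\ref{cor:path-disjoint-bisector} (which says shortest paths to $s_{j+1}$ avoid $\beta_j$, and symmetrically) — or more directly one reruns the triangle-inequality computation from the proof of Lemma~\ref{lem:no-multiplicities} to get a numerical contradiction like $0\le -2$. The orientation-reversal conclusion then follows.

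The main obstacle I anticipate is bookkeeping the embedding carefully enough to legitimately extract the primal vertices $u$ and $v$ sitting on the correct sides of $C^*$ with the correct membership in $A_i$ and $A_j$. Unlike in Lemma~\ref{lem:no-multiplicities}, where $u$ and $v$ were endpoints of a single shared dual arc, here $u$ and $v$ must be produced from the \emph{structure of the crossing parts themselves}: a crossing part $R_a$ of $\beta_i$ and $\beta_j$ has, on one side, a face whose primal vertices lie in $A_i\cap A_j$ (or the appropriate combination), and I need to argue that the ``same-order'' assumption places one such witness in ${\sf left}(C^*)$ and the other crossing part's witness also in ${\sf left}(C^*)$ (rather than them landing on opposite sides, which is what the \emph{correct}, reversed order would give). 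Making this side-determination rigorous — essentially a Jordan-curve / winding argument on the non-self-crossing (possibly non-simple) cycle $C^*$, handled via the $\hat C^*$ trick noted after Lemma~\ref{lem:crossing-config} — is the delicate part. The rest (invoking Lemma~\ref{lem:crossing-config}, then deriving the contradiction) is essentially mechanical given the earlier lemmas.
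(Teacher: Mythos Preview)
Your overall strategy --- contradiction, build a dual cycle $C^* = \beta_j[f_\infty,f]\circ rev(\beta_i[f_\infty,f])$, invoke Lemma~\ref{lem:crossing-config}, and finish via Corollaries~\ref{cor:path-cross-bisector} and~\ref{cor:path-disjoint-bisector} --- is exactly the paper's approach. The execution, however, differs in a way that creates a real gap.

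You take $R_a,R_{a+1}$ \emph{consecutive along $\beta_i$}. The paper instead takes $R_a,R_b$ with $a<b$ such that $R_b$ is the crossing part \emph{immediately following $R_a$ along $\beta_j$}, and with $a$ minimal among all such pairs. Both conditions are load-bearing. Minimality of $a$ is precisely what guarantees that $C^*=\beta_j[f_\infty,R_a]\circ rev(\beta_i[f_\infty,R_a])$ is non-self-crossing: if some $R_c$ with $c<a$ lay on $\beta_j[f_\infty,R_a]$, one extracts a violating pair with smaller first index. Consecutiveness along $\beta_j$ is what makes the auxiliary cycle $\hat C^* = \beta_i[R_a,R_b]\circ rev(\beta_j[R_a,R_b])$ non-crossing, which the paper uses in a parity argument to place $R_b$ on the correct side of $C^*$. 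With your choice (consecutive along $\beta_i$, no minimality) neither property is available, and the $\hat C^*$ remark after Lemma~\ref{lem:crossing-config} only handles non-simple but \emph{non-self-crossing} cycles --- it does not rescue a genuinely self-crossing $C^*$. So the side-determination step you flagged as delicate does not go through as stated.

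Two smaller points where the paper's execution is sharper. First, the paper uses a \emph{single} primal vertex $v$ (the head of the arc of $\beta_j$ just past $R_b$) in both roles of Lemma~\ref{lem:crossing-config}; there is no need to manufacture two separate witnesses. Second, your final contradiction is not quite right: $P_{u,s_{i+1}}[u,x]$ containing an edge of $\beta_j$ does not directly contradict Corollary~\ref{cor:path-disjoint-bisector}, which speaks only of shortest paths to $s_j$ or $s_{j+1}$. The paper closes by observing $|P_{v,s_{i+1}}[v,x]|=|P_{v,s_j}[v,x]|$, so $P_{v,s_{i+1}}[v,x]\circ P_{v,s_j}[x,s_j]$ is a shortest $v$-to-$s_j$ path that contains an edge of $\beta_j$, and \emph{that} contradicts Corollary~\ref{cor:path-disjoint-bisector}.
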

\begin{proof}

We assume that $i<j$. For $\ell \leq r$, we say that $\beta_j$ enters $R_\ell$ from ${\sf left}(\beta_i)$ (resp. ${\sf right}(\beta_i)$) if the arc of $\beta_j$ that precedes $R_\ell$ is in ${\sf left}(\beta_i)$ (resp. ${\sf right}(\beta_i)$). For brevity, we assume that every $R_\ell$ is a single vertex in $V^*$. Finally, we assume without loss of generality that $s_j$ is in ${\sf left}(\beta_i)$ (the proof of the other case is symmetric).

Assume for the sake of contradiction that the order of appearance is not the reverse order. 
Then there exists a pair of crossing parts $R_a$ and $R_b$ such that: (1) $a < b$, (2) $R_b$ is the crossing part following $R_a$ in $\beta_j$, and (3) $a$ is minimal among such pairs. Consider the cycle $C^* = \beta_j[f_\infty,R_a] \circ rev(\beta_i[f_\infty,R_a])$. Note that $C^*$ is non-crossing since if there exists $R_c$ for $c<a$ that $\beta_j[f_\infty,R_a]$ crosses, then $a$ wouldn't be minimal. We have two cases to consider: 

\begin{figure}[htb]
\centering
\subfloat{\includegraphics[scale=1]{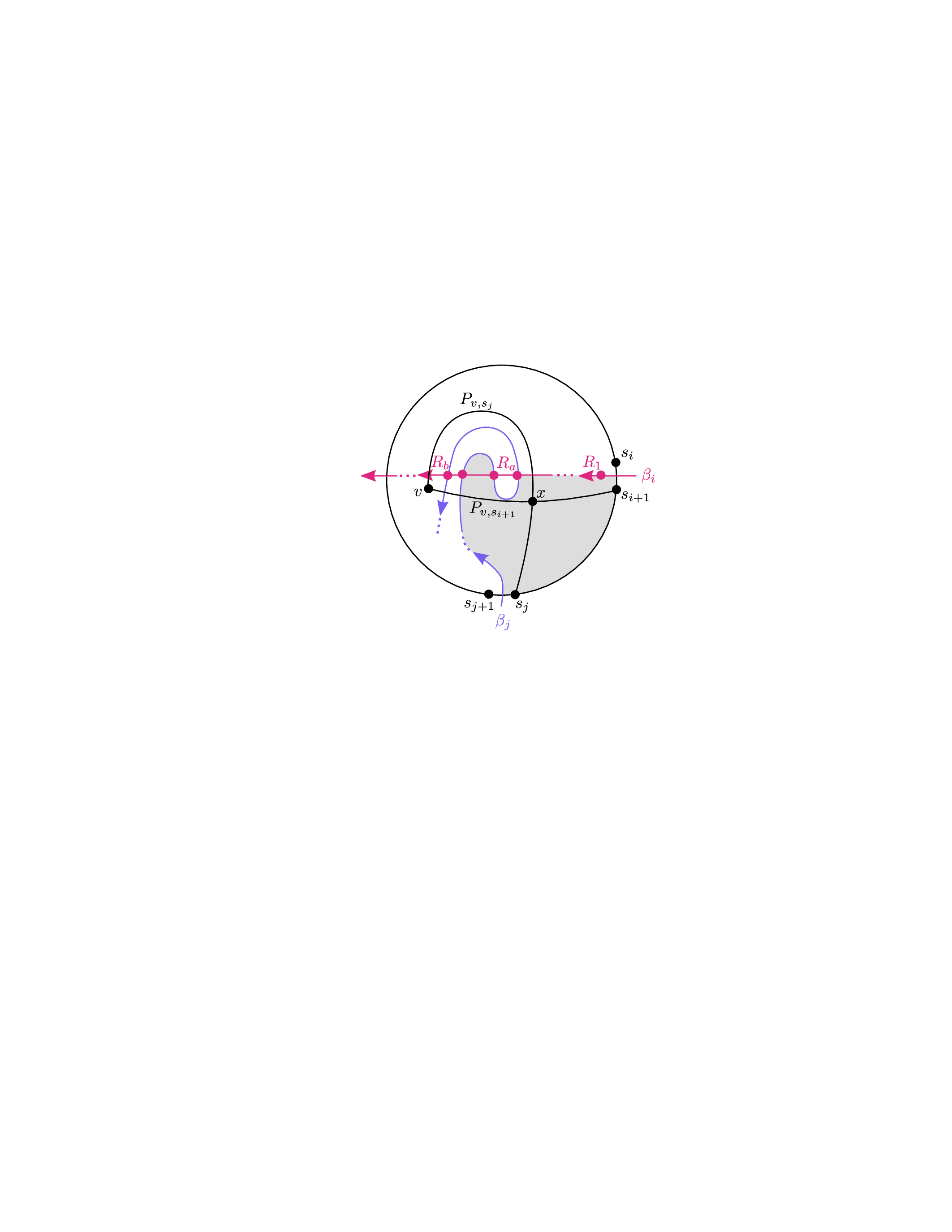}}
\hspace{1cm}
\caption{Case 1 in the proof of Lemma \ref{lem:opposite-orientation}. The dotted parts represent parts in which there may be crossings. ${\sf right}(C^*)$ is shaded. In this example, $P$ is crossed by $\beta_j[f_\infty,R_a]$ exactly twice. \label{op-proof-case1}}
\end{figure}

\noindent \underline{Case 1}: $\beta_j$ enters $R_a$ from ${\sf left}(\beta_i)$ (hence it enters $R_b$ from ${\sf right}(\beta_i)$). Let $(uv)^*$ be the arc of $\beta_j$ that follows $R_b$. We next show that $v$ and $C^*$ form the configuration of Lemma \ref{lem:crossing-config}, in the special case of $v=u$. 

First we show that $v \in V \setminus A_j$ and $v \in A_i$. Note that $v \in V \setminus A_j$ by definition, since $v$ is the primal vertex lying to the right of an arc of $\beta_j$. Since $(uv)^*$ follows the crossing part $R_b$, and since $\beta_j$ enters $R_b$ from ${\sf right}(\beta_i)$, then $(uv)^* \in {\sf left}(\beta_i) \setminus \beta_i$. Hence, $v \in {\sf left}(\beta_i)$ and therefore $v \in A_i$.

Next we show that $v \in {\sf left}(C^*)$. For this, we show that $R_b \in {\sf left}(C^*) \setminus C^*$, hence the arcs incident to $R_b$ and their corresponding primal vertices are in ${\sf left}(C^*)$. Consider the path $P = \beta_i[R_a, R_b]$. The first arc of $P$ is in ${\sf left}(C^*)$ by the case assumption. To show that $R_b \in {\sf left}(C^*) \setminus C^*$ we will show that $P$ crosses the cycle $C^*$ an even number of times. Notice that $P$ does not cross $\beta_i[f_\infty,R_a]$ since $\beta_i$ is a simple cycle. It therefore remains to show that $P$ crosses $\beta_j[f_\infty,R_a]$ an even number of times (equivalently, we show that $\beta_j[f_\infty,R_a]$ crosses $P$ an even number of times). To this end, let us define a cycle $\hat{C}^* = P \circ rev(\beta_j[R_b,R_a])$. Notice that $\hat{C}^*$ is non-crossing since $R_b$ follows $R_a$ in $\beta_j$. 
Notice that the first and last arcs of $\beta_j[f_\infty,R_a]$ are both in ${\sf left}(\beta_i)$. Also notice that ${\sf right}(\hat{C}^*)$ is included in ${\sf right}(\beta_i)$ (by the case assumption), therefore the first and last arcs of $\beta_j[f_\infty,R_a]$ are in ${\sf left}(\hat{C}^*)$. Hence, $\beta_j[f_\infty,R_a]$ crosses $\hat{C}^*$ an even number of times. Since $\beta_j[f_\infty,R_a]$ can only cross $\hat{C}^*$ at $P$ (as $\beta_j$ is a simple cycle), it must cross $P$ an even number of times.

We can thus apply Lemma \ref{lem:crossing-config}, and conclude that $P_{v,s_j}$ and $P_{v,s_{i+1}}$ intersect at vertex $x$, and that $P_{v,s_{i+1}}[v,x]$ contains an edge whose dual is in $\beta_j$ (by Corollary \ref{cor:path-cross-bisector}). Since the lengths of $P_{v,s_j}[v,x]$ and $P_{v,s_{i+1}}[v,x]$ are the same, $P_{v,s_{i+1}}[v,x] \circ P_{v,s_j}[x,s_j]$ is a shortest $v$-to-$s_j$ path. However, since $P_{v,s_{i+1}}[v,x]$ contains an edge of $\beta_j$, we get a contradiction to Corollary \ref{cor:path-disjoint-bisector}.

\vspace{0.1in}

\begin{figure}[htb]
\centering
\subfloat{\includegraphics[scale=1]{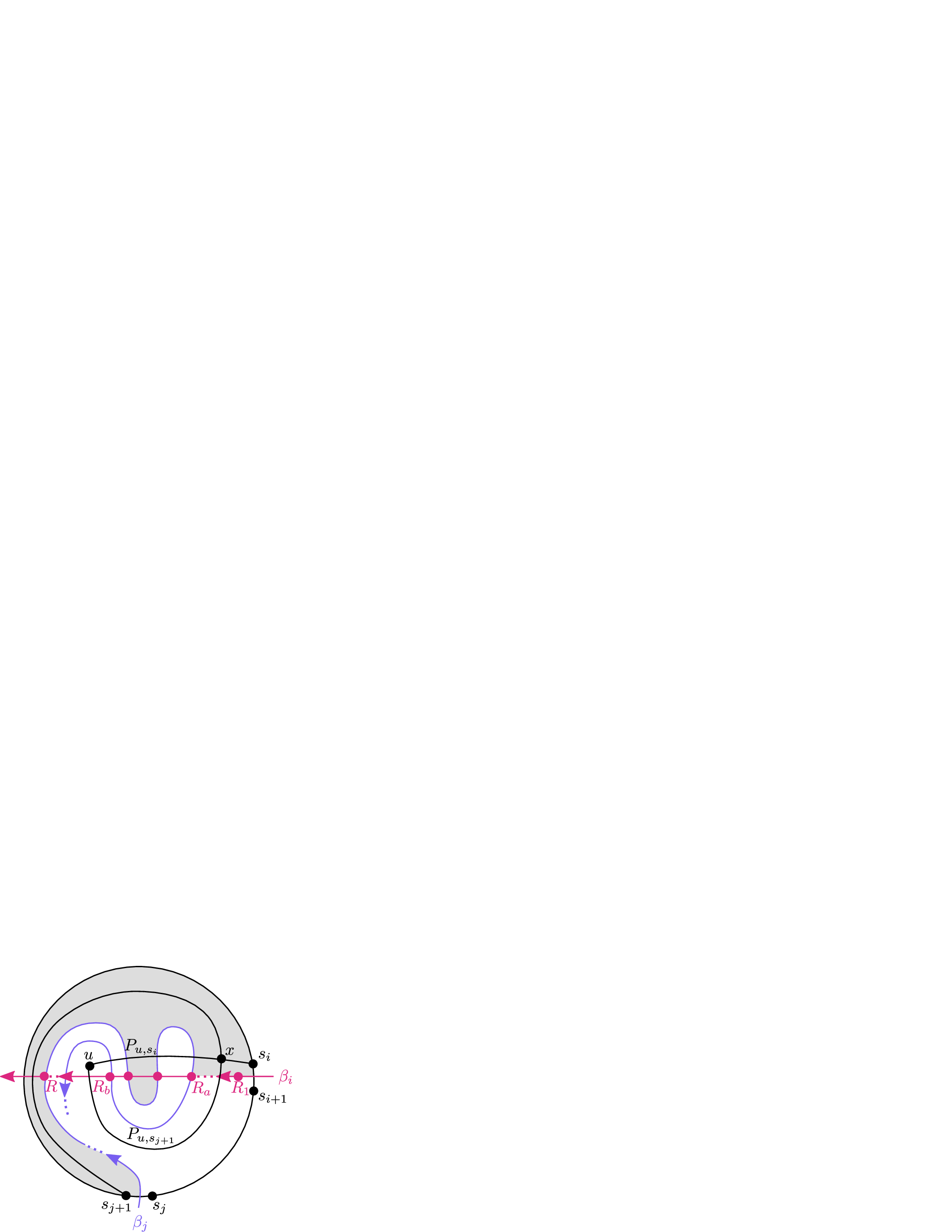}}
\hspace{1cm}
\caption{Case 2 in the proof of Lemma \ref{lem:opposite-orientation}. The dotted parts represent parts in which there may be crossings. ${\sf left}(C^*)$ is shaded. In this example, $P$ is crossed by $\beta_j[f_\infty,R_a]$ exactly twice.
\label{op-proof-case2}}
\end{figure}

\noindent \underline{Case 2}: 
$\beta_j$ enters $R_a$ from ${\sf right}(\beta_i)$ (hence it enters $R_b$ from ${\sf left}(\beta_i)$. Let $(uv)^*$ be the arc of $\beta_j$ that follows $R_b$. We next show that $u$ and $C^*$ form the configuration of Lemma \ref{lem:crossing-config} (again, in the special case of $u=v$). By symmetric arguments to Case 1, we can show that $u \in A_j$ and $u \in V \setminus A_i$. We therefore only need to show that $u \in {\sf right}(C^*)$. 

To show that $u \in {\sf right}(C^*)$, it suffices to show that $R_b \in {\sf right}(C^*) \setminus C^*$ (hence the arcs incident to $R_b$ and their corresponding primal vertices are in ${\sf right}(C^*)$). 
Consider the path $P=\beta_i[R_a,R_b]$. The first arc of $P$ is in ${\sf right}(C^*)$ by the case assumption. 
To show that $R_b \in {\sf right}(C^*) \setminus C^*$, we will show that $P$ crosses the cycle $C^*$ an even number of times. Since $P$ does not cross $\beta_i[f_\infty,R_a]$, we need to show that $\beta_j[f_\infty,R_a]$ crosses $P$ an even number of times (here is where the argument will differ from Case 1). 

Let $\hat{C}^* = P \circ rev(\beta_j[R_b,R_a])$ be a non-crossing cycle. 
Let $R$ be the first crossing between $\beta_j[f_\infty,R_a]$ and $\beta_i$. 
$R$ exists and is not $R_a$ by the assumption that $s_j \in {\sf left}(\beta_i)$, and by the case assumption. 
Note that $\beta_j[f_\infty,R]$ does not cross $\beta_i$ (and hence does not cross $P$) by definition of $R$, so it remains to show that the number of crossings between $\beta_j[R,R_a]$ and $P$ is even.
Notice that, since $s_j \in {\sf left}(\beta_i)$, the first and last arcs of $\beta_j[R,R_a]$ are both in ${\sf right}(\beta_i)$. Also note that ${\sf left}(\hat{C}^*)$ is contained in ${\sf left}(\beta_i)$. Therefore, the first and last arcs of $\beta_j[R,R_a]$ are in ${\sf right}(\hat{C}^*)$. Hence, $\beta_j[R,R_a]$ crosses $\hat{C}^*$ an even number of times. Since it can only cross $\hat{C}^*$ at $P$, then $\beta_j[R,R_a]$ crosses $P$ an even number of times.

We can thus apply Lemma \ref{lem:crossing-config}, and conclude that $P_{u,s_{j+1}}$ and $P_{u,s_i}$ intersect at vertex $x$, and that $P_{u,s_{j+1}}[u,x]$ contains an edge of $\beta_i$ (by Corollary \ref{cor:path-cross-bisector}). Since the lengths of $P_{u,s_i}[u,x]$ and $P_{u,s_{j+1}}[u,x]$ are the same, $P_{u,s_{j+1}}[u,x] \circ P_{u,s_i}[x,s_i]$ is a shortest $u$-to-$s_{i}$ path. However, since $P_{u,s_{j+1}}[u,x]$ contains an edge of $\beta_i$, we get a contradiction to Corollary \ref{cor:path-disjoint-bisector}.
\end{proof}

\begin{lemma}
\label{lem:bounded-crossing}
Two bisectors can cross at most $\frac{k}{2} + O(1)$ times.
\end{lemma}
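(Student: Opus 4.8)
The plan is to charge each of the $r$ crossing parts to a distinct vertex of $G$ that lies, in monotone order, on a \emph{single} shortest path between two vertices of $S$; since such a path has length $O(k)$ by the assumption that distances inside $S$ are bounded by $k$, this bounds $r$. Throughout I assume, as in Lemma~\ref{lem:opposite-orientation}, that $i<j$ and $s_j\in{\sf left}(\beta_i)$, and I use that the crossing parts $R_1,\dots,R_r$ appear in this order along $\beta_i$ and in the reverse order $R_r,\dots,R_1$ along $\beta_j$; a consequence is that the side of $\beta_i$ from which $\beta_j$ enters $R_\ell$ alternates with $\ell$.

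For each $\ell$ I form the cycle $C_\ell^* = \beta_j[f_\infty,R_\ell]\circ rev(\beta_i[f_\infty,R_\ell])$. This is a non-self-crossing cycle: the crossing parts carried by $\beta_j[f_\infty,R_\ell]$ are $R_r,\dots,R_{\ell+1}$, each of which lies strictly after $R_\ell$ along $\beta_i$ and hence is not on $\beta_i[f_\infty,R_\ell]$, so the two arcs meet only at $f_\infty$ and at $R_\ell$. I then take $z_\ell$ to be the primal vertex immediately across the arc of $\beta_j$ leaving $R_\ell$ (its head or tail, chosen according to the case), verify using the alternation together with the definitions of $A_i$ and $A_j$ that $z_\ell$ satisfies the hypotheses of Lemma~\ref{lem:crossing-config} with respect to $C_\ell^*$ --- in the degenerate form where a single vertex plays both roles, exactly as inside the proof of Lemma~\ref{lem:opposite-orientation} --- and obtain a vertex $x_\ell$ lying both on a shortest $z_\ell$-to-$s_{i+1}$ path and on a shortest $z_\ell$-to-$s_j$ path (or the $s_i$/$s_{j+1}$ analogues in the other case). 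Splicing the two half-paths at $x_\ell$ shows $x_\ell$ lies on a shortest path between two vertices of $S$.

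The crux is to show that the $x_\ell$ occur in strictly monotone order along one such fixed shortest path. Here I exploit that $C_{\ell+1}^*$ is obtained from $C_\ell^*$ by exchanging the sub-arc $\beta_j[R_{\ell+1},R_\ell]$ for $rev(\beta_i[R_\ell,R_{\ell+1}])$, so the two cycles bound the same region up to the ``bigon'' $D_\ell^* = \beta_i[R_\ell,R_{\ell+1}]\circ\beta_j[R_{\ell+1},R_\ell]$, which is a simple cycle avoiding $f_\infty$ and hence all of $S$. The argument is then that a shortest path from $x_{\ell+1}$ to the relevant vertex of $S$ must escape the disk bounded by $D_\ell^*$, that by Corollary~\ref{cor:path-disjoint-bisector} it cannot cross the $\beta_i$-side of $D_\ell^*$, and that (using Corollaries~\ref{cor:path-disjoint-bisector} and~\ref{cor:path-cross-bisector} together, as in the two cases of Lemma~\ref{lem:opposite-orientation}) it cannot re-cross the $\beta_j$-side either; this forces $x_{\ell+1}$ strictly beyond $x_\ell$ along the path, so in particular the $x_\ell$ are pairwise distinct and monotonically ordered.

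It remains to count. All the $x_\ell$ lie in order on a shortest path whose endpoints are vertices of $S$, hence of length at most $k$ --- this bound on $S$-distances is exactly the structural input absent from the VC-dimension argument --- so $r = O(k)$ already. To sharpen this to $r\le k/2+O(1)$ I use that after the binary transform every edge of $G$ has been subdivided, so the graph is bipartite and the original vertices form one colour class; consecutive $x_\ell$ are then distinct vertices of the same parity on the path and therefore at distance at least $2$, giving $r\le k/2+O(1)$, with the additive constant absorbing the two crossing parts nearest $f_\infty$ and the parity bookkeeping between the two cases. I expect the monotonicity step of the previous paragraph to be the main obstacle: making rigorous, via Jordan-curve arguments about the bigons $D_\ell^*$, why $x_{\ell+1}$ cannot land on the ``wrong side'' of $x_\ell$, and choosing consistently which endpoint of $z_\ell$ and which of the two configurations of Lemma~\ref{lem:crossing-config} to invoke so that all $r$ vertices land on a common path rather than on two incomparable ones.
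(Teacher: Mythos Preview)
Your approach diverges from the paper's, and the divergence hides a real gap. The paper does \emph{not} try to place the $x_\ell$ on a single shortest $S$--$S$ path. Instead it defines $x_\ell$ as the intersection of $P_{v_\ell,s_j}$ and $P_{v_{\ell+1},s_{i+1}}$, i.e.\ using \emph{two consecutive} vertices $v_\ell,v_{\ell+1}$ rather than one $z_\ell$. This overlap between consecutive indices is what makes the argument go through: four triangle inequalities involving $x_\ell,x_{\ell-1},x_{\ell-2}$ and $v_\ell,v_{\ell-1}$ telescope to
\[
d(x_\ell,s_j)-d(x_\ell,s_{i+1}) + 4 \le d(x_{\ell-2},s_j)-d(x_{\ell-2},s_{i+1}),
\]
so the scalar $d(x_\ell,s_j)-d(x_\ell,s_{i+1})$ increases by $4$ every two steps. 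Since $|d(x,s_j)-d(x,s_{i+1})|\le d(s_j,s_{i+1})\le k/2$ for any $x$, this yields $r\le k/2+O(1)$ directly, with no need for a common path, monotonicity, or parity.

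Your plan has two concrete problems. First, splicing $P_{z_\ell,s_{i+1}}[x_\ell,s_{i+1}]$ with $P_{z_\ell,s_j}[x_\ell,s_j]$ yields an $s_{i+1}$--$s_j$ walk of length $d(x_\ell,s_{i+1})+d(x_\ell,s_j)$, but nothing forces this to equal $d(s_{i+1},s_j)$; the claim ``$x_\ell$ lies on a shortest path between two vertices of $S$'' is unjustified. Second, even granting that, each $\ell$ produces its own such path, and your bigon argument gives no mechanism for selecting a \emph{single} shortest path carrying all the $x_\ell$; distinct shortest $s_{i+1}$--$s_j$ paths need not be comparable, so ``monotone order'' is not even well-defined without first fixing the path. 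You correctly flag this as the main obstacle, but it is not a detail to be filled in --- it is the entire content of the lemma, and the paper's telescoping-difference argument is precisely the device that sidesteps it.
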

\begin{proof}
We will prove that if two bisectors $\beta_i , \beta_j$ cross $r$ times then there exists a vertex $v\in V$ such that $ d(v,s_j)-d(v,s_{i+1}) \ge 2r - \frac{k}{2} - O(1)$. Since the distance between any pair of vertices along the infinite face is at most $\frac{k}{2}$, then by the triangle inequality we have also $d(v,s_j)-d(v,s_{i+1}) \leq \frac{k}{2} $. Hence, we get $2r-\frac{k}{2}-O(1) \leq \frac{k}{2}$ and the lemma follows.

\begin{figure}[htb]
\centering
\subfloat{\includegraphics[scale=0.8]{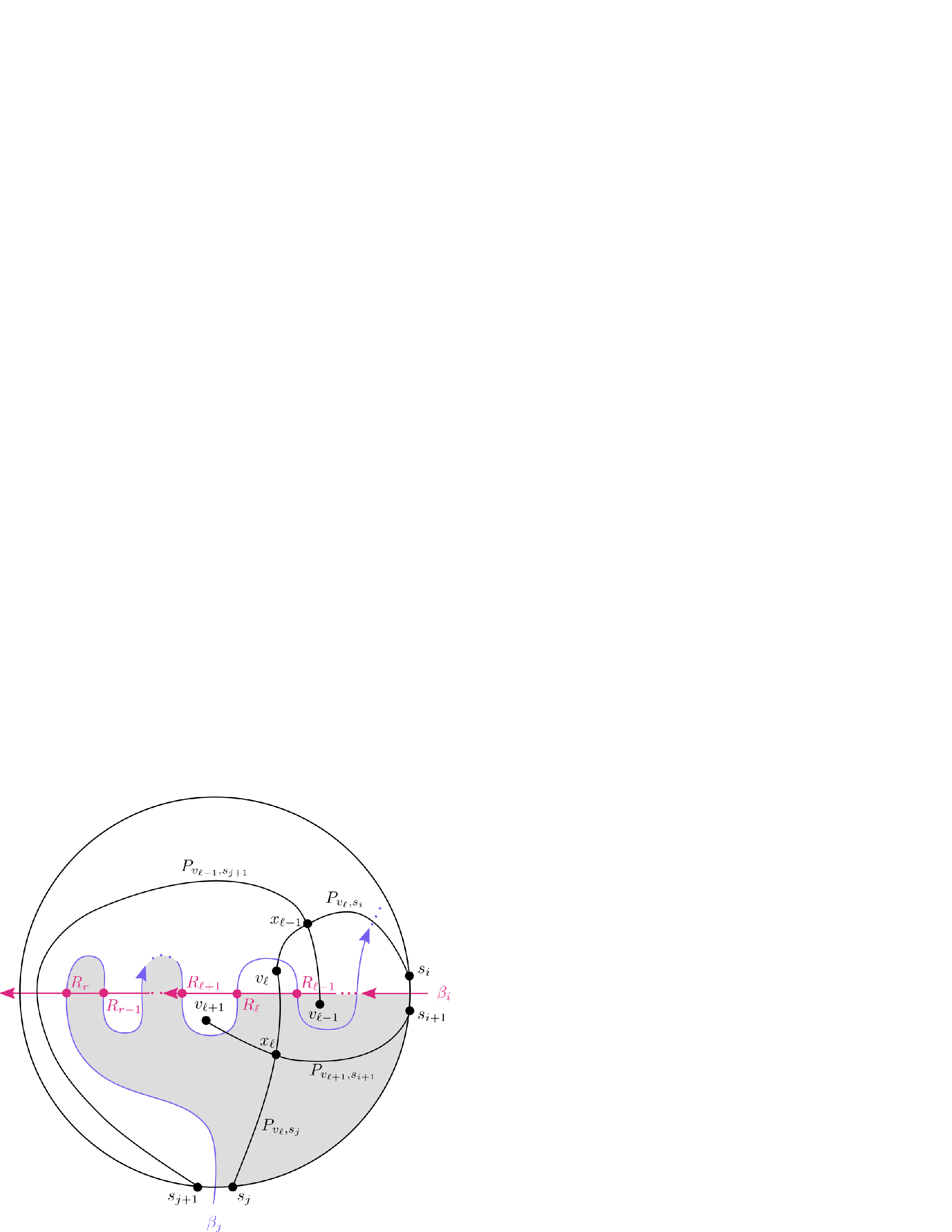}}
\caption{Two bisectors that cross $r$ times and the vertices and shortest paths they induce. Observe that $v_\ell$ is in ${\sf right}(\beta_i)$ when $r-\ell$ is even, and otherwise in ${\sf left}(\beta_i)$. ${\sf right}(C_\ell^*)$ is shaded.\label{r-crossings}}
\end{figure}
Again, let us assume without loss of generality that $s_j$ is in ${\sf left}(\beta_i)$ (the proof of the other case is symmetric). For every $\ell < r$, consider the cycle $C_\ell^* = \beta_j[f_\infty,R_\ell] \circ rev(\beta_i[f_\infty,R_\ell])$. By the previous lemma, $C_\ell^*$ does not self-cross. For even (resp. odd) $r-\ell$, let $(u_\ell v_\ell)^*$ (resp. $(v_\ell u_\ell)^*$) be the arc of $\beta_j$ that follows $R_\ell$. See Figure \ref{r-crossings}. We assume that $r-\ell$ is even (the odd case is symmetric). 

We claim that 
	$C_\ell^*, v_\ell , v_{\ell + 1}$ forms the configuration of Lemma \ref{lem:crossing-config}. By definition of $\beta_j$ we have that $v_\ell \in V \setminus A_j$.
Thus, it remains to prove that $v_{\ell+1} \in A_i$ and that $v_\ell,v_{\ell+1} \in {\sf left}(C_\ell^*)$. By Lemma \ref{lem:opposite-orientation} and the assumption that $s_j \in {\sf left}(\beta_i)$, $\beta_j$ reaches $R_{\ell+1}$ from ${\sf right}(\beta_i)$. Therefore, $( v_{\ell+1} u_{\ell+1})^* \in {\sf left}(\beta_i) \setminus \beta_i$. Hence, $v_{\ell+1}$ is in ${\sf left}(\beta_i)$ and therefore $v_{\ell+1} \in A_i$. To see that $v_\ell,v_{\ell+1} \in {\sf left}(C_\ell^*)$, note that by Lemma \ref{lem:opposite-orientation}, $(v_{\ell+1} u_{\ell+1} )^*$ is in $\beta_j[f_\infty,R_\ell]$, hence $v_{\ell+1} \in {\sf left}(C_\ell^*)$ by definition (as $\beta_j[f_\infty,R_\ell]$ is part of $C_\ell^*$). 
Clearly, since $\beta_j$ enters $R_\ell$ from ${\sf right}(\beta_i)$ we also have that $(u_\ell v_\ell)^* \in {\sf left}(C_\ell^*)$ and in particular $v_\ell \in {\sf left}(C_\ell^*)$.

We can thus apply Lemma \ref{lem:crossing-config} and conclude that $P_{v_{\ell},s_j}$ and $P_{v_{\ell+1},s_{i+1}}$ must intersect at some vertex $x_{\ell}$. 
In addition, we can conclude that $P_{v_{\ell-1},s_{j+1}}$ and $P_{v_{\ell},s_{i}}$ intersect at some vertex $x_{\ell-1}$, and that $P_{v_{\ell-2},s_j}$ and $P_{v_{\ell-1},s_{i+1}}$ intersect at some vertex $x_{\ell-2}$. Therefore, by the triangle inequality and the assumption that patterns are binary, we have:
\begin{equation*}
\begin{split}
&d(v_\ell,x_\ell) + d(x_\ell,s_j) \leq d(v_\ell,x_{\ell-1}) + d(x_{\ell-1},s_{j+1}) -1 \\
&d(v_\ell,x_{\ell-1}) + d(x_{\ell-1},s_i) \leq d(v_\ell,x_\ell) + d(x_\ell,s_{i+1}) -1 \\
&d(v_{\ell-1},x_{\ell-2}) + d(x_{\ell-2},s_{i+1}) \leq d(v_{\ell-1},x_{\ell-1})+d(x_{\ell-1},s_i) -1 \\
&d(v_{\ell-1},x_{\ell-1}) + d(x_{\ell-1},s_{j+1}) \leq d(v_{\ell-1},x_{\ell-2}) + d(x_{\ell-2},s_j) -1
\end{split}
\end{equation*}
Summing the above inequalities we get:
\begin{equation}
\label{eq:crossing}
d(x_\ell,s_j) - d(x_\ell,s_{i+1}) + 4 \leq d(x_{\ell-2},s_j) - d(x_{\ell-2},s_{i+1})
\end{equation}
Let $m \in \{1,2\}$ be so that $r-m$ is even. Let $v = x_m$. By repeating Equation \ref{eq:crossing} we get:

$
d(x_{r-2},s_j)-d(x_{r-2},s_{i+1}) + 4 \cdot \left( \frac{r}{2}-O(1) \right ) \leq d(v,s_j)-d(v,s_{i+1}).
$

\noindent Since the distance between any two vertices along the infinite face is at most $\frac{k}{2}$, it follows that:
$
-\frac{k}{2} + 4 \cdot \left( \frac{r}{2}-O(1) \right ) \leq d(v,s_j)-d(v,s_{i+1}) \leq \frac{k}{2}.
$
Therefore $r \leq \frac{k}{2} + O(1)$.
\end{proof}

\section{A $\Theta(k^2)$ Proof for Halin Graphs}
\label{section:halin}
In this section we prove Theorem~\ref{thm:Halin}. Namely, we show a tight $\Theta(k^2)$ bound on the number of patterns in a family of graphs that includes Halin graphs. 
The {\em Halin graph} family (see~\cite{syslo1983halin} for history and properties) is a restricted family of planar graphs. A Halin graph is obtained from an embedded tree $\mathcal{T}$ with no degree-$2$ vertices by attaching a cycle $C$ to its leaves in their order of appearance according to the embedding. The cycle is then the boundary of the infinite face, and we denote its size by $k$. 
 We will consider a more general family than Halin graphs. Namely, we allow the tree $\mathcal{T}$ to have degree $2$ vertices, and we allow the cycle $C$ to contain vertices that are not in $\mathcal{T}$. We will refer to such graphs as \emph{S-Halin} graphs. See Figure~\ref{halin-8patterns}.

\begin{figure}[htb]
\centering
\includegraphics[scale=1]{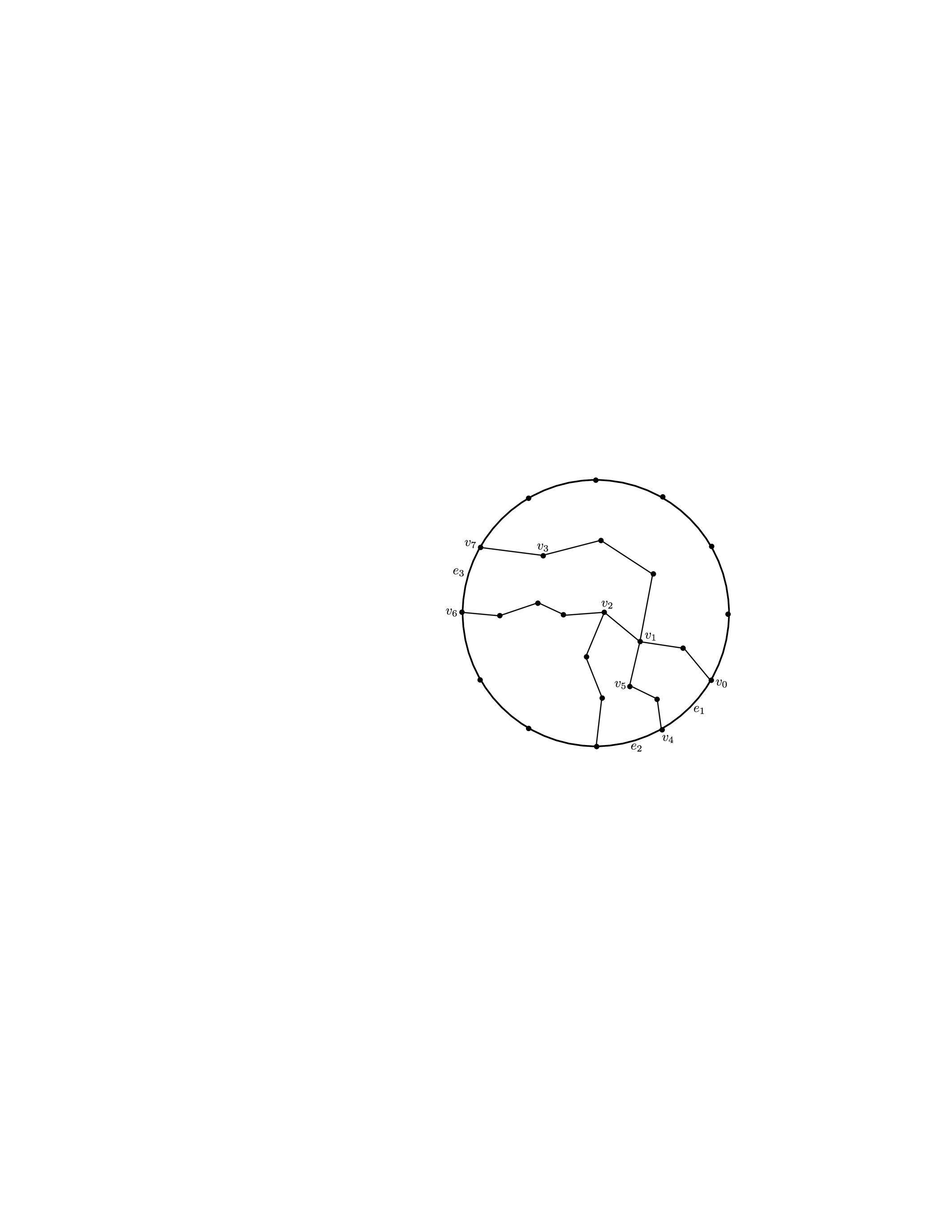}
\caption{An S-Halin graph whose matrix of patterns has VC-dimension $3$.\label{halin-8patterns}}
\end{figure}

 In this section, we show that the number of distinct patterns in S-halin graphs is only $O(k^2)$, and that this bound is tight. In contrast, we show that the VC-dimension argument is limited to proving $O(k^3)$ even on such graphs.

\begin{lemma}
\label{lem:halin-upper-bound}
Let $G = \mathcal{T} \cup C$ be an S-Halin graph, obtained by identifying the leaves of $\mathcal{T}$ with a subset of vertices of a cycle $C$. If the size of $C$ is $k$ then the number of distinct patterns in $G$ is $O(k^2)$.
\end{lemma}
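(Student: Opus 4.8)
The plan is to recycle the bisector machinery of \cref{section:alternative-proof}. By \cref{cor:pattern-graph} and \cref{lem:faces-bounded-by-crossings}, the number of distinct patterns is $O(t+k)$, where $t$ is the total number of crossings summed over all pairs of bisectors. There are only $O(k^2)$ pairs of bisectors, so it suffices to show that in an S-Halin graph any two bisectors cross only $O(1)$ times — in contrast with the $O(k)$ bound of \cref{lem:bounded-crossing} for general planar graphs. (Vertices lying on $C$ contribute at most $k$ patterns, so they are irrelevant to this count.)

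The first step is to localize the bisectors. Since $C$ bounds the infinite face, the arcs incident to $f_\infty$ in $G^\ast$ are exactly the duals of the cycle edges; as $\beta_i$ is a simple cycle (\cref{cor:simple-cycle}) it meets $f_\infty$ exactly once and hence uses exactly two such arcs. Consequently $\delta(A_i)$ contains exactly two cycle edges (equivalently, $A_i\cap C$ is a single arc of $C$), and deleting those two arcs from $\beta_i$ leaves a simple path $\pi_i$ all of whose arcs are duals of tree edges, i.e.\ $\pi_i$ lives in $G^\ast\setminus\{f_\infty\}$. This ambient graph is extremely rigid: its vertices are the internal faces of $G$, which sit between cyclically consecutive leaves of $\mathcal{T}$, and — because the subtree leaf-arcs of $\mathcal{T}$ form a laminar family — one checks that it is a ``tree of cycles'': the edge-disjoint union of one simple cycle $Z_v$ per internal vertex $v$ of $\mathcal{T}$, with $|Z_v|=\deg_{\mathcal{T}}(v)$, where $Z_v$ and $Z_{v'}$ share an arc precisely when $v\sim v'$ in $\mathcal{T}$, so that the way these cycles are glued reflects the structure of $\mathcal{T}$ (for Halin graphs there are no degree-$2$ tree vertices, so no $Z_v$ degenerates to a digon).

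It remains to bound by $O(1)$ the number of crossings between two such paths $\pi_i,\pi_j$, and this is the step I expect to be the real work. The tree-of-cycles structure already confines both paths to overlapping tree-like routes through the $Z_v$'s, and \cref{lem:opposite-orientation} forces their crossing parts to be nested; the remaining task is to rule out long nested chains. For this I would re-run the metric argument behind \cref{lem:bounded-crossing}: a chain of $r$ crossings of $\beta_i$ and $\beta_j$ produces, exactly as there, a vertex $x$ together with shortest paths that are forced to intersect, yielding $d(x,s_j)-d(x,s_{i+1})\ge 2r-\tfrac{k}{2}-O(1)$. In the Halin setting one should be able to do far better than the generic bound $d(x,s_j)-d(x,s_{i+1})\le\tfrac{k}{2}$: because a crossing between $\pi_i$ and $\pi_j$ beyond the first couple forces one of these shortest paths to wind the long way around some cycle $Z_v$ — equivalently, to revisit a branching vertex of $\mathcal{T}$ — which contradicts its being shortest. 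Pinning down this contradiction, and thus showing that a crossing chain has constant length, is the technical heart of the proof; everything else — plugging $t=O(k^2)$ into \cref{lem:faces-bounded-by-crossings}, handling the vertices on $C$, the matching $\Omega(k^2)$ construction, and an S-Halin example whose pattern matrix has VC-dimension $3$ (so that the VC argument alone cannot beat $O(k^3)$) — is comparatively routine.
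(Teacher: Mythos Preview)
Your plan is to bound the total number of bisector crossings and then invoke \cref{lem:faces-bounded-by-crossings}. That is a legitimate route in principle, but the step you yourself flag as ``the technical heart'' --- that in an S-Halin graph any two bisectors cross only $O(1)$ times --- is never actually argued. The sketch about a shortest path being forced to ``wind the long way around some $Z_v$'' is not a proof, and it is not clear that the $O(k)$ crossing bound of \cref{lem:bounded-crossing} genuinely drops to $O(1)$ here; you would need to exhibit a concrete contradiction, not just assert that one should exist. As written, this is a gap.

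More to the point, you are working far harder than necessary. The paper's proof bypasses crossings entirely with a two-line counting argument. An S-Halin graph has only $O(k)$ faces: the tree $\mathcal{T}$ has at most $k$ leaves, so by Euler's formula the number of internal faces is at most $k$. Hence $|V^*|=O(k)$. Each bisector is a \emph{simple} cycle in $G^*$ (\cref{cor:simple-cycle}) and therefore has at most $|V^*|=O(k)$ arcs; with $k-1$ bisectors, the bisector graph $G_\mathcal{B}$ --- and hence $G_\mathcal{P}$ --- has $O(k^2)$ arcs in total. Euler's formula then gives $O(k^2)$ faces in $G_\mathcal{P}$, and \cref{cor:pattern-graph} finishes. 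No crossing count, no tree-of-cycles decomposition, and no re-run of the metric inequalities are needed. The observation you are missing is simply that $|V^*|=O(k)$, which makes the arc count trivial.
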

\begin{proof}
Notice that the total number of vertices of $G$ with degree larger than $2$ is $O(k)$ (and hence, the number of faces in $G$ is also $O(k)$). This is because $C$ is of size $k$ and $\mathcal{T}$ contains only $k$ leaves (and hence $O(k)$ vertices with degree larger than $2$). 
Now consider the dual graph $G^* = (V^*,E^*)$. Since every bisector is a simple cycle in $G^*$, and since $|V^*| = O(k)$, then the number of arcs in the graphs $G_\mathcal{B}$ and $G_\mathcal{P}$ is $O(k^2)$. Therefore, by Corollary \ref{cor:pattern-graph} the number of distinct patterns in $G$ is $O(k^2)$.
\end{proof}

We next show that it is not possible to prove Lemma \ref{lem:halin-upper-bound} using the VC-dimension argument. Namely, consider the S-Halin graph of Figure \ref{halin-8patterns} and let $P$ be the matrix whose rows are the patterns of the graph (recall that each pattern is in $\{-1,1\}^{k-1}$). 
 
\begin{proposition}\label{prop:halinVC3}
The VC-dimension of $P$ is $3$.
\end{proposition}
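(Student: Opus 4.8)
The plan is to exhibit an explicit set of three columns of $P$ whose induced $|P|\times 3$ submatrix realizes all eight vectors in $\{-1,1\}^3$ (giving VC-dimension at least $3$), and then invoke the general forbidden-configuration argument of Li and Parter to rule out VC-dimension $4$. For the lower bound, I would look at the S-Halin graph of Figure~\ref{halin-8patterns} and identify three indices $a<b<c$ (each corresponding to a ``gap'' between two consecutive cycle vertices $s_{m},s_{m+1}$ on the infinite face) that are ``far apart'' in the tree $\mathcal{T}$, in the sense that the three corresponding bisectors $\beta_a,\beta_b,\beta_c$ are pairwise crossing-free and the tree $\mathcal{T}$ has vertices (or cycle vertices) sitting in each of the eight regions cut out by these three bisectors. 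Concretely, for each of the eight sign patterns $(\epsilon_a,\epsilon_b,\epsilon_c)\in\{-1,1\}^3$ I would point to a vertex $v$ of $G$ with $p_v[a]=\epsilon_a$, $p_v[b]=\epsilon_b$, $p_v[c]=\epsilon_c$; these can be read off directly from the embedding since the bit $p_v[i]$ is just the side of $\beta_i$ on which $v$ lies (i.e., whether $v\in A_i$ or $v\in V\setminus A_i$).

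For the upper bound (VC-dimension $\le 3$), I would reuse verbatim the observation already recalled in the excerpt: by planarity there cannot exist two vertices $u,v$ and four indices $a<b<c<d$ with $p_u$ equal to $(-1,1,-1,1)$ and $p_v$ equal to $(1,-1,1,-1)$ on $(a,b,c,d)$, because such a configuration forces an illegal crossing of shortest paths. This forbidden $2\times 4$ submatrix means no $4$ columns of $P$ can shatter, so the VC-dimension is at most $3$. Combining the two bounds gives exactly $3$.

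The main obstacle is the lower-bound half: I need to actually verify that the specific S-Halin graph drawn in Figure~\ref{halin-8patterns} does realize all eight triples on \emph{some} fixed triple of columns, rather than merely realizing each triple on possibly different column triples. Since the figure is hand-designed for exactly this purpose, the cleanest route is to name the three bisectors explicitly, argue they are mutually non-crossing simple cycles in $G^*$ (using Corollary~\ref{cor:simple-cycle}), observe that three pairwise non-crossing cycles through $f_\infty$ partition the plane into at most $2^3=8$ ``cells'' determined by the side chosen for each, and then check that the tree $\mathcal{T}$ together with the cycle $C$ places at least one primal vertex in each of the eight cells. This reduces the verification to a finite, figure-level inspection, which is the only genuinely ad hoc step; everything else is a citation of earlier results.
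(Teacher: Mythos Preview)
Your upper-bound half (VC-dimension $\le 3$ via the Li--Parter forbidden $(-1,1,-1,1)/(1,-1,1,-1)$ configuration) is correct; the paper's own proof in fact only writes out the lower bound and leaves the upper bound implicit from the general planar argument.

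The lower-bound plan, however, has a genuine gap. You propose to choose three indices $a<b<c$ whose bisectors $\beta_a,\beta_b,\beta_c$ are \emph{pairwise crossing-free}, and then locate a primal vertex in each of the eight sign-cells they cut out. But three pairwise non-crossing simple closed curves through the common vertex $f_\infty$ partition the sphere into at most four regions, not eight: each additional non-crossing curve lies (except at $f_\infty$) in a single existing region and splits only that region in two. Even if you relax ``non-crossing'' to allow the three bisectors to cross at $f_\infty$ but nowhere else, you get at most six regions (the configuration of three concurrent lines). In either case at most six of the eight vectors in $\{-1,1\}^3$ are realizable, so your verification step must fail. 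To shatter three columns the corresponding bisectors \emph{must} cross away from $f_\infty$; indeed, the fact that Figure~\ref{halin-8patterns} realizes all eight sign patterns on $(e_1,e_2,e_3)$ forces those three bisectors to cross.

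The paper bypasses all of this: it simply writes down the explicit $8\times 3$ submatrix of $P$ with rows $v_0,\dots,v_7$ and columns $e_1,e_2,e_3$, and observes that all eight rows of $\{-1,1\}^3$ appear. No bisector reasoning is invoked; the entries are read off from shortest-path distances in the specific figure. Your route could be salvaged by dropping the non-crossing hypothesis and verifying directly that the three chosen bisectors cross so as to create eight sign-cells, but at that point you are doing the same finite figure-level check as the paper, only phrased in dual language.
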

\begin{proof}
Consider the following submatrix of $P$ whose rows correspond to the vertices $v_0, \dots, v_7$ and columns correspond to the edges $e_1,e_2,e_3$:
$$
\begin{blockarray}{cccc}
& e_1 & e_2 & e_3 \\
\begin{block}{c(ccc)}
 v_0 & \ \ 1 & \ \ 1 & \ \ 1\  \\
 v_1 & \ \ 1 & \ \ 1 & -1 \\
 v_2 & \ \ 1 & -1 & \ \ 1 \\
 v_3 & \ \ 1 & -1 & -1 \\
 v_4 & -1 & \ \ 1 & \ \ 1 \\
 v_5 & -1 & \ \ 1 & -1 \\
 v_6 & -1 & -1 & \ \ 1 \\
 v_7 & -1 & -1 & -1 \\
  \end{block}
\end{blockarray}
$$
Since it contains all possible rows, the VC-dimension of $P$ is at least $3$. 
\end{proof}
It is important to remark that we can generalize the example of Figure \ref{halin-8patterns} to any large enough $k$, by adding vertices along the infinite face in the part between $v_7$ and $v_0$ (clockwise).

So far we have seen that S-Halin graphs have at most $O(k^2)$ distinct patterns (Lemma~\ref{lem:halin-upper-bound}), and that the VC-dimension argument is limited to showing $O(k^3)$ distinct patterns (Proposition~\ref{prop:halinVC3}). To conclude this section, we prove that the $O(k^2)$ bound is tight:

\begin{lemma}
\label{lem:halin-lower-bound}
There exists an S-Halin graphs with $\Omega(k^2)$ distinct patterns.
\end{lemma}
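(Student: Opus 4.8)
The goal is to construct an S-Halin graph in which $\Omega(k^2)$ distinct patterns actually occur among the vertices. The natural strategy is to build a tree $\mathcal{T}$ whose internal structure is a long ``caterpillar'' or ``broom'' so that the bisectors $\beta_1,\dots,\beta_{k-1}$ are forced to cross each other roughly $k$ times each, and then invoke the machinery of Section~\ref{sec:patterngraph}: the pattern graph $G_\mathcal{P}$ has $\Omega(t)$ faces where $t$ is the total number of bisector crossings, and distinct faces of $G_\mathcal{P}$ that contain primal vertices realize distinct patterns. So the real task is twofold: (1) exhibit an S-Halin instance in which the bisectors pairwise cross $\Omega(k)$ times (making $t=\Omega(k^2)$), and (2) ensure that enough of the resulting $\Omega(k^2)$ faces of $G_\mathcal{P}$ actually contain a vertex of $G$, so that they contribute $\Omega(k^2)$ \emph{distinct} patterns and not merely $\Omega(k^2)$ faces.

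\textbf{Construction.} First I would place the cycle $C$ with $k$ vertices $s_1,\dots,s_k$ on the infinite face (recall that after the binary transformation we may think of the patterns as length $k-1$). The tree $\mathcal{T}$ hanging inside $C$ should be a ``spider/broom'': take a central path $r_1 - r_2 - \cdots - r_m$ with $m=\Theta(k)$, and from each $r_j$ hang a bundle of leaves that are identified with a contiguous block of the $s_i$'s, arranged so that as one walks along the central path the attachment blocks sweep monotonically around $C$. Subdividing edges of $\mathcal{T}$ as needed (S-Halin allows degree-$2$ vertices), one can tune the distances $d(r_j,s_i)$ so that the cut $A_i=\{v:p_v[i]=-1\}$ has a boundary (the bisector $\beta_i$) that zig-zags back and forth across the central path $\Theta(k)$ times. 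Concretely, for a vertex $v$ deep in the $r_j$-subtree, $d(v,s_i)-d(v,s_{i+1})$ is governed by which of $s_i,s_{i+1}$ is ``closer'' via the central path, and by spacing the attachment points so that this comparison flips repeatedly, $\beta_i$ and $\beta_{i'}$ are forced to interleave $\Omega(k)$ times for most pairs $i<i'$ (their distance along $C$ is $O(k)$, consistent with Lemma~\ref{lem:bounded-crossing}, so this is the extremal regime). A clean way to certify the crossing count without computing all distances is to locate, for each consecutive pair of crossings of $\beta_i$ with the central path, a distinct primal vertex whose $i$-th pattern bit differs from its neighbor's on the other side, as in the figures of Section~\ref{sec:twobisectorscrossktimes}.

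\textbf{From crossings to distinct patterns.} Having $t=\Omega(k^2)$ crossings gives $\Omega(k^2)$ faces in $G_\mathcal{P}$ by Lemma~\ref{lem:faces-bounded-by-crossings}, but I must turn faces into \emph{realized} patterns. Here I would design the tree so that the regions cut out by the bisectors each contain a vertex of $\mathcal{T}$: since every internal edge of $\mathcal{T}$ lies in at most two bisectors (Lemma~\ref{lem:no-multiplicities}) and the bundles of leaves are dense along $C$, each of the $\Omega(k^2)$ ``cells'' of the arrangement of the $\beta_i$'s contains a portion of the tree and hence a primal vertex, whose pattern by Proposition~\ref{prop:face-patterns} equals that cell's face-pattern. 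Distinct cells give distinct sign vectors (they differ on which side of some $\beta_i$ they lie), so these are $\Omega(k^2)$ distinct patterns, proving the lemma. I expect the main obstacle to be step~(2): it is easy to make bisectors cross many times but delicate to guarantee that the newly created crossing-cells are populated by actual vertices of $G$ rather than being empty pockets (which would be merged away or simply not contribute a pattern). I would address this by making $\mathcal{T}$ sufficiently ``space-filling'' inside $C$ — e.g. replacing each central-path vertex by a small gadget subtree whose leaves fan out between consecutive attachment blocks — so that the primal graph is dense enough to intersect every cell, while keeping the total number of faces $O(k)$ so the counting of Lemma~\ref{lem:halin-upper-bound} stays consistent and the bound is genuinely $\Theta(k^2)$.
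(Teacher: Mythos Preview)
Your plan takes a genuinely different route from the paper, and it has real gaps.

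The paper's proof is entirely direct and does not touch bisectors at all: it writes down an explicit S-Halin graph (a ``fan'' of $k'$ paths $P_1,\dots,P_{k'}$ of lengths $1,\dots,k'$ sharing a common root, closed up by a cycle through the leaf-ends plus an extra path of length $k'+1$), computes $d(v_{i,j},\cdot)$ by hand for every internal vertex $v_{i,j}$ on $P_i$, and reads off that
\[
p_{v_{i,j}} = 1^{\,i-j-1}\circ(-1)^{\,j}\circ 1^{\,k'+j+1-i}\circ(-1)^{\,k'-j-1}.
\]
Since the pair of block-lengths $(i-j-1,\,j)$ determines $(i,j)$, these $\binom{k'}{2}=\Omega(k^2)$ patterns are pairwise distinct. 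No arrangement-of-bisectors argument, no appeal to Section~\ref{sec:patterngraph}.

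Your plan, by contrast, tries to push the lower bound through the bisector machinery. Two concrete problems:
\begin{itemize}
\item You invoke Lemma~\ref{lem:faces-bounded-by-crossings} to get ``$\Omega(k^2)$ faces in $G_\mathcal{P}$'' from $t=\Omega(k^2)$ crossings, but that lemma is an \emph{upper} bound (faces $=O(t+k)$), not a lower bound. The direction you need would have to come from a separate Euler-formula count on $G_\mathcal{P}$, which you have not supplied.
\item Even granting $\Omega(k^2)$ faces, your step~(2) --- showing each face contains a primal vertex --- is exactly the hard part, and your proposal (``make $\mathcal{T}$ space-filling'') is a hope, not an argument. Faces of $G_\mathcal{P}$ created by the peel phase contain no primal vertex by construction, and nothing you wrote rules out that most of your $\Omega(k^2)$ faces are of this type. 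This is why the paper bypasses the arrangement entirely and just exhibits $\Omega(k^2)$ vertices with provably distinct patterns.
\end{itemize}
In short, the bisector route is overkill for the lower bound and leaves you with precisely the obstacle you flagged but did not overcome; the paper's explicit-pattern computation is both simpler and complete.
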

\begin{proof}
We assume that $k$ is even and denote $k'=k/2$. We construct the tree $\mathcal{T}$ by taking the union of $k'+1$ simple paths 
 $P_0, P_1, \ldots , P_{k'}$ where every $P_i$ is of length $i$ and all $P_i$'s originate from a common vertex $v_{0,0}$. Namely, $P_i = (v_{i,0} - v_{i,1} - \cdots - v_{i,i})$, and $v_{i,0} = v_{j,0}$ for every $i \neq j$. We choose the embedding of $\mathcal{T}$ so that in a clockwise tour around $v_{0,0}$, the order of appearance of the paths is $(P_1 , P_2 , \cdots , P_{k'})$.

We define an additional $v_{k',k'}$-to-$v_{1,1}$ path of length $k'+1$ denoted $Q = (v_{k',k'}-q_1-q_2 -\cdots - q_{k'}- v_{1,1})$. Let $C$ be the cycle $Q \circ (v_{1,1}-v_{2,2}- \cdots - v_{k',k'})$. Let $G = \mathcal{T} \cup C$. See Figure \ref{halin-quadratic}. Note that $|C|$ is $2k' = k$.
\begin{figure}[htb]
\centering
\includegraphics[scale=1]{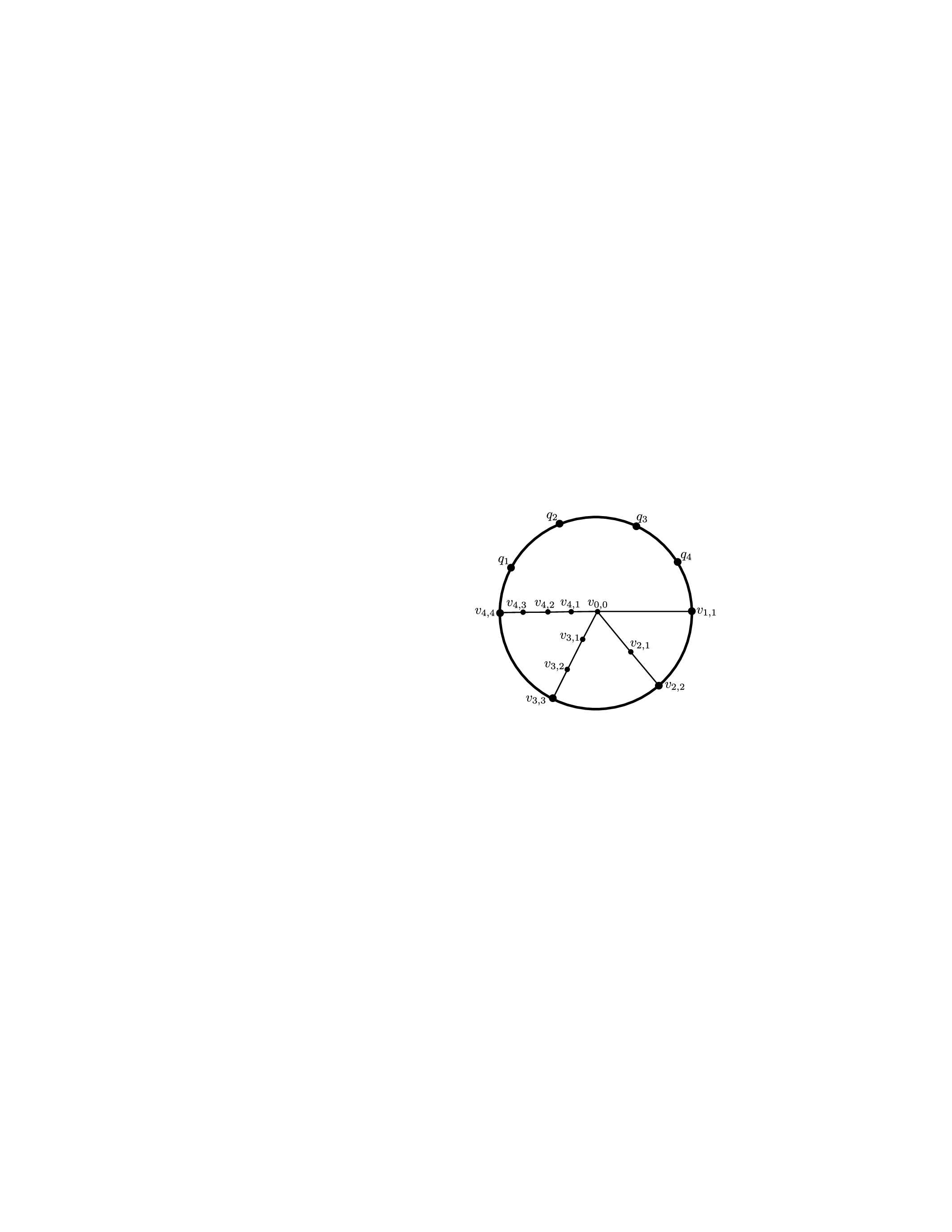}
\caption{The S-Halin of Lemma \ref{lem:halin-lower-bound} for $k=8$.\label{halin-quadratic}}
\end{figure}

Consider the patterns of $G$, when we choose the first vertex to be $v_{1,1}$, the second $v_{2,2}$, etc. Let $i \in [k'],j \in [i-1]$. We claim that the pattern of $v_{i,j}$ is:
\begin{equation}
\label{eq:halin-pattern}
p_{v_{i,j}} = 1^{i-j-1} \circ (-1)^j \circ 1^{k'+j+1-i} \circ (-1)^{k'-j-1}
\end{equation}

To see why, consider the $v_{i,j}$-to-$v_{t,t}$ distances for $t \leq i$. Notice that for every $1 \leq t \leq i-j$ we have $d(v_{i,j},v_{t,t}) = j+t$ and for every $i-j \leq t \leq i$ we have $d(v_{i,j},v_{t,t}) = 2i-j-t$. In particular, they are equal at $t=i-j$. Therefore, the pattern of all the edges between $v_{1,1}$ and $v_{i,i}$ is $1^{i-j-1} \circ (-1)^j$.

Now consider the $v_{i,j}$-to-$v_{t,t}$ distances for $i \leq t \leq k'$. Notice that a shortest $v_{i,j}$-to-$v_{t,t}$ path will never use $Q$ and instead will go through $(v_{i,i}-v_{i+1,i+1}-\cdots-v_{t,t})$. Namely, the distance is $d(v_{i,j},v_{t,t}) = t-j$. Therefore, the pattern of all the edges between $v_{i,i}$ and $v_{k',k'}$ is $1^{k'-i}$.

Finally, consider the $v_{i,j}$-to-$q_t$ distances for $1 \leq t \leq k'$. For every $1 \leq t \leq j+1$ we have $d(v_{i,j},q_t) = k'+t-j$ and for every $j+1 \leq t \leq k'$ we have $d(v_{i,j},q_t) = k'+j+2-t$. In particular they are equal at $t=j+1$. Therefore the pattern of all the edges between $v_{k',k'}$ and $q_k'$ is $1^{j+1} \circ (-1)^{k'-j-1}$. Overall, we get that 
$p_{v_{i,j}}$ is as in Equation (\ref{eq:halin-pattern}).

Notice that $p_{v_{i,j}}$ is unique for every different $i \in [k'],j \in [i-1]$. Since the number of such vertices is $\Omega(k'^2) = \Omega(k^2)$, there are $\Omega(k^2)$ distinct patterns in $G$. 
\end{proof}

\section{Conclusions}
In this work we developed a technique for analyzing the structure and number of distinct patterns in undirected planar graphs. This technique leads to an improved $\tilde{O}(x+k+|T|)$ space compression of the Okamura-Seymour metric, and to an optimal $\tilde{O}(k+|T|)$ compression in the special case where the vertices of $T$ induce a connected component in $G$. Moreover, the technique leads to an alternative proof of the $x=O(k^3)$ upper bound on the number of different patterns.  
 
We have shown that for the family of Halin graphs, the original proof technique using VC-dimension is not tight, and that our approach easily proves the tight bound in this case. Going back to planar graphs, we were unable to come up with constructions of families of planar graphs that have $x=\omega(k^2)$ patterns. We therefore make the following conjecture.

\begin{conjecture*}\label{conjecture}
The number of distinct patterns over all vertices of a planar graph is $O(k^2)$.
\end{conjecture*}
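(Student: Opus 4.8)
The plan is to work entirely inside the bisector/pattern-graph framework of Section~\ref{section:alternative-proof}. By Corollary~\ref{cor:pattern-graph} and Lemma~\ref{lem:faces-bounded-by-crossings} it suffices to prove that the total number $t$ of pairwise bisector crossings in $G_\mathcal{P}$ is $O(k^2)$, rather than the $O(k^3)$ that follows from the per-pair bound of Lemma~\ref{lem:bounded-crossing}. Equivalently, one wants to show that, \emph{averaged over the $\binom{k-1}{2}$ pairs}, two bisectors cross only $O(1)$ times; the cleanest target is to prove that a single bisector $\beta_i$ participates in only $O(k)$ crossings in total with all the other bisectors combined, which would immediately give $t=O(k^2)$. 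It is worth noting that in the Euclidean analogue this is exactly what happens: the $k-1$ perpendicular bisectors of the consecutive pairs $s_is_{i+1}$ are straight lines, so their arrangement has $\Theta(k^2)$ cells. The conjecture asserts that the planar-graph metric is rigid enough to behave the same way, and the extra ingredient beyond the forbidden $(-1,1,-1,1)/(1,-1,1,-1)$ configuration (which alone yields only $O(k^3)$, as the lower-bound example in the introduction shows) should again be the fact that $S$ has diameter $O(k)$.

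The first step I would take is to upgrade the distance-budget accounting of Lemma~\ref{lem:bounded-crossing} from a per-pair statement into an amortized one along a fixed bisector. Recall that $r$ crossings of $\beta_i$ and $\beta_j$ produce a chain $x_r,x_{r-2},x_{r-4},\dots$ of shortest-path intersection vertices along which $d(\cdot,s_j)-d(\cdot,s_{i+1})$ grows by $4$ every two steps, so the whole chain is absorbed by the bound $|d(v,s_j)-d(v,s_{i+1})|\le \tfrac{k}{2}$ coming from $\mathrm{diam}(S)=O(k)$. The idea is to run this accounting simultaneously for \emph{all} bisectors crossing a fixed $\beta_i$: list the crossing parts along $\beta_i$ (after the peel phase these occur at distinct dual vertices), and attach to consecutive crossings, even when they involve different partners $\beta_j$, a single monotone potential that is still a bounded difference of distances to vertices of the one face $S$. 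If such a global potential along $\beta_i$ can be built, each $\beta_i$ contributes only $O(k)$ crossings and we are done.

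A complementary second approach is to count the faces of $G_\mathcal{P}$ directly. By Lemma~\ref{lem:adjacentpatterns} and Corollary~\ref{cor:simple-cycle}, the pattern changes in only $O(k)$ positions as one walks around any face of $G$; one would try to organize the cells of the bisector arrangement, read off along a suitable planar traversal, into a \emph{monotone} or laminar family — say, showing that consecutive nonempty cells differ by a single sign-change boundary that moves monotonically — and then cap the number of distinct cells at $O(k^2)$ by a staircase / Davenport--Schinzel-type argument. Here too the boundedness of distances among vertices of $S$ has to enter; without it the same framework provably cannot beat $O(k^3)$.

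The main obstacle, in either approach, is that the $O(k)$ budget from $\mathrm{diam}(S)$ currently feeds only into a \emph{local} two-bisector interaction: for two different pairs $\{\beta_i,\beta_j\}$ and $\{\beta_i,\beta_{j'}\}$, the proof of Lemma~\ref{lem:bounded-crossing} extracts essentially unrelated witness vertices, so their budgets do not obviously share. The crux is therefore to rule out the scenario of many partner bisectors each crossing $\beta_i$ a $\Theta(k)$ number of times at disjoint dual locations — either by exhibiting a genuinely global, monotone potential on the dual plane, or by a planarity argument constraining how the crossing parts of distinct bisectors can interleave along $\beta_i$. I expect the first traction point to be exactly the touching-versus-crossing structure exposed by the peel phase: after peeling, each crossing of $\beta_i$ with a distinct bisector sits at a distinct dual vertex, and one would hope to charge each such vertex to a distinct unit of "distance progress" that is monotone as one traverses $\beta_i$.
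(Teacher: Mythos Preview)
This statement is a \emph{conjecture} in the paper, not a theorem: the paper explicitly leaves it open and ends with ``We hope that tools we have developed in this work will be useful in proving this conjecture.'' There is no proof in the paper to compare against.

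Your submission is, correspondingly, not a proof either --- it is a research outline. You are candid about this: you describe two plausible attack lines (an amortized version of the Lemma~\ref{lem:bounded-crossing} accounting along a single bisector, and a direct face-count via some monotone/laminar structure), and you correctly identify the central obstacle, namely that the $O(k)$ diameter budget currently pays for the crossings of a \emph{single} pair $\{\beta_i,\beta_j\}$ and there is no mechanism preventing $\Theta(k)$ different partners from each using up a separate $\Theta(k)$ budget against the same $\beta_i$. That diagnosis is accurate and is exactly why the paper's method stalls at $O(k^3)$.

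But nothing in your write-up closes that gap. The ``global monotone potential along $\beta_i$'' is asserted as a hope, not constructed; the Davenport--Schinzel/staircase idea is likewise only named. Until one of those objects is actually produced --- a single bounded-range potential that strictly increases at \emph{every} crossing along $\beta_i$ regardless of which $\beta_j$ is involved, or a combinatorial interleaving constraint on the crossing parts of distinct partners --- the argument does not go beyond what the paper already achieves. As it stands, this is a reasonable sketch of where to look, aligned with the paper's own suggestion that its bisector framework is the right vehicle, but it is not a proof of the conjecture.
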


We hope that tools we have developed in this work will be useful in proving this conjecture.

\bibliographystyle{plainurl}

\end{document}